\setlist{topsep=0pt,labelindent=0pt,labelsep=*,leftmargin=*,itemindent=0pt,itemsep=0pt}
\title{
	Exact Byzantine Consensus on Undirected Graphs under {\em Local Broadcast} Model
	\thanks{
	    This research is supported in part by the National Science Foundation awards 1409416 and 1733872, and Toyota InfoTechnology Center. Any opinions, findings, and conclusions or recommendations expressed here are those of the authors and do not necessarily reflect the views of the funding agencies or the U.S. government.
    }
}
\author[1]{
	Muhammad Samir Khan
}
\author[1]{
    Syed Shalan Naqvi
}
\author[2]{
    Nitin H. Vaidya
}
\affil[1]{
	Department of Computer Science \protect\linebreak
	University of Illinois at Urbana-Champaign \protect\linebreak
	\texttt{\{mskhan6,naqvi5\}@illinois.edu} \protect\linebreak
}
\affil[2]{
	Department of Computer Science \protect\linebreak
	Georgetown University \protect\linebreak
	\texttt{nv198@georgetown.edu}
}
\newtheorem{theorem}{Theorem}[section]
\newtheorem{lemma}[theorem]{Lemma}
\newtheorem{definition}[theorem]{Definition}
\newtheorem{observation}[theorem]{Observation}
\renewenvironment{proof}{\noindent{\bf Proof:} \hspace*{1mm}}{
	\hspace*{\fill} $\Box$ }
\newenvironment{proof_of}[1]{\noindent {\bf Proof of #1:}
	\hspace*{1mm}}{\hspace*{\fill} $\Box$ }
\newcommand{\floor}[1]{\left\lfloor {#1} \right\rfloor}
\newcommand{\ceil}[1]{\left\lceil #1 \right\rceil}
\begin{document}
	\maketitle

    \begin{abstract}
		\normalsize
%

This paper considers the Byzantine consensus problem for nodes with binary inputs. The nodes are interconnected by a network represented as an undirected graph, and the system is assumed to be synchronous.
Under the classical point-to-point communication model, it is well-known \cite{DOLEV198214} that the following two conditions are both necessary and sufficient to achieve Byzantine consensus among $n$ nodes in the presence of up to $f$ Byzantine faulty nodes: $n \ge 3f+1$ and vertex connectivity at least $2f+1$. In the classical point-to-point communication model, it is possible 
for a faulty node to \emph{equivocate}, i.e., transmit conflicting information to different neighbors. Such equivocation is possible because messages sent by a node to one of its neighbors are {\em not overheard} by other neighbors.

This paper considers the {\em local broadcast} model. In contrast to the point-to-point communication model, in the local broadcast  model, messages sent by
a node
are received identically by {\em all of its neighbors}.
Thus, under the local broadcast model, attempts by a node to send conflicting information
can be detected by
its
neighbors. Under this model, we show that the following two conditions are both necessary and sufficient for Byzantine consensus: vertex connectivity at least $\lfloor 3f/2 \rfloor + 1$ and minimum node degree at least $2f$. Observe that the local broadcast model results in a lower requirement for connectivity and the number of nodes $n$, as compared to the point-to-point communication model.

We extend the above results to a {\em hybrid model} that allows some of the Byzantine faulty nodes to equivocate. The hybrid model bridges the gap between the point-to-point and local broadcast models, and helps to precisely characterize the trade-off between equivocation and network requirements.

%

	\end{abstract}

	\section{Introduction} \label{section introduction}
This paper considers Byzantine consensus for nodes with binary inputs. The nodes are interconnected by a communication network represented as an undirected graph, and the system is assumed to be synchronous.
Under the classical point-to-point communication model, it is well-known \cite{DOLEV198214} that the following two conditions are both necessary and sufficient to achieve Byzantine consensus among $n$ nodes in the presence of up to $f$ Byzantine faulty nodes: $n \ge 3f+1$ and vertex connectivity at least $2f+1$. In the classical point-to-point communication model, it is possible 
for a faulty node to \emph{equivocate} \cite{Chun:2007:AAM:1323293.1294280}, i.e., transmit conflicting information to different neighbors. Such equivocation is possible because messages sent by a node to one of its neighbors are {\em not overheard} by other neighbors.

In contrast, in the {\em local broadcast} model \cite{Bhandari:2005:RBR:1073814.1073841, Koo:2004:BRN:1011767.1011807} considered in this paper, a message sent by any node
is received identically by {\em all of its neighbors} in the communication network.
Thus, under the local broadcast model, attempts by a node to equivocate (i.e., send conflicting information to its neighbors) can be detected by
its
neighbors.

This paper obtains tight necessary and sufficient conditions on the underlying communication network to be able to achieve Byzantine consensus under the {\em local broadcast} model. As summarized in Section \ref{section related work}, although there has been significant work \cite{Amitanand:2003:DCP:872035.872065, Bhandari:2005:RBR:1073814.1073841, Clement:2012:PN:2332432.2332490, Considine2005, Fitzi:2000:PCG:335305.335363, Franklin2000, FranklinHypergraphsPrivacy2004, Jaffe:2012:PEB:2332432.2332491, Koo:2004:BRN:1011767.1011807, Koo:2006:RBR:1146381.1146420, Li7516067, Pagourtzis2017, Pelc:2005:BLB:1062032.1711012, Rabin:1989:VSS:73007.73014, Ravikant10.1007/978-3-540-30186-8_32, TSENG2015512, Wang2001} that uses either the local broadcast model or other models that restrict equivocation, tight necessary and sufficient conditions for Byzantine consensus under the local broadcast model have not been obtained previously.
In particular, this paper makes the following contributions, some of which have been documented elsewhere \cite{ArxivTightReport, NaqviMaster, NaqviBroadcast} as well:

\begin{enumerate}[label=\arabic*)]
	\item {\bf Necessary and sufficient condition:}
	In Sections \ref{section necessity} and \ref{section sufficiency},
    we establish that, under the {\em local broadcast} model, to achieve Byzantine consensus, it is necessary and sufficient for
	the communication graph to have vertex connectivity at least $\floor{3f/2}+1$ and minimum degree at least $2f$. Observe that the local broadcast model results in a lower connectivity requirement as compared to the ($2f+1$)-connectivity required under the point-to-point communication model.
	
	\item {\bf Efficient algorithm:} We constructively prove the sufficiency of the tight condition stated above by presenting an algorithm that achieves Byzantine consensus. This algorithm,
	however, is not efficient.
	For the case when vertex connectivity is at least $2f$, we have
	a more efficient algorithm, which is presented in the appendices.
	In Section \ref{section efficienct}, we present a tool used in the algorithm which exploits the $2f$-connectivity.
	Note that
	for $f=1,2$, the tight condition presented above implies vertex connectivity of 2 and 4, respectively (i.e., connectivity equal to $2f$ when $f=1,2$).
	
	\item {\bf Hybrid model:}
	In Section \ref{section hybrid}, we extend the above necessary and sufficient condition to a {\em hybrid model} wherein at most $t \le f$ faulty nodes may have the ability to equivocate to their neighbors (i.e., the ability to send messages to each neighbor
	without being overheard by the other neighbors), while the remaining faulty nodes are restricted to local broadcast (i.e., their messages will
	be received identically by all the neighbors).
	The point-to-point communication model and the local broadcast
	model are both obtained as special cases of the hybrid model
	when $t = f$ and $t = 0$, respectively.
	Thus, the hybrid model provides a bridge between those two models and helps to precisely characterize the trade-off between equivocation and network requirements.
	%
	%
%
\end{enumerate}

%
%
%
%

	\section{Related Work} \label{section related work}


There is a large body of work on Byzantine fault-tolerant algorithms \cite{Attiya:2004:DCF:983102, DOLEV198214, Lamport:1982:BGP:357172.357176, Lynch:1996:DA:2821576, Pease:1980:RAP:322186.322188}.
Here we focus primarily on related work that imposes constraints on a faulty node's ability to equivocate.

%
%

Rabin and Ben-Or \cite{Rabin:1989:VSS:73007.73014} considered a global broadcast model and showed that $n \ge 2f+1$ is both sufficient and necessary for consensus in a synchronous system.
The local broadcast model considered in our work reduces to the global broadcast model	when the network is a complete graph.
The necessary and sufficient conditions obtained in this paper are (not surprisingly) equivalent to $n\geq 2f+1$ when the network is a complete graph.
Clement et. al. \cite{Clement:2012:PN:2332432.2332490} also considered non-equivocation in a \emph{complete} graph under asynchronous communication.
Our work obtains results for arbitrary {\em incomplete} graphs with synchronous communication.
	
The goal of a Byzantine broadcast algorithm is to allow a single source node to deliver its message reliably to all the other nodes. Prior work has explored such algorithms under the local broadcast model 
\cite{Bhandari:2005:RBR:1073814.1073841, Koo:2004:BRN:1011767.1011807, Koo:2006:RBR:1146381.1146420}.
However, the results for Byzantine {\em broadcast} do not provide insights into the network requirements for Byzantine {\em consensus} problem considered in this paper.

	
There has been significant work on other similar models, sometimes called ``partial broadcast''.
To motivate these models, consider
	a network consisting of several Ethernet channels, with a subset of nodes $S_i$ being connected to channel $i$. Then, transmission by any node in $S_i$ on channel $i$ will be received by all the nodes in channel $i$. Each node may be connected to several different such channels. 
	%
	%
	An Ethernet channel  to which $h$ nodes are connected may be viewed as a $h$-hyperedge in a communication network represented by a hypergraph.
	Fitzi and Maurer \cite{Fitzi:2000:PCG:335305.335363} considered a network in which every subset of three nodes form a hyperedge, in addition to every subset of two nodes also forming a hyperedge (i.e., a complete (2,3)-uniform hypergraph).
	Ravikant et. al. \cite{Ravikant10.1007/978-3-540-30186-8_32} also considered a (2,3)-uniform hypergraph, but with only a subset of 3-hyperedges being present.
	Jaffe et. al. \cite{Jaffe:2012:PEB:2332432.2332491} gave asymptotically tight bounds for the number of $3$-hyperedges required
	in such graphs.
	Amitanand et. al. \cite{Amitanand:2003:DCP:872035.872065} considered a complete network wherein, for each faulty node $k$, the remaining nodes are partitioned such that transmission by the faulty node $k$ to any node is received identically by all the nodes in its partition.
	One key difference from our work is that we consider incomplete networks, whereas \cite{Amitanand:2003:DCP:872035.872065} assumes that each node can communicate directly with all the other nodes.
	Amitanand et. al. \cite{Amitanand:2003:DCP:872035.872065} considered an adversary structure that characterizes the set of nodes that may be simultaneously faulty, instead of using a threshold on the number of faults.
	Other work \cite{Considine2005, Franklin2000, FranklinHypergraphsPrivacy2004, Wang2001} has explored the trade-off between reliability and privacy on partial broadcast networks.

    Some prior work has explored a {\em restricted} class of iterative algorithms that achieve {\em approximate} Byzantine consensus under the local broadcast model \cite{LeBlancIterative, ZhangIterative}.
    In particular, this class of algorithms is iterative in nature,
    with a state variable at each node being updated in each iteration as a linear interpolation of the states of selected neighbors. Due to the restriction on the algorithm behavior, the network requirements exceed the necessary and sufficient conditions
    shown in this paper.
    Additionally, these restricted algorithm structures yield only approximate consensus in finite time.
	Li et. al. \cite{Li7516067} extended this line of work
	to a network consisting of 3-hyperedges and 2-hyperedges.
	Vaidya et. al. \cite{TsengIterativeICDCN, VaidyaICDCN_Vector, Vaidya:2012:IAB:2332432.2332505} have investigated the iterative algorithm structure in the point-to-point communication model.

	\section{System Model and Notation} \label{section model}
We consider a synchronous system. The communication network interconnecting $n$ nodes is represented by an undirected graph $G = (V, E)$. Every node in the system knows graph $G$. Each node $u$ is represented by vertex $u\in V$. We will use the terms {\em node} and {\em vertex} interchangeably.
Two nodes $u$ and $v$ are \emph{neighbors} if and only if $uv \in E$ is an edge of $G$.
For a set $S\subseteq V$, node $u$ is said to be a neighbor of set $S$ if $u\not\in S$ and there exists $v\in S$ such that $uv\in E$.


Each edge $uv$ represents a FIFO link between the two nodes $u$ and $v$.
When a message $m$ sent by node $u$ is received by node $v$, node $v$ knows that $m$ was sent by node $u$.
In Sections \ref{section necessity} and \ref{section sufficiency}, we assume the \emph{local broadcast} model wherein a message sent by a node $u$ is received identically and correctly by each node $v$ such that $uv\in E$ (i.e., by each neighbor of $u$).
A {\em hybrid} model is considered later in Section \ref{section hybrid}.

A \emph{Byzantine} faulty node may exhibit arbitrary behavior.
We consider the \emph{Byzantine consensus problem} assuming
that each node has a binary input. The goal is for each node to output a binary value, satisfying the following conditions, in the presence of at most $f < n$ Byzantine faulty nodes.
\begin{enumerate}[label=\arabic*)]
	\item \textbf{Agreement:}
	All non-faulty nodes must output the same value.
	\item \textbf{Validity:}
	The output of each non-faulty node must be the input of some non-faulty node.
	\item \textbf{Termination:}
	All non-faulty nodes must decide on their output in finite time.
\end{enumerate}

\begin{itemize}[wide,labelindent=0pt]
    \item[\textbf{Paths in graph $G$:}]
    A path is a sequence of nodes such that any two adjacent nodes
    in the sequence are neighbors in the graph.
    \begin{itemize}
        \item[\textbullet]
        For $u,v\in V$, a $uv$-path $P_{uv}$ is a path between nodes $u$ and $v$.
        $u$ and $v$ are \emph{endpoints} of path $P_{uv}$.
        Any node in path $P_{uv}$ that is not an endpoint
        is said to be an \emph{internal} node of $P_{uv}$.
        All $uv$-paths have $u$ and $v$ as endpoints, by definition.
        Two $uv$-paths are \emph{node-disjoint} if they do not have any  internal nodes in common.
        
        \item[\textbullet]
        For $U \subset V$ and a node $v \not \in U$, a $Uv$-path is a $uv$-path for some node $u \in U$. All $Uv$-paths have $v$ as one endpoint.
        Two $Uv$-paths are \emph{node-disjoint} if they do not have any nodes in common except endpoint $v$. In particular, although two node-disjoint $Uv$-paths have endpoint $v$ in common, the other endpoints are distinct for the two paths.
    \end{itemize}
    
    \item[\textbf{Fault-free paths:}]
    A path is said to \emph{exclude} a set of nodes $X\subset V$ if no internal node of the path belongs to $X$; however, its endpoints may potentially belong to $X$.
    A path is said to be {\em fault-free} if none of its internal nodes are faulty. In other words, a path is fault-free
    if it excludes the set of faulty nodes. Note that a fault-free path may have a faulty node as an endpoint.
    
    \item[\textbf{Degree and Connectivity:}]
    The \emph{degree} of a node $u$ is the number of $u$'s neighbors (i.e., the number of edges incident to $u$).
    Minimum degree of $G$ is the minimum over the degree of all the vertices in $G$.
    A graph $G$ is $k$-connected if $n > k$ and removal of less than
    $k$ nodes does not disconnect $G$.
    By Menger's Theorem \cite{west2001introduction} a graph $G$ is $k$-connected if and only if for any two nodes $u, v \in V$ there exist $k$ node disjoint $uv$-paths.
    Another standard result \cite{west2001introduction} for $k$-connected graphs is that if $G$ is $k$-connected, then for any node $v$ and a set of at least $k$ nodes $U$ there exist $k$ node-disjoint $Uv$-paths.
\end{itemize}

	\section{Necessary Conditions under Local Broadcast} \label{section necessity}

Theorem \ref{theorem necessity} below states the
necessary conditions for Byzantine consensus under the local
broadcast model.
Section \ref{section sufficiency} presents a Byzantine consensus algorithm and constructively proves that the necessary conditions are also sufficient.

\begin{theorem} \label{theorem necessity}
	If there exists a Byzantine consensus algorithm under the local broadcast model on graph $G$ tolerating at most $f$ Byzantine faulty nodes, then both the following conditions must be true:
	\begin{enumerate*}[label=(\roman*)]
	    \item $G$ has minimum degree at least $2f$, and 
	    \item $G$ is $(\floor{3f/2} + 1)$-connected.
	\end{enumerate*}
\end{theorem}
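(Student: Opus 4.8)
The plan is to prove both conditions by contraposition: assume the condition fails and build a family of executions that are pairwise indistinguishable to the honest nodes we care about, so that Agreement and Validity cannot both hold. The single device that the local broadcast model forces on us is that a faulty node still emits \emph{one} broadcast, heard identically by all its neighbors; I therefore never let a faulty node equivocate. Instead, whenever a node is faulty in one execution but honest in a sibling execution, I define its faulty broadcast to copy, round by round, the broadcast its honest incarnation produces in the sibling. The correctness of each construction then reduces to an induction on the round number, carried out simultaneously across the family, showing that the targeted honest nodes have identical views and hence identical outputs.

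For the minimum-degree condition I assume some node $v$ has $\deg(v) \le 2f-1$; this is the local-broadcast analogue of the bound $n \ge 2f+1$ for the global broadcast model, localized to $v$'s broadcast neighborhood. The plan is to partition $N(v) = A \cup B$ with $|A| \le f$ and $|B| \le f$ and to run the replay game on these two halves: I build companion executions in which $A$ and $B$ are faulty in turn, always defining a faulty node's broadcast by copying its honest incarnation from the sibling, so that the messages $v$ hears are pinned down identically while Validity forces conflicting decisions. Since $v$'s view depends only on the broadcasts of $N(v)$, a per-execution budget of $f$ faults is just enough to control both halves across the family. The point that is delicate for incomplete graphs, and that ties the bound to $2f$ rather than to something larger, is that $v$'s honest neighbors also overhear the rest of the graph, so the inputs and fault assignments must be chosen so that no honest neighbor's broadcast to $v$ ever betrays the difference between the companion executions.

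For the connectivity condition, suppose $G$ is not $(\floor{3f/2}+1)$-connected, so there is a vertex cut $C$ with $|C| \le \floor{3f/2}$ separating $G-C$ into nonempty parts; fix two of them, $L$ and $R$, with no edge between $L$ and $R$. The key combinatorial fact is that $|C| \le \floor{3f/2}$ allows a partition $C = C_1 \cup C_2 \cup C_3$ in which every pairwise union $C_i \cup C_j$ has size at most $f$ (sizes $f/2$ when $f$ is even, and $\lceil f/2 \rceil, \lfloor f/2 \rfloor, \lfloor f/2 \rfloor$ when $f$ is odd). I then set up three executions $E_1, E_2, E_3$ with faulty sets $C_1 \cup C_2$, $C_2 \cup C_3$, and $C_1 \cup C_3$ respectively. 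In $E_1$ all honest nodes have input $0$ and in $E_2$ all honest nodes have input $1$, so Validity pins every honest decision to $0$ and to $1$, respectively. In $E_3$ I give $L$ input $0$ and $R$ input $1$; since $L$ and $R$ communicate only through $C$, the goal is to make an honest $L$-node's view in $E_3$ equal to its view in $E_1$ (so it decides $0$) and an honest $R$-node's view equal to its view in $E_2$ (so it decides $1$), which contradicts Agreement in $E_3$ because $E_3$ has honest nodes on both sides.

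The main obstacle is honoring non-equivocation in $E_3$: each faulty cut node emits a single broadcast, so the message it must show $L$ (to match $E_1$) and the message it must show $R$ (to match $E_2$) must coincide. I resolve this with the replay principle, using the third group as the ``carrier.'' For a node that is faulty in $E_3$ but honest in one of $E_1, E_2$ (namely $C_1$, honest in $E_2$, and $C_3$, honest in $E_1$), I define its faulty behavior in the \emph{other} companion to copy that honest behavior, so its single $E_3$-broadcast equals the common value fixed by the companion in which it is honest; for the group $C_2$, which stays honest in $E_3$, I instead make both of its faulty incarnations in $E_1$ and $E_2$ replay its $E_3$ broadcast. Verifying that these mutually referential definitions are well-founded, that no node is ever asked to equivocate, and that the two targeted views truly coincide requires the simultaneous round-by-round induction across the three executions; this bookkeeping, rather than any isolated gadget, is the technical heart of the proof.
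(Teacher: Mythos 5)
Your connectivity argument is sound and is essentially the paper's own proof: the same partition of the cut into three parts whose pairwise unions have size at most $f$, the same three executions (an all-zero reference, an all-one reference, and a mixed middle execution whose faulty set is the union of the two parts that are faulty in exactly one reference), and the same replay device. The paper packages the cross-execution replay as a single simulation network $\mathcal{G}$ containing two copies of each side of the cut, so the well-foundedness bookkeeping you identify as the technical heart is handled automatically by just running that network; but the content is identical.

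The degree argument, however, has a genuine gap. You partition $N(v)$ into $A, B$ with $|A| \le f$, $|B| \le f$ and plan companion executions whose faulty sets are only $A$ or $B$ ``in turn,'' with validity forcing output $0$ in one and output $1$ in the other while the messages $v$ hears are ``pinned down identically.'' This cannot work: $v$'s own input is part of its view. If validity is to force different outputs in the two executions, then all honest inputs---including $v$'s, since $v$ is honest in both---must be $0$ in one and $1$ in the other, so $v$ distinguishes the two executions trivially and no contradiction follows; if instead you keep $v$'s input fixed, then the execution whose other honest nodes hold the opposite input has honest inputs of both values, and validity no longer pins its output. The missing idea, which is exactly how the paper proceeds, is that the low-degree node $v$ must itself be made faulty in one execution. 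Take the asymmetric partition $|F^1| \le f-1$, $|F^2| \le f$ (possible since $\deg(v) \le 2f-1$), and use three executions: faulty set $F^2$ with all honest nodes (including $v$) holding input $0$, which forces output $0$; faulty set $F^1 \cup \{v\}$---this is where the slack $|F^1| \le f-1$ is spent---with all honest nodes holding input $1$, which forces output $1$; and a middle execution with faulty set $F^1$, where honest $v$ has input $0$ and all other honest nodes have input $1$. In the middle execution, $v$'s view matches the first reference (so $v$ outputs $0$), while the views of $F^2$ and the rest of the graph match the second reference, in which a faulty $v$ replays $v$'s input-$0$ behavior (so they output $1$), violating agreement. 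Without spending a fault on $v$ itself, the execution you need---one in which the rest of the graph is forced to decide $1$ while hearing $v$ behave as an honest input-$0$ node---is simply not available within the budget of $f$ faults, no matter how the replay is arranged.
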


Appendix \ref{section proofs necessity} presents a  proof of Theorem \ref{theorem necessity}.
Necessity of conditions (i) and (ii) in the theorem is proved separately in Lemmas \ref{lemma degree} and \ref{lemma connectivity}, respectively.
We use a state machine based approach \cite{Attiya:2004:DCF:983102, DOLEV198214, Fischer1986}
to prove Lemmas \ref{lemma degree} and \ref{lemma connectivity}.

It should be easy to see that a complete graph consisting of $2f+1$ nodes satisfies the necessary conditions in Theorem \ref{theorem necessity} for any $f$.
Figure \ref{figure graphs} presents other examples of graphs that satisfy the necessary conditions. Figure \ref{figure graphs}(a) shows a cycle consisting of 5 nodes. For this graph, the minimum degree is 2, and the graph is 2-connected. Thus, the cycle satisfies the conditions in Theorem \ref{theorem necessity} for $f=1$.
The graph in Figure \ref{figure graphs}(b) satisfies the necessary conditions when $f=2$.

\begin{figure}[ht]
    \centering
    \begin{subfigure}{0.45\textwidth}
        \centering
        \begin{tikzpicture}
        	\node[draw, circle, minimum height=1cm]
        		at (90:2) (1) {$1$};
        	\node[draw, circle, minimum height=1cm]
        		at (162:2) (2) {$2$};
        	\node[draw, circle, minimum height=1cm]
        		at (234:2) (3) {$3$};
        	\node[draw, circle, minimum height=1cm]
        		at (306:2) (4) {$4$};
        	\node[draw, circle, minimum height=1cm]
        		at (378:2) (5) {$5$};
        	
        	\draw[-] (1) to (2);
        	\draw[-] (2) to (3);
        	\draw[-] (3) to (4);
        	\draw[-] (4) to (5);
        	\draw[-] (5) to (1);
        \end{tikzpicture}
        \caption{$f=1$}
        \label{figure example graph cycle}
    \end{subfigure}
    \hfill
    \begin{subfigure}{0.45\textwidth}
        \centering
        	
        	
        	
        	
        	
        	
        	
        \begin{tikzpicture}
        	\node[draw, circle, minimum height=0.5cm]
        		at (45:1) (1) {};
        	\node[draw, circle, minimum height=0.5cm]
        		at (135:1) (2) {};
        	\node[draw, circle, minimum height=0.5cm]
        		at (225:1) (3) {};
        	\node[draw, circle, minimum height=0.5cm]
        		at (315:1) (4) {};
        	
        	\node[draw, circle, minimum height=0.5cm]
        		at (0:2) (5) {};
        	\node[draw, circle, minimum height=0.5cm]
        		at (90:2) (6) {};
        	\node[draw, circle, minimum height=0.5cm]
        		at (180:2) (7) {};
        	\node[draw, circle, minimum height=0.5cm]
        		at (270:2) (8) {};
        	
        	\draw[-] (5) to (1);
        	\draw[-] (5) to (2);
        	\draw[-] (5) to (3);
        	\draw[-] (5) to (4);
        	
        	\draw[-] (6) to (1);
        	\draw[-] (6) to (2);
        	\draw[-] (6) to (3);
        	\draw[-] (6) to (4);
        	
        	\draw[-] (7) to (1);
        	\draw[-] (7) to (2);
        	\draw[-] (7) to (3);
        	\draw[-] (7) to (4);
        	
        	\draw[-] (8) to (1);
        	\draw[-] (8) to (2);
        	\draw[-] (8) to (3);
        	\draw[-] (8) to (4);
        \end{tikzpicture}
        \caption{$f=2$}
        \label{figure example graph clique}
    \end{subfigure}
    
    \caption{Example graphs satisfying conditions in Theorem \ref{theorem necessity}}
    \label{figure graphs}
\end{figure}
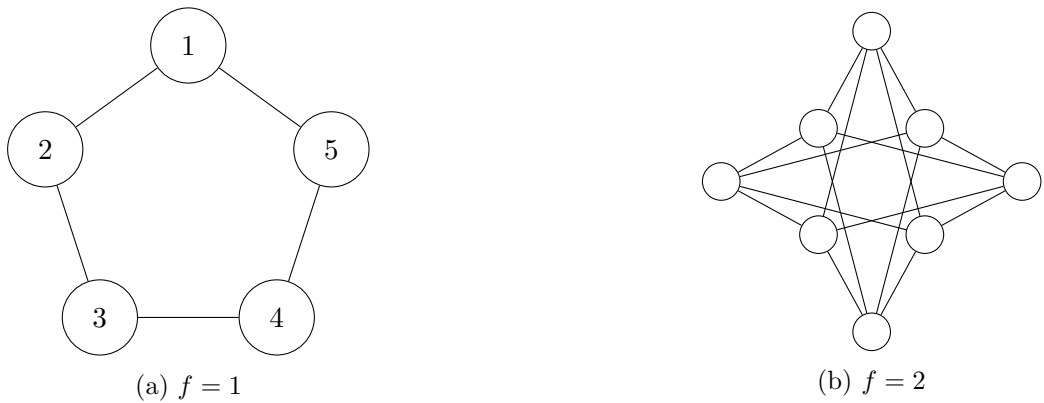

Section \ref{section sufficiency} below proves that the above necessary conditions are also sufficient. Before proceeding to Section \ref{section sufficiency}, let us consider the 5-node cycle in Figure \ref{figure graphs}(a) to build some intuition on why these conditions may be sufficient.
Since $f=1$, we do not have $2f+1$ node-disjoint paths between every pair of nodes in the cycle in Figure \ref{figure graphs}(a).
Despite lower connectivity, we can show a useful property.
In particular, suppose that node 1 attempts to send a message $M$ to node 4, by transmitting it along two node-disjoint paths 1-2-3-4 and 1-5-4, respectively.
That is, node 1 will send the message to nodes 2 and 5 (its neighbors) in a single transmission -- local broadcast model allows node 1 to send its message to all its neighbors simultaneously.
The neighbors 2 and 5 will then forward the message
and subsequently, node 3
will forward the message received from node 2.
Due to the local broadcast model, all neighbors of each node will receive its transmissions. Consider two cases:
\begin{itemize}
    \item
    \underline{Case (i):}
    The internal nodes (namely, 2, 3 and 5) on the two paths behave correctly: In this case, node 4 receives identical copies of the message along the two disjoint paths.
    Because $f=1$, node 4 can be certain that it has correctly received the message that was transmitted by node 1 (when  $f=1$, internal nodes on at most one disjoint path may be faulty).
    
    \item
    \underline{Case (ii):}
    Node 3 is faulty, and tampers the message received from node 2 before forwarding it to node 4: In this case, node 4 will not receive two identical message copies along the two disjoint paths. Therefore, node 4 cannot determine the message sent by node 1. However, in this case, node 2 is non-faulty.
    Node 2 correctly forwards message $M$ received from node 1 to node 3, and then observes that node 3 is forwarding a tampered message to node 4, not the correct message $M$. Node 2 can observe the faulty behavior of node 3 due to the local broadcast property -- when node 3 sends the tampered message to node 4, node 2 receives it too, because node 2 is a neighbor of node 3.
    
    Node 2 can now notify node 1 that node 3 is faulty.
    Of course, on receiving this notification, node 1 cannot be certain whether node 3 is indeed faulty, or node 2 is faulty and it is incorrectly accusing node 3 of misbehavior. However, node 1 can be certain that one of nodes 2 and 3 must be faulty. Because $f=1$, node 1 can then infer that the path 4-5-1 is fault-free. Thus, any messages received by node 1 from node 4 along the path 4-5-1 could not be tampered by the internal node on this path, namely, node 5. This allows node 1 to now receive messages transmitted by node 4 reliably.
\end{itemize}

The example above can be generalized to derive a similar capability for reliable communication in at least one direction between each pair of nodes. Although the algorithm presented next does not seem to explicitly utilize this property, it implicitly relies on such a behavior.

%
%

	\section{Byzantine Consensus Algorithm under Local Broadcast} \label{section sufficiency}

Theorem \ref{theorem sufficiency} below states that the necessary conditions in Theorem \ref{theorem necessity} are also sufficient.

\begin{theorem} \label{theorem sufficiency}
	Under the local broadcast model, Byzantine consensus tolerating at most $f$ Byzantine faulty nodes is achievable on graph $G$ if both the following conditions are true:
	\begin{enumerate*}[label=(\roman*)]
	    \item $G$ has minimum degree at least $2f$, and 
	    \item $G$ is $(\floor{3f/2} + 1)$-connected.
	\end{enumerate*}
\end{theorem}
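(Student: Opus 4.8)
The plan is to give a constructive proof by building, on top of the local broadcast primitive, a \emph{reliable broadcast} service and then reducing consensus to it. Concretely, I would first isolate the following intermediate goal: for every source node $s$ and every value it wishes to disseminate, design a protocol guaranteeing (a) \emph{validity of dissemination}, meaning that if $s$ is non-faulty then every non-faulty node eventually decides on the value $s$ actually sent; and (b) \emph{consistency}, meaning that even if $s$ is faulty, all non-faulty nodes decide on a common value for $s$'s broadcast. Observe that condition (i), minimum degree at least $2f$, already forces $n \ge 2f+1$, since each node has at least $2f$ neighbors. Hence, once reliable broadcast is available, consensus follows by a standard reduction: every node reliably broadcasts its binary input, each non-faulty node thereby obtains the same length-$n$ vector of reported inputs, and all output the majority value (with a fixed tie-break). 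Agreement is immediate because all non-faulty nodes apply the same deterministic rule to the same vector; validity holds because, when all non-faulty inputs equal $b$, at least $n-f \ge f+1$ of the $n$ reported values equal $b$ while at most $f$ are controlled by faulty nodes, so $b$ is the strict majority.

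The core of the argument, and the step I expect to be the main obstacle, is establishing reliable broadcast under connectivity only $\floor{3f/2}+1$, which is strictly below the $2f+1$ that would permit the naive ``vote over disjoint paths'' scheme. The plan here is to exploit overhearing exactly as in the five-cycle discussion: $s$ floods its value, each forwarding node appends the path traversed so far, and---crucially---every non-faulty node monitors the transmissions of its neighbors. Whenever a node forwards a value inconsistent with what it correctly received and was supposed to relay, a non-faulty neighbor overhears the discrepancy and issues an \emph{accusation}, which is itself disseminated. I would then prove the key lemma that a non-faulty receiver $v$ can always pin down the value sent by a non-faulty $s$: among the $\floor{3f/2}+1$ node-disjoint $sv$-paths guaranteed by Menger's theorem, at most $f$ contain a faulty internal node, and for each such corrupted path the tampering is witnessed by a non-faulty neighbor whose accusation reaches $v$ along a fault-free route; $v$ then discards every accused path and accepts the value carried by the surviving paths, which are necessarily honest and therefore all agree on $s$'s true value.

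The delicate part of this lemma is showing that accusations are both \emph{sound} and \emph{deliverable} in the low-connectivity regime: the minimum-degree-$2f$ hypothesis must be invoked to guarantee that every transmission has enough non-faulty witnesses to detect tampering, while the $\floor{3f/2}+1$-connectivity must guarantee that a fault-free channel survives for a witness's accusation to travel to $v$ despite the adversary's attempts to also corrupt the accusation routes. I expect the bookkeeping to require tracking, for each candidate fault set consistent with the observed accusations, what value $s$ would have sent, and arguing that all consistent explanations agree on the non-faulty source's value; this case analysis over fault patterns is presumably what renders the resulting algorithm correct but inefficient. Finally, I would bound the number of rounds by the diameter of $G$ plus the time for accusations to propagate, yielding termination in finitely many synchronous rounds and thereby completing all three consensus requirements.
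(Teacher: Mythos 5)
Your reduction of consensus to reliable broadcast is internally consistent (minimum degree $2f$ does force $n \ge 2f+1$, so majority over a common vector of reliably-broadcast inputs would work), but the key lemma you build everything on is false under the stated hypotheses: with connectivity only $\lfloor 3f/2\rfloor + 1$, a non-faulty receiver $v$ cannot always pin down the value sent by a non-faulty source $s$, no matter how accusations are used. The $5$-cycle $1\text{-}2\text{-}3\text{-}4\text{-}5\text{-}1$ with $f=1$ satisfies both conditions (degree $2 = 2f$, connectivity $2 = \lfloor 3f/2\rfloor+1$), yet broadcast from node $1$ to node $4$ is impossible. Consider execution $E_A$: node $1$ has value $0$ and node $3$ is faulty, forwarding $1$ instead of $0$; and execution $E_B$: node $1$ has value $1$ and node $5$ is faulty, forwarding $0$. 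In $E_A$ node $2$ genuinely accuses $3$, and the accusation reaches $4$ along $2\text{-}1\text{-}5\text{-}4$; but faulty node $3$ simultaneously fabricates the accusation ``$1$ accuses $5$'' and delivers it to $4$ with path label $(1,2,3)$. In $E_B$ the roles are exactly exchanged: node $1$ genuinely accuses $5$ along $1\text{-}2\text{-}3\text{-}4$, while faulty $5$ fabricates ``$2$ accuses $3$'' with path label $(2,1,5)$. Each fabrication is overheard by a non-faulty neighbor and triggers a further genuine accusation, which the faulty node in the \emph{other} execution mimics at the next level; inductively, node $4$ receives identical views in both executions (value $1$ via $(1,2,3)$, value $0$ via $(1,5)$, the same accusations with the same path labels, at matchable times), yet the true value is $0$ in one and $1$ in the other. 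So two ``candidate fault sets consistent with the observed accusations'' survive your bookkeeping and disagree on $s$'s value --- your soundness/deliverability step fails exactly at the point you flagged as delicate. Indeed, the paper's own discussion of this example (Section \ref{section necessity}) concludes that node 4 \emph{cannot} determine the message sent by node 1; what the conditions buy is only reliable communication in at least \emph{one direction} per pair (here $4 \to 1$, after node $1$ learns that one of $\{2,3\}$ is faulty), which is too weak to run a majority-over-a-common-vector reduction.

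This is why the paper does not go through reliable broadcast at all. Algorithm \ref{algorithm consensus} enumerates every candidate fault set $F$ with $|F| \le f$ in separate phases; within a phase, nodes flood with path labels (the local-broadcast rules preventing equivocation), estimate the sets of $0$- and $1$-flooders using paths that exclude $F$, and update their state only upon receiving a common value along $f+1$ node-disjoint paths excluding $F$. Two invariants then suffice: any update copies the start-of-phase state of some non-faulty node (Lemma \ref{lemma validity}, so validity and any previously achieved agreement are never destroyed), and in the one phase whose $F$ contains all actual faults, the excluded paths are fault-free and every non-faulty node converges to the same value (Lemma \ref{lemma agreement}). Agreement thus needs to be \emph{created} only once and \emph{preserved} thereafter; no node ever needs to reliably learn any particular node's value, which is precisely the capability your approach requires and which the network cannot provide at this connectivity.
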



We prove correctness of Theorem \ref{theorem sufficiency} constructively by providing a Byzantine consensus algorithm (in Section \ref{section algorithm}) and showing its correctness (in Section \ref{section correctness}).

		\subsection{Proposed Algorithm} \label{section algorithm}

Assume that graph $G$ satisfies the properties stated
in Theorem \ref{theorem sufficiency}. That is, $G$ has minimum degree at least $2f$ and is $(\floor{ 3f / 2 } + 1)$-connected.
Pseudo-code for the proposed algorithm is presented below.
To understand the description presented next, it will help the reader to read the corresponding steps in Algorithm \ref{algorithm consensus} below.

\begin{itemize}[wide,labelindent=0pt]
    \item[\underline{\em Initialization:}] Each node $v\in V$ maintains a local state variable named $\gamma_v$, which is initialized  to equal node $v$'s binary input.
    
    \item[\underline{\em Phases of the algorithm:}] Recall that we assume a synchronous system. The execution of the algorithm is divided into many {\em phases}, each phase corresponding to a distinct subset of nodes $F$, such that $F$ contains at most $f$ nodes. Each iteration of the {\tt For} loop in Algorithm \ref{algorithm consensus} corresponds to one phase of the algorithm. The set $F$ chosen in each phase is a {\em candidate} for the actual set of faults; however, set $F$ in only one of the phases will exactly equal the actual set of faulty nodes in the given execution.
    This is similar to the Byzantine consensus algorithm for directed graphs under the point-to-point communication model by Tseng and Vaidya \cite{LewisByzantineDirected}.
    However, the rest of the algorithm proceeds differently since we consider the local broadcast model.
    
    \item[\underline{\em Step (a):}] At the beginning of each phase, each node $v$ attempts to communicate its current value of $\gamma_v$ via ``flooding'', as described soon.
    Any message transmitted during flooding has the form $(b, \Pi)$, where $b \in \set{ 0, 1 }$ and $\Pi$ is a path.
    Flooding proceeds in synchronous rounds, with each node possibly forwarding messages received in the previous round, following the rules presented below.
    Flooding will end after $n$ rounds, as should be clear from the following description.
    
    To initiate flooding of its $\gamma_v$ value, node $v$ transmits message $(\gamma_v,\bot)$ to its neighbors, where $\bot$ represents an empty path.
    Each node in the network similarly initiates flooding of its own $\gamma$ value in step (a).
    If $v$ is faulty and does not initiate flooding, then non-faulty neighbors of $v$ replace the missing message with the default message of $(1,\bot)$.
    Therefore, we can assume that a value is indeed flooded by each node, even if it is faulty.
    When node $v$ receives from a neighbor $u$ a message $(b,\Pi)$, where $b\in\{0,1\}$ and $\Pi$ is a path, it takes the following steps sequentially. In the following note that $\Pi \operatorname{-}  u$ denotes a path obtained by concatenating identifier $u$ to path $\Pi$.
    \begin{enumerate}[label=(\roman*)]
        \item If path $\Pi \operatorname{-} u$ does not exist in graph $G$, then node $v$ discards the message $(b,\Pi)$. Recall that each node knows graph $G$, and the message $(b,\Pi)$ was received by node $v$ from node $u$.
        \item Else if, in the current phase, node $v$ has previously received from $u$ another message containing path $\Pi$ (i.e., a message of the form $(b',\Pi)$),
        then node $v$ discards the message $(b,\Pi)$.
        \item Else if path $\Pi$ already includes node $v$'s identifier, node $v$ discards the message $(b,\Pi)$.
        \item Else node $v$ is \underline{said to have received value $b$ along path $\Pi \operatorname{-} u$} and node $v$ forwards message $(b,\Pi \operatorname{-} u)$ to its neighbors. Recall that $v$ received message $(b,\Pi)$ from neighbor $u$.
    \end{enumerate}
    
    Rules (i) and (ii) above are designed to prevent a faulty node from sending spurious messages. Recall that under the local broadcast model, all neighbors of any node receive all its transmissions.
    Thus, rule (ii) above essentially prevents a faulty node from successfully delivering mismatching messages to its neighbors (i.e., this prevents equivocation).
    Rule (iii) ensures that flooding will terminate after $n$ rounds.
    
    Rule (ii) above crucially also ensures the following useful property due to the local broadcast model: even if node $u$ is Byzantine faulty, but paths $P_{uv}$ and $P_{uw}$ are fault-free, then nodes $v$ and $w$ will receive
    identical value in the message from $u$ forwarded along paths $P_{uv}$ and $P_{uw}$, respectively.
    Recall that a path is fault-free if none of the internal nodes on the path are faulty.
    
    \item[\underline{\em Step (b):}] Recall that a path is said to exclude set $F$ if none of its {\em internal} nodes are in $F$. For each $u\in V$, node $v$ chooses an arbitrary $uv$-path $P_{uv}$
    that excludes $F$.
    It can be shown (Lemma \ref{lemma paths}) that such a path always exists under the conditions in Theorem \ref{theorem sufficiency}.
    For the purpose of step (b), node $v$ is deemed to have received its own $\gamma_v$ along
    path $P_{vv}$ (containing only node $v$).
    Node $v$ computes sets $Z_v$ and $N_v = V - Z_v$, as shown in the pseudo-code.
    
    \item[\underline{\em Step (c):}]
    This step specifies the rules for updating value $\gamma_v$.
    $\gamma_v$ is not necessarily updated in each phase.
    
    \item[\underline{\em Output:}]
    After all the phases (i.e., all iterations of the {\tt For} loop) are completed,
    the value of $\gamma_v$ is chosen as the output of node $v$.
\end{itemize}

The proof of correctness of Algorithm \ref{algorithm consensus} is presented in Section \ref{section correctness}. 
%
%
As we discussed earlier when describing the flooding mechanism, the faulty nodes are effectively limited to delivering a unique value on all fault-free paths (i.e., paths that do not have faulty nodes as internal nodes). If this unique value corresponding to a faulty node is $0$, we will say that the faulty node flooded value $0$; else we will say that the node flooded value 1.

Let $Z$ be the set of nodes that flooded $0$ in step (a) and let $N = V - Z$ be the set of the remaining nodes that flooded $1$.
Recall from the description of flooding above, that even a faulty node effectively floods one value, and one value only, in step (a) of each phase.
Note that either $Z$ or $N$ may possibly be empty, but not both.
In step (b), each non-faulty node $v$ obtains its own estimates $Z_v$ and $N_v$ of $Z$ and $V$, respectively. If the set $F$ in the current phase does not contain all the faulty nodes, then these estimates can be incorrect.
However, at least in one phase, $F$ will contain all the faulty nodes and, in that phase, as shown later, it is guaranteed that
$Z=Z_v$ and $N=N_v$.
This observation is important for the correctness of the algorithm, as seen later.



As shown later, step (c) is designed to ensure that the fault-free nodes will reach consensus in a phase in which $F$ contains all the faulty nodes.
In each phase, step (c) also ensures the invariant that $\gamma_v$,
at any non-faulty node $v$, at the end of the phase is equal to $\gamma_u$, for some non-faulty node $u$, at the start of that phase.


\begin{algorithm}[ht]
	\SetAlgoLined
	\SetKwFor{For}{For}{do}{end}
	Each node $v$ has a binary input value in $\set{ 0, 1 }$ and maintains a binary state $\gamma_v \in \set{ 0, 1 }$.\\
	{\em Initialization:} $\gamma_v$ := input value of node $v$\\
	\For{each $F \subseteq V$ such that $\abs{F} \le f$}
	{\begin{enumerate}
			[label=Step (\alph*):,labelindent=12pt,leftmargin=!,rightmargin=\widthof{Step ()}]
			\item
			Flood value $\gamma_v$. (The steps taken to achieve flooding are described in the text preceding Algorithm \ref{algorithm consensus} pseudo-code.)
			
			\item
			For each node $u \in V$, identify a single $uv$-path $P_{uv}$ that excludes $F$.
			Let,
			\begin{align*}
    			Z_v &:= \set{ u \in V \mid \text{$v$ received value $0$ from $u$ along $P_{uv}$ in step (a)} },\\
    			N_v &:= V-Z_v.
			\end{align*}
			
			\item
			Define sets $A_v$ and $B_v$ as follows.
			\begin{enumerate}
				[label=Case \arabic*:]
				\item If $\abs{Z_v \cap F} \le \floor{f/2}$ and $\abs{N_v} > f$, then
				 $A_v := N_v$ and $B_v := Z_v$.
				
				\item If $\abs{Z_v \cap F} \le \floor{f/2}$ and $\abs{N_v} \le f$,
				then $A_v := Z_v$ and $B_v := N_v$.
				
				\item If $\abs{Z_v \cap F} > \floor{f/2}$ and $\abs{Z_v} > f$,
			    then $A_v := Z_v$ and $B_v := N_v$.
				
				\item If $\abs{Z_v \cap F} > \floor{f/2}$ and $\abs{Z_v} \le f$,
				then $A_v := N_v$ and $B_v := Z_v$.
			\end{enumerate}
			If $v \in B_v$ and $v$ receives value $\delta\in\{0,1\}$ along any $f+1$ node-disjoint $A_v v$-paths that exclude $F$ in step (a), then $\gamma_v := \delta$.
	\end{enumerate}}
	
	Output $\gamma_v$.
	
	\caption{
		Proposed algorithm for Byzantine consensus under the local broadcast model: Steps performed by node $v$ are shown here.
	}
	\label{algorithm consensus}
\end{algorithm}

		\subsection{Proof of Correctness of Algorithm \ref{algorithm consensus}} \label{section correctness}

In this section, we assume that graph $G$ satisfies the conditions stated in Theorem \ref{theorem sufficiency}, even if this is not always stated explicitly below.
Appendix \ref{section proofs sufficiency} presents the proofs of the lemmas in this section.
%
%

The proof of correctness relies on two key lemmas, Lemma \ref{lemma validity} and \ref{lemma agreement}. We will present additional results and discuss the intuition behind the proof of these lemmas subsequently. Formal proofs are presented in Appendix \ref{section proofs sufficiency}.
Recall that the algorithm execution is divided into {\em phases}, each phase corresponding to a different choice of $F$, where
$|F|\leq f$. For convenience of presentation, we will refer to $\gamma_v$ as the ``state of node $v$''. Recall that state $\gamma_v$ of node $v$ may possibly be modified in step (c) in each phase.
\begin{lemma} \label{lemma validity}
 	For a non-faulty node $v$, its state $\gamma_v$ at the end of any given phase equals the state of some non-faulty node at the start of that phase.
\end{lemma}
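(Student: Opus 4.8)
The plan is to establish the invariant by tracking the only place where $\gamma_v$ can change during a phase, namely step (c). Observe that $\gamma_v$ is modified only if $v \in B_v$ and $v$ receives a common value $\delta$ along $f+1$ node-disjoint $A_v v$-paths (all excluding $F$); otherwise $\gamma_v$ is left untouched. So I would split into two cases. If $\gamma_v$ is \emph{not} updated in step (c), then at the end of the phase it equals its value at the start of the phase; since $v$ is itself non-faulty, this is trivially the start-of-phase state of a non-faulty node, and the claim holds.

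The substantive case is when $\gamma_v$ is updated to $\delta$. Here I would exploit the node-disjointness of the $f+1$ paths together with the bound of at most $f$ faulty nodes. Since the paths are $A_v v$-paths and node-disjoint, they pairwise share only the endpoint $v$; hence their internal nodes and their other endpoints (all distinct nodes of $A_v$) form $f+1$ pairwise-disjoint node sets, none containing $v$ (recall $v \in B_v$, and $A_v, B_v$ partition $V$). Each of the at most $f$ faulty nodes therefore lies on at most one of these $f+1$ paths, so by pigeonhole at least one path $P$ contains \emph{no} faulty node at all -- neither as an internal node nor as its endpoint. Let $w \in A_v$ denote the non-faulty endpoint of $P$.

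Next I would argue that the value $v$ receives along $P$ is exactly $\gamma_w$ at the start of the phase. Since $P$ is fault-free, every internal node forwards the flooded message unaltered (by the flooding rules of step (a) and the local-broadcast property noted after Rule (ii)), so the value delivered to $v$ along $P$ is the value flooded by its source $w$. As $w$ is non-faulty, the value it floods in step (a) is precisely its start-of-phase state $\gamma_w$. Because all $f+1$ paths -- including $P$ -- deliver $\delta$, we conclude $\delta = \gamma_w$, and thus the updated state $\gamma_v = \delta$ equals the start-of-phase state of the non-faulty node $w$, as required.

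I expect the main obstacle to be the careful bookkeeping in the counting step: justifying that node-disjointness forces each fault to ``spoil'' at most one path, so that $f+1$ paths guarantee one that is clean both internally \emph{and} at its endpoint. The remaining ingredient -- that a fully fault-free path delivers the source's flooded value, which for a non-faulty source is its start-of-phase state -- follows directly from the flooding mechanism and notably does not require the case analysis (Cases 1--4) or the ``excludes $F$'' condition, both of which are instead reserved for the agreement argument.
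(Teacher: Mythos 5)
Your proof is correct and takes essentially the same approach as the paper's: the trivial no-update case, then a pigeonhole argument over the $f+1$ node-disjoint $A_v v$-paths to find one that is fault-free with a non-faulty endpoint $w$, whose flooded start-of-phase state must equal the received value $\delta$. Your expanded counting step and your closing remark (that neither the Cases 1--4 analysis nor the ``excludes $F$'' property is needed here, only for agreement and for path existence) merely make explicit what the paper's proof leaves implicit via Observation \ref{observation fault-free reliable}.
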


\begin{lemma} \label{lemma agreement}
	Consider a phase of Algorithm \ref{algorithm consensus} wherein all the faulty nodes are contained in set $F$ corresponding to that phase. At the end of this phase,
	 every pair of non-faulty nodes $u, v \in V$ have identical state, i.e., $\gamma_u=\gamma_v$.
\end{lemma}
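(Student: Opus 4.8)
The plan is to show that in the phase where $F$ equals the actual set of faulty nodes, every non-faulty node updates its state to the same common value $\delta$. Let me first establish the key structural facts about this phase. Since $F$ contains exactly the faulty nodes, every fault-free path excludes $F$, and so by the property guaranteed by rule (ii) of flooding, each non-faulty node $v$ receives from any node $u$ along its chosen path $P_{uv}$ (which excludes $F$, hence is fault-free) the \emph{unique} value that $u$ effectively flooded. This means $Z_v = Z$ and $N_v = N$ for every non-faulty $v$, where $Z$ and $N$ are the global sets defined in the text. Consequently, the quantities tested in step (c) --- namely $\abs{Z_v \cap F} = \abs{Z \cap F}$, $\abs{Z_v} = \abs{Z}$, and $\abs{N_v} = \abs{N}$ --- are \emph{identical across all non-faulty nodes}, so every non-faulty node falls into the \emph{same} Case among the four, and therefore computes the same pair $(A_v, B_v) = (A, B)$ where $\{A,B\} = \{Z,N\}$.

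First I would argue that the four cases are engineered so that in every case $\abs{B \cap F} \le \floor{f/2}$ and $\abs{A} > f$. For Cases 1 and 2, the hypothesis directly gives $\abs{Z \cap F} \le \floor{f/2}$, and $B = Z$ (Case 2) or $B = Z$ gives the bound; one checks $\abs{A} > f$ using $\abs{N} > f$ in Case 1 and, in Case 2, using that $\abs{N} \le f$ forces $\abs{Z} = n - \abs{N}$ to be large via $n \ge 2f+1$ (which follows from minimum degree $\ge 2f$ and connectivity, or can be derived). The symmetric reasoning handles Cases 3 and 4 by swapping the roles of $Z$ and $N$. The point of these inequalities is twofold: $\abs{A} > f$ guarantees (via the connectivity/degree hypotheses and the $Uv$-path version of Menger's theorem quoted in the model section) that enough node-disjoint $Av$-paths exist, and $\abs{B \cap F} \le \floor{f/2}$ controls how many faulty internal nodes can corrupt those paths.

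Next, fix any non-faulty node $v \in B$. Since $\abs{A} > f \ge \floor{3f/2} - \floor{f/2}$ and $G$ is $(\floor{3f/2}+1)$-connected, there exist $\floor{3f/2}+1$ node-disjoint $Av$-paths; after removing at most $\floor{f/2}$ paths that could contain a faulty internal node (bounded by $\abs{A \cap F}$, which together with $\abs{B\cap F}\le\floor{f/2}$ and $\abs{F}\le f$ is controlled), at least $f+1$ node-disjoint $Av$-paths remain that \emph{exclude} $F$, i.e.\ are fault-free. The crucial claim is that all nodes in $A$ flooded the \emph{same} value $\delta$: since $A$ is one of $Z$ or $N$, by definition every node in $A$ flooded the same bit (all of $Z$ flooded $0$, all of $N$ flooded $1$). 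Hence along every fault-free $Av$-path, $v$ receives exactly $\delta$, so $v$ sees $\delta$ on at least $f+1$ node-disjoint $F$-excluding $Av$-paths and updates $\gamma_v := \delta$. The hard part --- and the step I expect to be most delicate --- is the counting that simultaneously guarantees \emph{at least $f+1$ fault-free node-disjoint $Av$-paths survive} while keeping $\abs{B \cap F}$ small enough that the value $v$ actually adopts is forced to be $\delta$ rather than something a minority of faulty paths could spoof; this is exactly where the threshold $\floor{3f/2}$ (rather than $2f$) is used, and where the four-case split must be verified to always land in the regime where $\abs{A}>f$ and the fault budget on the relevant side is at most $\floor{f/2}$. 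Finally, every non-faulty node computes the same $(A,B)$ and the same $\delta$ (the flooded value of set $A$), and every non-faulty node lies in exactly one of $A$ or $B$; a short argument shows nodes already in $A$ need not change (their $\gamma$ already encodes the value they flooded consistently with $\delta$) or are handled symmetrically, so at the end of the phase all non-faulty nodes hold the common value, establishing $\gamma_u = \gamma_v$.
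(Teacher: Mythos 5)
Your high-level skeleton is the same as the paper's: show $Z_v=Z$, $N_v=N$ for every non-faulty $v$ via fault-free paths chosen in step (b), conclude that all non-faulty nodes land in the same case and compute the same $(A,B)$, note that all of $A$ flooded one common bit $\alpha$, and finish by arguing that $A$-nodes keep $\gamma=\alpha$ while $B$-nodes adopt $\alpha$ along $f+1$ node-disjoint $F$-excluding (hence fault-free) $Av$-paths. The paper isolates the existence of those paths as a separate lemma (Lemma \ref{lemma propagates}), and that is exactly where your argument breaks. Your claim that ``in every case $\abs{B \cap F} \le \floor{f/2}$ and $\abs{A} > f$'' is false: in Case 4 the \emph{defining hypothesis} is $\abs{Z_v \cap F} > \floor{f/2}$ with $B_v = Z_v$, so $\abs{B\cap F} > \floor{f/2}$ there by assumption, and in Case 2 one can have $\abs{N_v\cap F}$ close to $f$. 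The four cases exist precisely because there are two genuinely different regimes: in Cases 1 and 3 one has $\abs{B_v\cap F}\le\floor{f/2}$ and $\abs{A_v}>f$ and a connectivity argument works; in Cases 2 and 4 one instead has $\abs{B_v}\le f$, and the paper uses the \emph{minimum-degree} hypothesis directly --- $v$ has at least $2f$ neighbors, at most $f-1$ of which lie in $B_v - \set{v}$, so $v$ has $f+1$ neighbors in $A_v$, and these single-edge paths are the required node-disjoint $F$-excluding paths. Your proof never invokes the degree condition except to get $n \ge 2f+1$; since Lemma \ref{lemma degree} shows that condition is necessary (connectivity $\floor{3f/2}+1$ alone is insufficient), no argument that ignores it can be correct.

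There is a second, quantitative gap even in the cases where your regime claim holds. To extract $f+1$ fault-free paths you propose taking $\floor{3f/2}+1$ node-disjoint $Av$-paths and discarding at most $\floor{f/2}$ corrupted ones. But the $Uv$-path form of Menger's theorem quoted in the model section requires $\abs{A} \ge \floor{3f/2}+1$, while you only establish $\abs{A} \ge f+1$; and the number of corrupted paths is not bounded by $\abs{B\cap F}$, because faulty nodes of $A$ can occur as \emph{internal} nodes of paths whose endpoints are other $A$-nodes. The paper's proof of Lemma \ref{lemma propagates} avoids both problems with a specific device you would need to reproduce: choose exactly $f+1$ endpoints $A'_v \subseteq A_v$ containing \emph{all} of $A_v \cap F$, delete $B'_v = B_v \cap (F - v)$ (at most $\floor{f/2}$ vertices, so $G - B'_v$ is still $(f+1)$-connected), and apply Menger there; then every faulty node is either a deleted vertex or an endpoint, so no path has a faulty internal node. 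As written, your proof has a genuine gap at the one step you yourself flagged as delicate.
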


As shown next,
these
two lemmas imply correctness of
Algorithm \ref{algorithm consensus},
thus proving
Theorem \ref{theorem sufficiency}.

\begin{proof_of}{Theorem \ref{theorem sufficiency}}
Algorithm \ref{algorithm consensus} terminates in finite time because the number of phases is finite, and flooding in each phase completes in finite time. Thus, the algorithm satisfies the {\em termination} condition.

Since there are at most $f$ faulty nodes in any
given execution, there exists at least one phase in which
set $F$ will contain all the faulty nodes. Then, from Lemma \ref{lemma agreement}, we have that all non-faulty nodes have the same state at the end of this phase. Lemma \ref{lemma validity} implies that the state of the non-faulty nodes will remain unchanged after any subsequent phases. Therefore, all non-faulty nodes will have the same state at the end of the algorithm and their output will be identical. This proves that the algorithm satisfies the {\em agreement} condition.

At the start of phase 1, the state of each non-faulty node equals its own input. Now, applying Lemma \ref{lemma validity} inductively implies that the state of a non-faulty node always equals the {\em input} of some non-faulty node. This, in turn,
implies that the algorithm satisfies the {\em validity} condition.
Thus, we have proved correctness of Algorithm \ref{algorithm consensus} under the conditions stated in Theorem \ref{theorem sufficiency}.
\end{proof_of}


The proofs of Lemmas \ref{lemma validity} and \ref{lemma agreement}, stated above, rely on two other lemmas, presented next. The reader may skip the rest of this section without a loss of continuity.
%
%
%
Lemma \ref{lemma paths} below implies that path $P_{uv}$ used in step (b) of the algorithm indeed exists.

\begin{lemma} \label{lemma paths}
	For any choice of set $F$ in the algorithm, and any two nodes $u, v\in V$, there exists a $uv$-path that excludes $F$.
\end{lemma}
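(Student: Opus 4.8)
The plan is to prove Lemma \ref{lemma paths} by reducing it to the connectivity hypothesis of Theorem \ref{theorem sufficiency}, namely that $G$ is $(\floor{3f/2}+1)$-connected. The key observation is that the set $F$ has size at most $f$, and its \emph{internal} presence on a $uv$-path is all we need to avoid; the endpoints $u$ and $v$ may themselves lie in $F$. So first I would dispose of trivial cases: if $u=v$ the claim is vacuous, and if $uv\in E$ then the single-edge path has no internal nodes and trivially excludes $F$. In the remaining case $u,v$ are distinct non-adjacent nodes.

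The main idea is to count how many nodes of $F$ can actually block internal positions. Since $u$ and $v$ may belong to $F$, the set of $F$-nodes that could appear as \emph{internal} vertices of a $uv$-path is $F\setminus\set{u,v}$, which has size at most $f$. By Menger's Theorem, as recalled in the model section, $(\floor{3f/2}+1)$-connectivity guarantees $\floor{3f/2}+1$ node-disjoint $uv$-paths. These paths share only the endpoints $u$ and $v$, so their internal vertex sets are pairwise disjoint. Each node of $F\setminus\set{u,v}$ can therefore appear as an internal vertex on at most one of these node-disjoint paths. Hence at most $\abs{F\setminus\set{u,v}}\le f$ of the $\floor{3f/2}+1$ node-disjoint paths contain an internal node from $F$.

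To finish, I would verify the arithmetic inequality $\floor{3f/2}+1 > f$, which ensures that strictly more than $f$ node-disjoint paths exist, so at least one of them has no internal vertex in $F$ and therefore excludes $F$ as required. This inequality holds for every $f\ge 0$: for $f=0$ it reads $1>0$, and for $f\ge 1$ we have $\floor{3f/2}\ge f$, giving $\floor{3f/2}+1\ge f+1>f$. This completes the argument.

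I do not expect any genuine obstacle here; the lemma is essentially an immediate corollary of Menger's Theorem together with the pigeonhole principle applied to the disjoint internal vertex sets. The only point requiring mild care is the bookkeeping around endpoints: one must resist the temptation to use $\abs{F}\le f$ directly on internal nodes without first noting that $u,v$ could be in $F$, since the cleanest framing counts blockers from $F\setminus\set{u,v}$ rather than $F$. In fact even the coarser bound $\abs{F}\le f$ suffices, because we already have strictly more than $f$ disjoint paths, so the proof is robust to that detail.
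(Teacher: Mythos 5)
Your proposal is correct and follows essentially the same route as the paper: invoke Menger's Theorem to get $\floor{3f/2}+1 \ge f+1$ node-disjoint $uv$-paths, then observe that since $\abs{F}\le f$ and the paths share no internal nodes, at least one path must exclude $F$. The extra bookkeeping you add (trivial cases, endpoints possibly in $F$) is harmless detail on top of the paper's one-line pigeonhole argument.
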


\begin{proof}
Recall that a path is said to exclude $F$ if none of the
{\em internal} nodes in the path belong to $F$.
%
Since $G$ is $(\floor{3f/2}+1)$-connected, by Menger's Theorem, there are at least $\floor{3f/2}+1$ node-disjoint paths between any two nodes $u, v$.
For any $f\geq 0$, we have
$\floor{3f/2}+1\geq f+1$.
Thus, there are at least $f+1$ node-disjoint $uv$-paths, of which at least one path must exclude $F$, since $\abs{F} \le f$.
%
\end{proof}

The next lemma states that the choice of sets $A_v$ and $B_v$ in step (c) ensures that the node-disjoint paths used in that step indeed exist.

\begin{lemma} \label{lemma propagates}
	For any non-faulty node $v$, and any given phase with the corresponding set $F$, in step (c), if $v \in B_v$, then there exist $f+1$ node-disjoint $A_v v$-paths that exclude $F$.
\end{lemma}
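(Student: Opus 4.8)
The plan is to prove the claim by splitting the four cases of Step (c) into two regimes according to whether the ``small'' set ($Z_v$ or $N_v$) plays the role of $A_v$ or of $B_v$, and to use a different resource of $G$ in each regime: vertex connectivity in Cases~1 and~3, and the minimum-degree hypothesis in Cases~2 and~4. Throughout I will use that $A_v$ and $B_v$ partition $V$ and that $v\in B_v$ (so $v\notin A_v$), and that a single-edge path trivially excludes $F$ since it has no internal node.

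First, in Cases~1 and~3 I would establish two facts: (a)~$\abs{F\cap B_v}\le\floor{f/2}$, and (b)~$\abs{A_v}\ge f+1$. For~(a): in Case~1, $B_v=Z_v$ and the case condition gives $\abs{F\cap B_v}=\abs{Z_v\cap F}\le\floor{f/2}$; in Case~3, $B_v=N_v$ and $\abs{F\cap N_v}=\abs{F}-\abs{Z_v\cap F}\le f-(\floor{f/2}+1)=\ceil{f/2}-1\le\floor{f/2}$. For~(b): Cases~1 and~3 assume $\abs{N_v}>f$ and $\abs{Z_v}>f$ respectively, and $A_v$ is precisely that set. Given (a) and (b), I would delete $D:=(F\cap B_v)\setminus\{v\}$ from $G$; since $\abs{D}\le\abs{F\cap B_v}\le\floor{f/2}<\floor{3f/2}+1$, the residual graph $G'=G-D$ is $(\floor{3f/2}+1-\abs{D})$-connected, hence at least $(f+1)$-connected. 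All of $A_v$ and $v$ survive in $G'$, and every remaining node of $F$ lies in $A_v\cup\{v\}$. Adjoining a super-source adjacent to all of $A_v$ and applying Menger's Theorem then yields $\min(\abs{A_v},f+1)=f+1$ node-disjoint $A_v v$-paths in $G'$; after shortcutting each path so that it meets $A_v$ only at its start, its internal nodes avoid $A_v\cup\{v\}$ and therefore avoid $F$, giving $f+1$ node-disjoint $A_v v$-paths that exclude $F$.

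Next, in Cases~2 and~4 the side $B_v$ containing $v$ is small: Case~2 assumes $\abs{N_v}\le f$ with $B_v=N_v$, and Case~4 assumes $\abs{Z_v}\le f$ with $B_v=Z_v$, so in both $\abs{B_v}\le f$. Since $v\in B_v$ and $v$ is not its own neighbor, at most $\abs{B_v}-1\le f-1$ neighbors of $v$ lie in $B_v$; as $v$ has at least $2f$ neighbors, at least $2f-(f-1)=f+1$ of them lie in $A_v$. Each such neighbor $a\in A_v$ gives the single edge $av$, an $A_v v$-path with no internal node (hence vacuously excluding $F$), and distinct neighbors give paths sharing only $v$. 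This produces $f+1$ node-disjoint $A_v v$-paths excluding $F$ directly, without invoking connectivity at all.

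The main obstacle is precisely Cases~2 and~4. There $B_v$ may contain as many as $f$ nodes of $F$, so the naive route---delete all of $F$ and invoke connectivity---leaves only a $(\floor{3f/2}+1-f)=(\floor{f/2}+1)$-connected graph, too weak to guarantee $f+1$ paths. The key realization that unlocks the proof is that these are exactly the cases in which $v$'s own side $B_v$ is small, so the minimum-degree condition (rather than connectivity) must supply the $f+1$ node-disjoint paths, realized as length-one edges into $A_v$. Once this split is seen, verifying (a)--(b) for Cases~1 and~3 and confirming $\abs{B_v}\le f$ for Cases~2 and~4 is routine arithmetic with $\floor{\cdot}$ and $\ceil{\cdot}$.
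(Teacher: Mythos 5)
Your proposal is correct and follows essentially the same route as the paper's proof: in Cases 1 and 3 you delete $B_v \cap (F - \{v\})$ (the paper's $B'_v$), use the resulting $(f+1)$-connectivity to obtain a fan of $f+1$ node-disjoint $A_v v$-paths, and ensure nodes of $A_v \cap F$ appear only as endpoints (the paper does this by choosing the endpoint set $A'_v \supseteq A_v \cap F$; you do it by shortcutting at the last $A_v$-node, which is the same standard fan-lemma argument), while in Cases 2 and 4 both proofs use the minimum-degree bound to find $f+1$ neighbors of $v$ in $A_v$ and take single-edge paths. The case arithmetic ($\abs{N_v \cap F} \le \floor{f/2}$ via $f = \floor{f/2} + \ceil{f/2}$) matches the paper's exactly.
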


The formal proof of the above lemma is presented in Appendix \ref{section proofs sufficiency}. Observe that there are four distinct cases in step (c) for determining sets $A_v$ and $B_v$. The proof of the above claim in cases 1 and 3 in step (c) follows from $(\floor{3f/2}+1)$-connectivity of graph $G$, while the proof in cases 2 and 4 follows from the fact that the minimum degree
of $G$ is at least $2f$.

In step (c), observe that if node $v$ modifies its state $\gamma_v$, then
it must have received identical value along $f+1$
node-disjoint $A_v v$ paths.
So, at least one of these path must not only be fault-free, but also have
non-faulty endpoints.
The proof of Lemma \ref{lemma validity} in Appendix \ref{section proofs sufficiency} relies on this observation.

Now consider Lemma \ref{lemma agreement}.
The correctness of this lemma relies on the local broadcast property and rule (ii) used in flooding.
Suppose that set $F$ in a certain phase contains all the faulty nodes. Since the paths used in step (b) of the algorithm exclude set $F$, these paths
are fault-free (i.e., none of their internal nodes are faulty).
Then, the earlier discussion of flooding implies that any two non-faulty nodes $u,v$ will obtain $Z_u=Z_v$ and $N_u=N_v$ in step (b) of this phase. By a similar argument, all the paths used in step (c) of this phase are also fault-free, and any two non-faulty nodes will end step (c) of this phase with an identical state. A complete proof of Lemma \ref{lemma agreement} is presented in Appendix \ref{section proofs sufficiency}.
			
		\subsection{An Efficient Algorithm} \label{section efficienct}
The number of phases in Algorithm \ref{algorithm consensus} is exponential in $f$, since there exists one phase corresponding to each choice of set $F$ such that $|F|\leq f$.
When the communication graph $G$ is $2f$-connected, we have developed an efficient algorithm that requires $O(n)$ time.
Although in general $2f$-connectivity is a stronger requirement on graph $G$ as compared to the requirement in Theorem \ref{theorem sufficiency}, observe that when $f=1,2$ these two requirements are equivalent.

\begin{theorem} \label{theorem efficient algorithm}
	Under the local broadcast model, if $G$ is $2f$-connected, then Byzantine consensus tolerating at most $f$ Byzantine faults is achievable on graph $G$ in $O(n)$ synchronous rounds.
\end{theorem}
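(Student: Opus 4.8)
The plan is to eliminate the exponential enumeration of candidate fault sets $F$ that makes Algorithm~\ref{algorithm consensus} slow, by using the extra connectivity to build a single reliable-communication primitive that works \emph{without} guessing $F$. The leverage provided by $2f$-connectivity is twofold: by Menger's Theorem it yields $2f$ node-disjoint $uv$-paths between every pair $u,v$, and it forces $n \ge 2f+1$, so the node count already meets the threshold at which consensus is solvable in the \emph{global broadcast} model \cite{Rabin:1989:VSS:73007.73014}. My target is therefore a reduction: simulate a consistent broadcast channel on top of $G$ in $O(n)$ rounds, and then decide by a single local majority vote.

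First I would define and implement the dissemination tool. Each node floods its value essentially as in step~(a), reusing the local-broadcast rules (i)--(iv) that forbid a relay from attaching two different values to the same path label; flooding completes in $O(n)$ rounds since paths revisiting a node are discarded. I would then fix a receiver decision rule based on the values arriving along node-disjoint $F$-free paths and establish two properties. \textbf{Reliability:} if the source is non-faulty, every non-faulty node recovers the source's value, because at most $f$ of the $2f$ disjoint paths contain a faulty internal node. \textbf{Consistency:} even if the source is faulty, all non-faulty nodes agree on a common value for it; this is exactly where local-broadcast overhearing is indispensable, as in the $5$-node example, since a non-faulty neighbor of a tampering relay detects the conflict and this detection itself propagates. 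Lemma~\ref{lemma paths} and $2f$-connectivity guarantee the required $F$-free and node-disjoint paths exist throughout.

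Given this tool, the network behaves like a complete graph equipped with a consistent broadcast primitive among $n \ge 2f+1$ nodes, and I would finish with a one-shot argument rather than an iterated protocol. A single $O(n)$-round flooding of inputs gives every non-faulty node the \emph{same} value for every node (by consistency), and I let each node output the majority of these $n$ values. \emph{Agreement} is immediate because all non-faulty nodes compute from an identical snapshot. \emph{Validity} follows from $n \ge 2f+1$: if all non-faulty inputs equal $b$ then $b$ gets at least $n-f \ge f+1$ votes against at most $f$, so the majority is $b$; otherwise both binary values are non-faulty inputs and any output is valid. The entire cost is the single flooding pass, so the algorithm runs in $O(n)$ rounds.

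I expect the main obstacle to be the \emph{consistency} half of the tool under only $2f$-connectivity rather than $2f+1$. With $2f$ disjoint paths a faulty coalition can corrupt up to $f$ of them, leaving a potential $f$-versus-$f$ tie that plain majority over disjoint paths cannot break; the argument must instead formalize how a non-faulty relay's overhearing of a downstream tamper yields a propagating accusation that forces every non-faulty node onto the same value. Making this propagation reasoning rigorous -- which is what the $\propagate{F}$ and $\connect{F}$ relations are meant to capture -- while confining the whole construction to one $O(n)$-round flooding pass, rather than reintroducing a per-fault-set blow-up, is the crux of the proof.
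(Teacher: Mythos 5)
Your reduction hinges on a primitive that cannot be realized under $2f$-connectivity, and you have its easy and hard halves backwards. \emph{Consistency} for a \emph{faulty} source is the easy half: if the source $u$ is faulty, at most $f-1$ \emph{other} nodes are faulty, so at least $f+1$ of the $2f$ node-disjoint $uv$-paths are fault-free; since the flooding rules already prevent $u$ from equivocating, every node receives $u$'s unique value on $f+1$ disjoint paths -- this is precisely Lemma \ref{lemma faulty node can not hide} in the paper. The half that fails is your \emph{Reliability} claim: when the source is \emph{non-faulty}, all $f$ faulty nodes may sit as internal nodes on $f$ of the $2f$ disjoint paths and deliver a common fabricated value, producing exactly the $f$-versus-$f$ tie you describe in your final paragraph (you label it a consistency problem, but a tie of that size can only occur for a non-faulty source). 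Since Reliability is what your one-shot argument leans on -- agreement needs every non-faulty node to hold an \emph{identical} snapshot of all $n$ values, and validity needs the $n-f \ge f+1$ entries of non-faulty nodes to be recorded \emph{correctly} -- the ``single flooding pass plus majority'' reduction collapses. Your proposed repair (overheard tampering generates accusations that force all non-faulty nodes onto one value) does not close the gap either: accusations are themselves messages subject to tampering, and for every genuine accusation by an honest witness adjacent to a tamperer, the faulty coalition can manufacture a symmetric slander against an honest relay on the other $f$ paths, so the stuck receiver faces the same tie one level up.

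The paper's proof (Algorithm \ref{algorithm efficient}) turns the same overhearing idea into a weaker but sufficient dichotomy rather than a universal snapshot. A second flooding phase, in which each node reports everything its neighbors forwarded, yields Lemma \ref{lemma identify faulty nodes}: if some node $v$ reliably receives $u$'s value while some node $w$ does not, then $v$ can pinpoint the first tamperer on each of the $f$ corrupted $uw$-paths and thereby learn the identity of \emph{all} $f$ faulty nodes. Nodes thus split into type A (know the entire faulty set) and type B (do not), and Lemma \ref{lemma efficient agreement} shows all type B nodes reliably received exactly the same set of values -- a common \emph{partial} snapshot which, by Lemma \ref{lemma efficient validity}, has at least $2f+1$ entries (each node plus its $\ge 2f$ neighbors), enough for a valid majority. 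Crucially, a node that misses a value never recovers it; instead, type B nodes decide among themselves, and a third phase ships that decision to the type A nodes, who, knowing the faulty set, accept it along a path they can certify fault-free. To rescue your write-up you would need to replace the unattainable Reliability property with this dichotomy and add the decision-propagation phase; the cost is still three floodings, i.e., $O(n)$ rounds.
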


We prove Theorem \ref{theorem efficient algorithm} constructively by giving an efficient algorithm, which we present in Appendix \ref{section efficient algorithm}.
Using the example in Figure \ref{figure graphs}(a), we now illustrate a tool used in that algorithm, which exploits the $2f$-connectivity.
The graph in Figure \ref{figure graphs}(a) is $2$-connected, i.e., $2f$-connected for $f=1$.
Similar to our algorithm in the previous section, suppose that node 1 floods value $b$, which is propagated to node 4 along path 1-2-3-4.
When node 2 forwards the value to node 3, all of node 2's neighbors hear the forwarded message. Similarly, when node 3 forwards the message to node 4, all of node 3's neighbors hear the forwarded message.

Suppose now that we ask each node to additionally ``report'' on its neighbors by flooding any messages heard/received in the above propagation from node 1 to node 5. Then the message forwarded by node 4, as overheard by node 3, will be flooded by node 3. Since the graph $G$ is $2f$-connected, node 4 has at least $2f$ neighbors, each of which will take similar steps.  Now consider two cases:
\begin{itemize}
    \item Node 3 is faulty:
    Since node 3 is faulty, there are at most $f-1$ other faulty nodes.
    In this case, $2f$-connectivity implies that there are at least $f+1$ node-disjoint fault-free paths from neighbors of node 3 to node 1.
    Thus, node 1 can receive $f+1$ identical and correct reports of all messages transmitted by node 3.
    Therefore, node 1 can correctly determine the message forwarded by node 3.
    In general, each node can correctly learn messages transmitted by any faulty node, when the connectivity is $2f$.
    
    \item Node 3 is non-faulty:
    In this case, it is possible that node 1 does not receive identical reports about node 3's messages on $f+1$ node-disjoint paths.
    This inability to receive $f+1$ identical reports, however, allows node 1 to infer that node 3 must be non-faulty.
\end{itemize}
In summary, as illustrated above, each node can observe all messages sent by any faulty node. Also, each node can either observe all messages sent by
another
non-faulty node, or learn that it is non-faulty.
In some instances, these observations help non-faulty nodes
identify
the
faulty nodes accurately.
Our algorithm in Appendix \ref{section efficient algorithm} relies on this property to improve the time complexity.
		
	\section{Hybrid Model} \label{section hybrid}

In this section we consider a hybrid model. The hybrid model is designed to help explore the gap between the network requirements for Byzantine consensus under the point-to-point communication model and the local broadcast model.
Under the hybrid model, up to $f < n$ nodes may be Byzantine faulty. The faulty nodes are of two types:
\begin{itemize}
  \item Equivocating faulty nodes: At most $t \le f$ of the faulty nodes may equivocate, that is, they are not restricted to perform local broadcast. If $u$ is an equivocating faulty node and has neighbors $v$ and $w$,
  then node $u$ may send message $M_v$ to node $v$ without node $w$ receiving $M_v$, and similarly send message $M_w$ to node $w$ without node $v$ receiving $M_w$. Thus, equivocating faulty nodes can behave similar to the faulty nodes under the point-to-point communication model.
  \item Non-equivocating faulty nodes: Any faulty node that is not an equivocating faulty node is said to be a non-equivocating faulty node. A non-equivocating faulty node conforms to the local broadcast model. Thus, if node $u$ is a non-equivocating faulty node, and has neighbors $v$ and $w$,
  then any message $M$ transmitted by node $u$ will be received identically by $v$ and $w$ both.
\end{itemize}
Observe that when $t = 0$, the hybrid model reduces to the local broadcast model, since all the faulty nodes are restricted to perform local broadcast. On the other hand, when $t = f$, the hybrid model reduces to the classical point-to-point communication model because all the faulty nodes can equivocate.
The following theorem extends results from
Sections \ref{section necessity} and \ref{section sufficiency}
to the hybrid model.

\begin{theorem} \label{theorem hybrid}
	Under the hybrid model, Byzantine consensus tolerating at most $f$ Byzantine faulty nodes, of which at most $t$ are equivocating faulty nodes, is achievable on graph $G$ if and only if all the following conditions are true:
	\begin{enumerate}
		[label=(\roman*)]
		\item $G$ is $(\floor{3(f-t)/2} + 2t + 1)$-connected,
		\item if $t = 0$, then $G$ has minimum degree at least $2f$, and
		\item if $t > 0$, then every set of nodes $S$, such that $0 < |S| \leq t$, has at least $2f + 1$ neighbors.
	\end{enumerate}
\end{theorem}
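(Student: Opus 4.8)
The statement is an ``if and only if,'' so the plan is to prove necessity and sufficiency separately, reusing as much of the machinery of Sections \ref{section necessity} and \ref{section sufficiency} as possible. First observe the two boundary cases. When $t=0$ the three conditions collapse to exactly the hypotheses of Theorems \ref{theorem necessity} and \ref{theorem sufficiency} (condition (iii) is vacuous and $\floor{3(f-t)/2}+2t+1=\floor{3f/2}+1$), so that case is already settled; when $t=f$, condition (i) reads $(2f+1)$-connected while condition (iii), applied to sets $S$ with $S\cup N(S)=V$, forces $n\ge 3f+1$, recovering the classical point-to-point bounds of \cite{DOLEV198214}. The real content is the interpolation, and the guiding principle throughout is that each equivocating fault behaves like a point-to-point fault (``worth'' $2$ units of connectivity and demanding the strict $2f+1$ neighborhood) while each of the remaining $f-t$ non-equivocating faults behaves as in the local broadcast model (``worth'' $3/2$ units).

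For necessity I would run the state-machine / indistinguishability arguments of Lemmas \ref{lemma degree} and \ref{lemma connectivity}, but with the fault budget partitioned into an equivocating part of size $t$ and a non-equivocating part of size $f-t$. For condition (i), assume $G$ has a vertex cut $C$ with $\abs{C}\le \floor{3(f-t)/2}+2t$, and partition it as $C=E_1\cup E_2\cup D_1\cup D_2\cup D_3$ with $\abs{E_1}=\abs{E_2}=t$ and the $D_i$ sized so that any two of them sum to at most $f-t$. I would then superimpose the point-to-point and local-broadcast cut constructions: in each of two executions one of $E_1,E_2$ acts as equivocating faults (presenting conflicting values across the cut) and a sliding pair among $D_1,D_2,D_3$ acts as non-equivocating faults, so each execution uses exactly $f$ faults of which $t$ equivocate, and the two executions have distinct valid outputs yet are indistinguishable to every non-faulty node. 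Condition (ii) for $t=0$ is Theorem \ref{theorem necessity}(i) verbatim. For condition (iii) with $t>0$, suppose some $S$ with $0<\abs{S}\le t$ has at most $2f$ neighbors; I would use $S$ (within the equivocation budget) as the equivocating ``swing'' group and split $N(S)$ into halves $N_1,N_2$ of size at most $f$, corrupting one half non-equivocatingly in each side execution and gluing via $S$'s equivocation, thereby localizing the classical $n\ge 3f+1$ impossibility to $S\cup N(S)$. I expect this last argument to need the most care, precisely because one must certify that the honest sides cannot cross-check around $S$, which is exactly what the smallness of $N(S)$ relative to the fault budget buys.

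For sufficiency I would adapt Algorithm \ref{algorithm consensus}: keep the phase structure (one phase per candidate fault set $F$ with $\abs{F}\le f$) and the flooding primitive, but account for the fact that rule (ii) now forces only \emph{non-equivocating} nodes to deliver a single value on all fault-free paths. Consequently the central invariant of the local broadcast analysis---that in the good phase all non-faulty nodes compute identical $Z_v,N_v$---weakens to: in the phase where $F$ equals the true fault set, any two non-faulty $u,v$ agree on $Z_u$ and $Z_v$ \emph{except possibly on the at most $t$ equivocating nodes}. I would then re-establish the analogues of Lemmas \ref{lemma paths}, \ref{lemma propagates}, \ref{lemma validity}, and \ref{lemma agreement}, raising the path counts in step (c) to absorb the up-to-$t$ equivocating endpoints, and invoking condition (i) where $(\floor{3(f-t)/2}+2t+1)$-connectivity supplies the required node-disjoint $F$-excluding paths and condition (iii) where the $2f+1$-neighborhood guarantees enough disjoint paths emanating from a small set.

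The main obstacle is the sufficiency, specifically repairing the agreement lemma under equivocation. In the local broadcast proof, agreement in the good phase rested entirely on all non-faulty nodes seeing the same flooded vector; once equivocators can present different values on different fault-free paths, the decision rule of step (c)---its four-way case split and the $\floor{f/2}$ thresholds---must be re-derived so that the at most $t$ coordinates on which honest nodes may legitimately disagree cannot flip the outcome, and this is exactly where the extra $2t$ connectivity and the strengthened neighborhood condition must be spent. Making the thresholds and path counts line up so that validity, the state-carrying invariant, and agreement all hold simultaneously in the good phase is the delicate part; by contrast the necessity arguments, aside from the localization needed for condition (iii), are fairly routine adaptations of the cut constructions already used in the paper.
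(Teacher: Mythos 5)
Your outline for condition (i) is essentially the paper's own proof of Lemma \ref{lemma hybrid connectivity}: the cut is partitioned into three "local-broadcast" parts of pairwise size at most $f-t$ plus two "equivocating" parts of size at most $t$, and sliding pairs of the former combined with one of the latter produce indistinguishable executions. However, your construction for the necessity of condition (iii) has a genuine flaw. You make $S$ itself the equivocating swing group and split $N(S)$ into two halves $N_1, N_2$ of size at most $f$, corrupting $N_i$ non-equivocatingly in side execution $i$. The obstacle is the honest outside world $W = V - S - N(S)$: to invoke validity you need \emph{every} non-faulty node, including $W$, to hold input $0$ in one side execution and input $1$ in the other; but $W$ is also non-faulty in the gluing execution, where it holds a single input, and since $W$ is not adjacent to $S$, no amount of equivocation by $S$ can make $W$'s behavior toward $N_1$ consistent with the all-$1$ world while its behavior toward $N_2$ is consistent with the all-$0$ world. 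Concretely, if $W$ has input $0$ in the gluing execution then the indistinguishability with the all-$1$ side execution fails for $N_1$ (or else validity in that execution is not forced), so no contradiction results. The paper's Lemma \ref{lemma hybrid degree} avoids this by partitioning $N(S)$ into \emph{four} parts $F^1, F^2, R, T$ with $\abs{F^1},\abs{F^2} \le f-t$ and $\abs{R},\abs{T}\le t$, assigning equivocation to the neighborhood part $T$ (not to $S$), making $S$ itself \emph{non-equivocating} faulty together with $F^1$ in the all-$1$ execution and $F^2 \cup R$ faulty in the all-$0$ execution; the contradiction is then between the honest sets $S$ (output $0$) and $R$ (output $1$) in the gluing execution, both of which are enclosed by faulty or correctly-simulated neighbors.

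On sufficiency there is a second gap. You keep one phase per candidate set $F$ with $\abs{F}\le f$ and propose to weaken the good-phase invariant to "non-faulty nodes agree on $Z_v$ except on at most $t$ equivocating coordinates," repairing step (c) by raising path counts and re-deriving thresholds. As stated this does not go through: an equivocating node has no well-defined flooded value at all (it can deliver different values along different fault-free paths to different nodes), so the residual disagreement is not bounded noise but is unresolvable; and once two non-faulty nodes hold different $Z_v$'s they can fall into \emph{different cases} of the four-way split in step (c), choose different $A_v, B_v$, and the entire agreement argument collapses --- this is precisely the part you defer as "delicate" without resolving it. The paper's Algorithm \ref{algorithm hybrid consensus} resolves it by enlarging the guess rather than the thresholds: phases range over \emph{pairs} $(F,T)$ with $\abs{T}\le t$, $F\subseteq V-T$, $\abs{F}\le f-\abs{T}$; the sets $Z_v, N_v$ are computed only over $V - T$, every path must exclude $F\cup T$, and the case thresholds use $\varphi = f-\abs{T}$. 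In the phase where $(F,T)$ equals the actual pair of fault sets, the equivocators are excluded from the bookkeeping entirely, exact agreement $Z_u = Z_v$ is restored, and the local-broadcast analysis carries over (with condition (iii) entering in Case 2 of Lemma \ref{lemma hybrid propagates} via an auxiliary root-vertex cut argument, not just a disjoint-paths count). Guessing the equivocating set alongside $F$ is the missing ingredient your sketch would need to become a proof.
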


Observe that when $t=0$, condition (ii) lower bounds the number of neighbor of each vertex.
On the other hand, when $t > 0$, condition (iii) lower bounds the number of neighbors of each subset of nodes of size at most $t$.
Recall that neighbors of $S$ are nodes outside of $S$ that have an edge to some node in $S$.
Theorem \ref{theorem hybrid} is proved in Appendix \ref{section proofs hybrid}. Consider three cases:
\begin{itemize}
    \item When $t = 0$, as noted earlier, the hybrid model reduces to the local broadcast model. In this case, condition (iii) imposes no restrictions, and conditions (i) and (ii) reduce to the graph requirements in Theorems \ref{theorem necessity} and  \ref{theorem sufficiency}.
    \item When $t = f$, the hybrid model reduces to the point-to-point communication model. In this case, condition (ii) imposes no restrictions, condition (i) requires $G$ to be $(2f+1)$-connected, and condition (iii) implies $n \ge 3f+1$.
    \item When $0 < t < f$, the above theorem provides insights into the trade-off between equivocation and network requirements.
\end{itemize}
    
The necessity and sufficiency of the conditions in Theorem \ref{theorem hybrid} is proved using similar techniques as the proofs of Theorems \ref{theorem necessity} and \ref{theorem sufficiency}, respectively.
The proof of Theorem \ref{theorem hybrid} appears in Appendix \ref{section proofs hybrid}.
To prove sufficiency, Algorithm \ref{algorithm consensus} is modified to obtain an algorithm for the hybrid model -- the modified algorithm is presented in Appendix \ref{section proofs hybrid sufficiency}.

	\section{Summary} \label{section discussion}
In this work, we investigated Byzantine Consensus under the local broadcast model.
We showed that $(\floor{3f/2} + 1)$-connectivity and  minimum degree at least $2f$ are together necessary and sufficient conditions to achieve Byzantine consensus in the presence of at most $f$ Byzantine faults under the local broadcast model.
%
%
The sufficiency proof is constructive.
However, the algorithm presented requires exponential synchronous rounds.
For a stronger network condition of $2f$-connectivity, an efficient algorithm that achieves consensus in linear number of rounds is presented in Appendix \ref{section efficient algorithm}.
We leave finding an efficient algorithm for the tight condition for future work.

We also considered a hybrid model where some faulty nodes may equivocate but other faulty nodes are restricted to local broadcast.
We presented necessary and sufficient conditions for this model, which provide insights into the trade-off between equivocation and network requirements.


	\bibliographystyle{abbrv}
	\bibliography{bib}

	\appendix
	\section{Proofs of Section \ref{section necessity}} \label{section proofs necessity}
In the appendices, with a slight abuse of terminology, we allow a \emph{partition} of a set to have empty parts.
That is, $(Z_1, \dots, Z_k)$ is a partition of a set $Y$ if $\bigcup_{i=1}^{k} Z_i = Y$ and $Z_i \cap Z_j = \emptyset$ for all $i \ne j$, but some $Z_i$'s can be possibly empty.

A Byzantine Consensus algorithm $\mathcal{A}$ outlines a procedure $\mathcal{A}_u$ for each node $u \in V$ that describes state transitions of $u$.
In each synchronous round, each node optionally sends messages to its neighbors, receives messages from the neighbors, and then updates its state.
The new state of $u$ depends entirely on $u$'s previous state and the messages received by $u$ from its neighbors.
The state of $u$ determines the messages sent by $u$.

%
\begin{lemma} \label{lemma degree}
	If there exists a Byzantine consensus algorithm under the local broadcast model on a graph $G$ tolerating at most $f$ Byzantine faulty nodes, then $G$ has minimum degree at least $2f$.
\end{lemma}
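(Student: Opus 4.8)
The plan is to argue by contradiction using the state-machine (indistinguishability) method that the excerpt says it employs: assuming an algorithm $\mathcal{A}$ tolerates $f$ faults on $G$ while some node $v$ has $\deg(v) \le 2f-1$, I would build a single fault-free ``composite'' execution in which two copies of $v$ are forced to decide differently, contradicting agreement. First I would fix such a $v$ and partition its neighborhood $N(v) = A \sqcup B$ into two sets with $|A| \le f$ and $|B| \le f$; this split is possible precisely because $\deg(v) = |A|+|B| \le 2f-1$, and it is the quantitative heart of the argument, since each part will play the role of a set of at most $f$ nodes that ``act faulty''.

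The core construction takes several relabelled copies of the node automata $\{\mathcal{A}_u : u \in V\}$ and wires them into one large network $S$ on which \emph{no} node is faulty. Some copies of $G$ are given all-$0$ inputs and others all-$1$ inputs, and neighbouring copies are spliced along the cut separating $v$ from the rest of the graph: a copy of $v$ receives its $A$-neighbors from a $0$-region and its $B$-neighbors from a $1$-region (and symmetrically for a second copy of $v$). The splicing must be arranged so that every automaton in $S$ has a neighborhood that is isomorphic, as a labelled set, to its neighborhood in $G$; this is essential under the local broadcast model, because a node's behavior, and in particular the single message it broadcasts to all of its neighbors in a round, is well defined only relative to its $G$-neighborhood. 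Running $\mathcal{A}$ on $S$, termination forces every copy to decide, and agreement applied within the connected fault-free system $S$ forces all copies to decide the same value. The contradiction then comes from projecting $S$ back onto $G$: for the copy of $v$ sitting in a $0$-region, its local view coincides with an execution of $\mathcal{A}$ on $G$ in which the $B$-side neighbors (a set of size $\le f$) are the faulty nodes and every non-faulty node has input $0$, so by validity this copy must output $0$; symmetrically, a copy of $v$ in a $1$-region is forced to output $1$. Since both copies live in the same fault-free $S$, this violates agreement.

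The main obstacle, and the reason the naive point-to-point gluing does not transfer, is the local broadcast consistency requirement: a node designated ``faulty'' in a projected execution must still broadcast one and the same message to all of its neighbors, so a boundary node cannot simultaneously present a ``$1$-world face'' to $v$ and a ``$0$-world face'' to its remaining neighbors. Avoiding this contamination is exactly what dictates the shape of $S$: the worlds must be chained together so that each projection disturbs only a genuine separator of size at most $f$, with the two sides otherwise communicating solely through that separator. I would therefore spend most of the effort verifying three properties of the construction, namely that (i) $S$ is a legitimate fault-free local-broadcast execution, (ii) each node's broadcasts in $S$ are consistent with its $G$-behavior so that the projections are valid, and (iii) each projection corrupts at most $f$ nodes, the last being guaranteed by $|A|,|B|\le f$. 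Once these are in place, validity and agreement close the argument.
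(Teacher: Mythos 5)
Your high-level strategy (a composite network of relabelled copies of the automata, plus projections back onto $G$) is the same family of argument the paper uses, and you correctly identify the local-broadcast contamination issue as the central obstacle. However, there is a fatal logical flaw in how you close the argument: you invoke \emph{termination} and \emph{agreement} for the execution on the composite system $S$ itself ("agreement applied within the connected fault-free system $S$ forces all copies to decide the same value", and again "since both copies live in the same fault-free $S$, this violates agreement"). The algorithm $\mathcal{A}$ carries no guarantees whatsoever on $S$: its agreement, validity, and termination properties are promised only for executions on the graph $G$ with at most $f$ Byzantine faults, and $S$ is a different (larger) network. The paper is explicit about this point: for its composite network $\mathcal{G}$ it notes that "it is not guaranteed that nodes in $\mathcal{G}$ will decide on the same value or that the algorithm will terminate," and indeed the whole proof consists of showing that the copies do \emph{not} agree. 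Even termination of the copies in $S$ has to be \emph{derived} by exhibiting, for each copy, a projected execution on $G$ in which it plays a non-faulty role. So the step that actually produces your contradiction is exactly the step you are not entitled to.

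The missing idea is a \emph{middle} projection that localizes the disagreement inside a genuine execution on $G$. In the paper's proof, the neighborhood of the low-degree node $z$ is split asymmetrically into $F^1$ and $F^2$ with $|F^1| \le f-1$ and $|F^2| \le f$, there is only \emph{one} copy of $z$ in the composite network, and three executions on $G$ are used: $E_1$ (faulty set $F^2$, validity pins the $0$-side copies to output $0$), $E_3$ (faulty set $F^1 \cup \{z\}$ --- this is why $|F^1| \le f - 1$ is needed, a constraint your symmetric split $|A|, |B| \le f$ does not provide --- validity pins the $1$-side copies to output $1$), and the crucial $E_2$, in which only $F^1$ is faulty and the non-faulty nodes $z$ and $W \cup F^2$ are modelled by copies already pinned to $0$ and $1$ respectively; agreement then fails \emph{in $E_2$}, an execution on $G$ with at most $f$ faults, which is the contradiction. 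Additionally, making these projections legitimate under local broadcast (your contamination problem) requires a concrete device your sketch leaves unspecified: the paper gives boundary sets one-way (directed) links into the opposite world, so that a node declared faulty in a projection broadcasts a single message that is received identically by all of its real neighbors in $G$, while it never hears from the world it is not part of. Without that device and without the middle execution, the construction as you describe it cannot be completed into a valid proof.
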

%
\begin{proof}
	When $f = 0$, the lemma does not impose any restrictions on $G$.
	So we assume that $f > 0$.
	It is easy to show that, when $n > 1$, each node must have at least one neighbor to be able to achieve consensus.
	So in the rest of the proof we assume that the degree of each node in $G$ is at least $1$.
	Suppose for the sake of contradiction that there exists a node $z$ in $G$ of degree less than $2f$ and there exists an algorithm $\mathcal{A}$ that solves Byzantine consensus under the local broadcast model on $G$.
	Then there exists a partition $(F^1, F^2)$ of the neighborhood of $z$ such that $\abs{F^1} < f$ and $\abs{F^2} \le f$.
	Let $W = V - (F^1 \cup F^2 \cup \set{z})$ be the set of remaining nodes.
	Note that some of these sets can be possibly empty.
	However, since $n > f > 0$ and $z$ has degree at least $1$, we select these sets so that $F^2$ is necessarily non-empty.
	Recall that $\mathcal{A}$ outlines a procedure $\mathcal{A}_u$ for each node $u$ that describes $u$'s state transitions in each round.
	
	We first create a network $\mathcal{G}$ to model behavior of nodes in $G$ in three different executions $E_1$, $E_2$, and $E_3$, which we will describe later.
	Figure \ref{figure degree network} depicts $\mathcal{G}$.
	The network $\mathcal{G}$ has some directed edges, the behavior of which will be explained later.
	We denote a directed edge from $u$ to $v$ as $\overrightarrow{uv}$.
	$\mathcal{G}$ consists of two copies of each node in $W$ and a single copy of the remaining nodes.
	We denote the two sets of copies of $W$ as $W_0$ and $W_1$.
	For each node $u \in W$, we denote by $u_0$ and $u_1$ the two copies of $u$ in $W_0$ and $W_1$ respectively.
	For each edge $uv \in E(G)$, we create edges in $\mathcal{G}$ as follows:
	\begin{enumerate}[label=\arabic*),topsep=0pt,labelindent=0pt,itemsep=0pt]
		\item If $u, v \in W$, then there are two edges $u_0 v_0$ and $u_1 v_1$ in $\mathcal{G}$.
		These edges are not shown in Figure \ref{figure degree network}.
		\item If $u, v \in V - W$, then there is a single edge $uv$ in $\mathcal{G}$.
		Some of these edges are also not shown in Figure \ref{figure degree network}.
		\item If $u \in F^1$ and $v \in W$, then there are two edges $uv_0$ and $\overrightarrow{u v_1}$ in $\mathcal{G}$.
		In Figure \ref{figure degree network}, these edges are illustrated by a single undirected edge between sets $F^1$ and $W_0$, and a single directed edge from $F^1$ to $W_1$.
		\item If $u \in F^2$ and $v \in W$, then there are two edges $\overrightarrow{uv_0}$ and $u v_1$ in $\mathcal{G}$.
		In Figure \ref{figure degree network}, these edges are illustrated by a single directed edge from $F^2$ to $W_1$, and a single undirected edge between sets $F^2$ and $W_0$.
	\end{enumerate}
	Note that there are no edges between $W_0$ and $W_1$.
	In Figure \ref{figure degree network}, this is emphasized by drawing a cross on a dotted line between $W_0$ and $W_1$.
	
	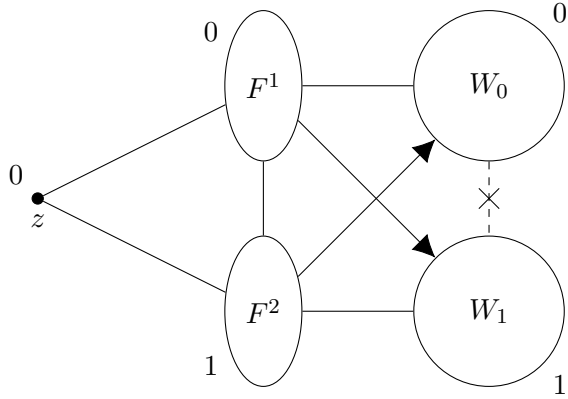
\begin{figure}[t]
		\centering
		\begin{tikzpicture}
		\node[draw, circle, fill, inner sep=1.5pt, label={below:$z$}, 
		label={above left:$0$}] 
		at (0, 0) (z) {};
		\node[draw, ellipse, minimum height=2cm, label={above left:$0$}] 
		at (3, 1.5) (F^1) {$F^1$};
		\node[draw, ellipse, minimum height=2cm, label={below left:$1$}] at (3, -1.5) (F^2) {$F^2$};
		\node[draw, circle, minimum size=2cm, label={above right:$0$}] at (6, 1.5) (W_0) {$W_0$};
		\node[draw, circle, minimum size=2cm, label={below right:$1$}] at (6, -1.5) (W_1) {$W_1$};
		
		\draw[dashed] (W_0) to node[cross out,draw,solid]{} (W_1);
		
		\draw[-{Latex[width=3mm,length=3mm]}] (F^1) to (W_1);
		\draw[-{Latex[width=3mm,length=3mm]}] (F^2) to (W_0);
		
		\draw[-] (z) to (F^1);
		\draw[-] (z) to (F^2);
		\draw[-] (F^1) to (F^2);
		\draw[-] (F^1) to (W_0);
		\draw[-] (F^2) to (W_1);
		\end{tikzpicture}
		\caption{
			Network $\mathcal{G}$ to model executions $E_1$, $E_2$, and $E_3$.
			The edges in $\mathcal{G}$ are described in the text.
			Edges within the sets are not shown while edges between sets/nodes are depicted as single edges.
			The crossed dotted line between $W_0$ and $W_1$ emphasizes that there are no edges between $W_0$ and $W_1$.
			The numbers adjacent to the sets/nodes are the corresponding inputs in execution $\mathcal{E}$.
			Table \ref{table degree} illustrates which nodes in $\mathcal{E}$ model the corresponding nodes in $E_1$, $E_2$, and $E_3$.
		}
		\label{figure degree network}
	\end{figure}
	
	All message transmissions in $\mathcal{G}$ are via local broadcast, as follows.
	When a node $u$ in $\mathcal{G}$ transmits a message, the following nodes receive this message identically: each node with whom $u$ has an undirected edge and each node to whom there is an edge directed away from $u$.
	Note that a directed edge $e = \overrightarrow{uv}$ behaves differently for $u$ and $v$.
	All messages sent by $u$ are received by $v$.
	No message sent by $v$ is received by $u$.
	Observe that with this behavior of directed edges, the structure of $\mathcal{G}$ ensures the following property.
	For each edge $uv$ in the original graph $G$, each copy of $u$ receives messages from exactly one copy of $v$ in $\mathcal{G}$.
	This allows us to create an algorithm for $\mathcal{G}$ corresponding to $\mathcal{A}$ as follows.
	For each node $u \in G$, if $\mathcal{G}$ has one copy of $u$, then $u$ runs $\mathcal{A}_u$.
	Otherwise there are two copies $u_0$ and $u_1$ of $u$.
	Both $u_0$ and $u_1$ run $\mathcal{A}_u$.
	
	Consider an execution $\mathcal{E}$ of the above algorithm on $\mathcal{G}$ as follows.
	Each node in $W_0 \cup F^1 \cup \set{z}$ has input $0$ and the remaining nodes have input $1$.
	Observe that it is not guaranteed that nodes in $\mathcal{G}$ will decide on the same value or that the algorithm will terminate.
	We will show that the algorithm does indeed terminate but nodes do not reach agreement in $\mathcal{G}$, which will be useful in deriving the desired contradiction.
	We use $\mathcal{E}$ to describe three executions $E_1$, $E_2$, and $E_3$ of $\mathcal{A}$ on the original graph $G$ as follows (see also Table \ref{table degree}).
	\begin{enumerate}[label=$E_{\arabic*}$:,topsep=0pt,labelindent=0pt,itemsep=0pt]
		\item
		$F^2$ is the set of faulty nodes.
		In each round, a faulty node broadcasts the same messages as the corresponding node in $\mathcal{G}$ in execution $\mathcal{E}$ in the same round.
		All non-faulty nodes have input $0$.
		Note that the behavior of non-faulty nodes in $F^1 \cup \set{z}$ and $W$ is modelled by the corresponding (copies of) nodes in $F^1 \cup \set{z}$ and $W_0$, respectively, in $\mathcal{E}$.
        Since $\mathcal{A}$ solves Byzantine consensus on $G$, nodes in $W \cup F^1 \cup \set{z}$ decide on output $0$ (by validity) in finite time.
		
		\item
		$F^1$ is the set of faulty nodes.
		In each round, a faulty node broadcasts the same messages as the corresponding node in $\mathcal{G}$ in execution $\mathcal{E}$ in the same round.
		$z$ has input $0$ and all the remaining non-faulty nodes have input $1$.
		Note that the behavior of non-faulty nodes in $F^2 \cup \set{z}$ and $W$ is modelled by the corresponding (copies of) nodes in $F^2 \cup \set{z}$ and $W_1$, respectively, in $\mathcal{E}$.
        The output of the non-faulty nodes will be described later.
		
		\item
		$F^1 \cup \set{z}$ is the set of faulty nodes.
		In each round, a faulty node broadcasts the same messages as the corresponding node in $\mathcal{G}$ in execution $\mathcal{E}$ in the same round.
		All non-faulty nodes have input $1$.
		Note that the behavior of non-faulty nodes in $F^2$ and $W$ is modelled by the corresponding (copies of) nodes in $F^2$ and $W_1$, respectively, in $\mathcal{E}$.
        Since $\mathcal{A}$ solves Byzantine consensus on $G$, nodes in $W \cup F^2$ decide on output $1$ (by validity) in finite time.
	\end{enumerate}
	
	{
		\setlength{\tabcolsep}{18pt}
		\renewcommand{\arraystretch}{1.5}
		\begin{table}[t]
			\centering
			\begin{tabular}{|c|c|c|c|c|c|}
				\hline
				\multicolumn{2}{|c|}{}
				& \multicolumn{4}{|c|}{(Sets of) nodes in network $\mathcal{G}$}\\
				\cline{3-6}
				\multicolumn{2}{|c|}{} & $z$ & $F^1$ & $F^2$ & $W$ \\
				\hline
				\multirow{3}{*}{Executions on graph $G$}& $E_1$ & $z$ & $F^1$ & {\color{red}$F^2$} & $W_0$ \\
				\cline{2-6}
				& $E_2$ & $z$ & {\color{red}$F^1$} & $F^2$ & $W_1$ \\
				\cline{2-6}
				& $E_3$ & {\color{red}$z$} & {\color{red}$F^1$} & $F^2$ & $W_1$ \\
				\hline
			\end{tabular}
			\caption{
				This table illustrates which nodes in execution $\mathcal{E}$ on network $\mathcal{G}$ (Figure \ref{figure degree network}) model the corresponding nodes in the three executions $E_1$, $E_2$, and $E_3$ on graph $G$ for proof of Lemma \ref{lemma degree}.
				The entries in red indicate faulty nodes in $E_1$, $E_2$, and $E_3$.
			}
			\label{table degree}
		\end{table}
	}
	
	Due to the output of nodes in $W \cup F^1 \cup \set{z}$ in $E_1$, the nodes in $W_0 \cup F^1 \cup \set{z}$ output $0$ in $\mathcal{E}$.
	Similarly, due to the output of nodes in $W \cup F^2$ in $E_3$, the nodes in $W_1 \cup F^2$ output $1$ in $\mathcal{E}$.
	It follows that in $E_2$, nodes in $W$ and $F^2$, as modeled by $W_1$ and $F^2$ in $\mathcal{E}$, output $1$ while $z$ outputs $0$.
	Recall that, by construction, $F^2$ is non-empty.
	This violates agreement, a contradiction.
\end{proof}

\begin{lemma} \label{lemma connectivity}
	If there exists a Byzantine consensus algorithm under the local broadcast model on a graph $G$ tolerating at most $f$ Byzantine faulty nodes, then $G$ is $(\floor{3f/2} + 1)$-connected.
\end{lemma}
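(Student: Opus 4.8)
The plan is to argue by contradiction, reusing the state-machine / copy-graph technique from the proof of Lemma \ref{lemma degree}. Suppose $\mathcal{A}$ solves Byzantine consensus on $G$ under the local broadcast model tolerating $f$ faults, yet $G$ is not $(\lfloor 3f/2 \rfloor + 1)$-connected. Then either $n \le \lfloor 3f/2\rfloor + 1$, or there is a set $C$ of at most $\lfloor 3f/2\rfloor$ nodes whose removal splits $V \setminus C$ into nonempty sets $L$ and $R$ with no edges between them. The degenerate size case is handled directly: if $n \le \lfloor 3f/2\rfloor + 1$ and no small cut exists then $G$ must be complete, and since $\lfloor 3f/2\rfloor + 1 < 2f+1$ for $f \ge 1$, consensus is impossible even under global broadcast \cite{Rabin:1989:VSS:73007.73014}; otherwise a small cut exists and we proceed as below.

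The key combinatorial step, and the place where the threshold $\lfloor 3f/2\rfloor$ enters, is to partition $C$ into three parts $C_1, C_2, C_3$ so that every pairwise union $C_i \cup C_j$ has size at most $f$. This is possible exactly because $|C| \le \lfloor 3f/2 \rfloor$: splitting $C$ into three as-equal-as-possible parts makes the largest pairwise union equal to $|C|$ minus the smallest part, and a short check for even and odd $f$ (e.g. for $|C| = \lfloor 3f/2\rfloor$ the balanced thirds give pairwise sums at most $f$) confirms $|C_i \cup C_j| \le f$ for every pair. Each such union is therefore an \emph{admissible faulty set}, having at most $f$ nodes.

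Next, mirroring the construction of $\mathcal{G}$ in Lemma \ref{lemma degree}, I would build an auxiliary network by taking several copies of the components $L$ and $R$, glued together through single copies of the three cut parts $C_1, C_2, C_3$ and arranged cyclically, so that deleting the copies attached to any one part disconnects the ring into arcs. As in Lemma \ref{lemma degree}, local broadcast is simulated by orienting the cut-to-component edges: a directed edge $\overrightarrow{uv}$ delivers $u$'s single broadcast to $v$ without $v$'s messages reaching $u$, and the orientations are chosen so that in the composite execution $\mathcal{E}$ every real node receives from exactly one copy of each of its neighbors, while every cut node emits one value heard identically by all copies it touches, as local broadcast demands. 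Running $\mathcal{A}$ on this network produces one execution $\mathcal{E}$ with input $0$ on some copies and $1$ on the rest. Projecting $\mathcal{E}$ yields a bounded cyclic family of executions $E_1, E_2, E_3$ on the actual graph $G$, where the faulty set of each is a pairwise union $C_i \cup C_j$ (hence at most $f$ nodes, a legal execution), bracketed by the fault-free all-$0$ and all-$1$ executions that output $0$ and $1$ by validity. Consecutive executions agree on the entire view of every non-faulty node of one side ($L$ or $R$), so agreement forces equal outputs across each consecutive pair; chaining these equalities around the cycle forces a non-faulty node to output both $0$ and $1$, contradicting agreement exactly as in Lemma \ref{lemma degree}.

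The main obstacle is the auxiliary-graph construction: one must choose the number and arrangement of the copies of $L$ and $R$ and the orientation of the cut edges so that simultaneously (i) each projected execution is a legitimate local-broadcast execution of $\mathcal{A}$ with at most $f$ faults (which is precisely where the three-part, pairwise-$\le f$ split is used), (ii) the non-faulty side of each consecutive pair sees an identical view, and (iii) the forced outputs are mutually inconsistent. This is delicate exactly because, unlike the point-to-point model, a faulty cut node cannot equivocate—it emits a single value to all neighbors—so the orientation trick is what reconciles ``one broadcast value'' with the differing views the two sides must receive, and this is why the impossibility sets in at cut size $\lfloor 3f/2\rfloor$ rather than $2f$.
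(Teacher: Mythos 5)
Your plan is the paper's own approach: the same three-way split of the cut $C$ into parts whose pairwise unions have size at most $f$ (the paper takes $\abs{C^1},\abs{C^2}\le\floor{f/2}$ and $\abs{C^3}\le\ceil{f/2}$, which is exactly your balanced-thirds split), the same style of auxiliary network in which the cut parts appear as \emph{single} copies (encoding non-equivocation) while the two sides of the cut are duplicated and attached via directed edges, and the same three projected executions whose faulty sets are the pairwise unions, pinned down by validity at the two ends. Your explicit handling of the degenerate case $n\le\floor{3f/2}+1$ (complete graph, then $n\le 2f$ contradicts the known bound) is a correct addition that the paper silently skips.

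The genuine gap is that the construction you yourself call ``the main obstacle'' is never exhibited: you state the three properties the copy network must satisfy, but producing an arrangement and verifying those properties \emph{is} the proof, and without it the argument is a plan rather than a proof. The resolution is simpler than your cyclic picture suggests. Two copies of each side suffice: take $A_0,A_1,B_0,B_1$ (with $A=L$, $B=R$) and one copy of each of $C^1,C^2,C^3$; give $A_0,B_0,C^1$ input $0$ and the rest input $1$; attach $C^1$ undirected to $A_0,B_0$ and directed into $A_1,B_1$; attach $C^2$ undirected to $A_0,B_1$ and directed into $A_1,B_0$; attach $C^3$ undirected to $A_1,B_1$ and directed into $A_0,B_0$. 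Every copy then receives from exactly one copy of each neighbor, so running $\mathcal{A}_u$ at each copy gives a well-defined execution $\mathcal{E}$, and the three executions have faulty sets $C^2\cup C^3$, $C^1\cup C^3$, $C^1\cup C^2$. Two peripheral points of your plan also need repair. First, no ring closure is needed or wanted: the indistinguishability structure is a path ($E_1$ and $E_2$ share the view of $A$ via $A_0$; $E_2$ and $E_3$ share the view of $B$ via $B_1$), and validity applied \emph{directly} to $E_1$ (all non-faulty inputs $0$) and $E_3$ (all non-faulty inputs $1$) supplies the endpoint outputs. Your ``bracketing'' by genuinely fault-free all-$0$ and all-$1$ executions would fail as an indistinguishability step, since the faulty cut nodes in $E_1$ and $E_3$ do not behave like correct nodes; that bracket must simply be dropped. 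Second, the contradiction is not one node outputting both values: it is that in $E_2$ the nodes of $A$ (modeled by $A_0$) output $0$ while the nodes of $B$ (modeled by $B_1$) output $1$, violating agreement between two non-faulty nodes in a single legal execution with at most $f$ faults.
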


\begin{proof}
	Suppose for the sake of contradiction that $G$ is not $(\floor{3f/2} + 1)$-connected and there exists an algorithm $\mathcal{A}$ that solves Byzantine consensus under the local broadcast model on $G$.
	Then there exists a vertex cut $C$ of $G$ of size at most $\floor{3f/2}$ with a partition $(A, B, C)$ of $V$ such that $A$ and $B$ (both non-empty) are disconnected in $G - C$ (so there is no edge between a node in $A$ and a node in $B$).
	Since $\abs{C} \le \floor{3f/2}$, there exists a partition $(C^1, C^2, C^3)$ of $C$ such that $\abs{C^1}, \abs{C^2} \le \floor{f/2}$ and $\abs{C^3} \le \ceil{f/2}$.
	Recall that $\mathcal{A}$ outlines a procedure $\mathcal{A}_u$ for each node $u$ that describes $u$'s state transitions in each round.
	
	Similar to the proof of Lemma \ref{lemma degree}, we first create a network $\mathcal{G}$ to model behavior of nodes in $G$ in three different executions $E_1$, $E_2$, and $E_3$, which we will describe later.
	$\mathcal{G}$ consists of two copies of each node in $A$ and $B$, and a single copy of the remaining nodes.
	Figure \ref{figure connectivity network} depicts $\mathcal{G}$.
	The edges in $\mathcal{G}$ can be deduced from Figure \ref{figure connectivity network} along the lines of the proof of Lemma \ref{lemma degree}.
	However, edges within $C$ are not shown in Figure \ref{figure connectivity network}.
	They are copied exactly from $G$.
	As in proof of Lemma \ref{lemma degree}, the structure of $\mathcal{G}$ ensures that, for each edge $uv$ in the original graph $G$, each copy of $u$ receives messages from exactly one copy of $v$ in $\mathcal{G}$.
	This allows us to create an algorithm for $\mathcal{G}$ corresponding to $\mathcal{A}$ as follows.
	For each node $u \in G$, if $\mathcal{G}$ has one copy of $u$ then $u$ runs $\mathcal{A}_u$.
	Otherwise there are two copies $u_0$ and $u_1$ of $u$.
	Both $u_0$ and $u_1$ run $\mathcal{A}_u$.
	
	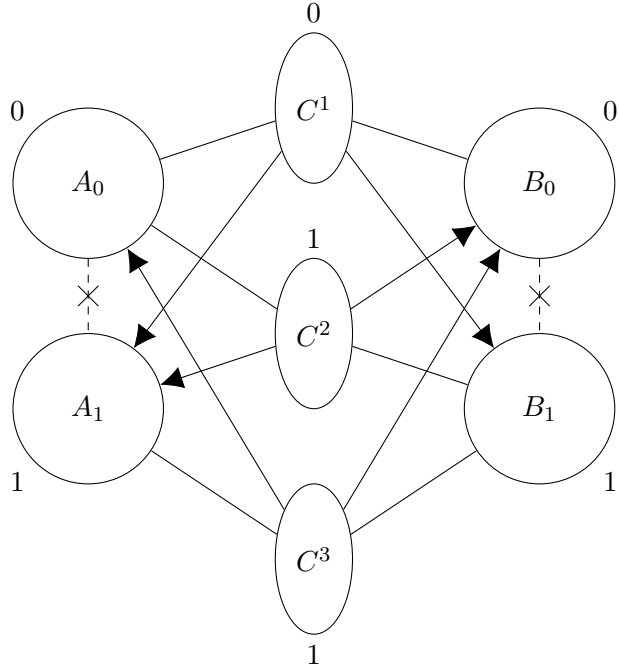
\begin{figure}
		\centering
		\begin{tikzpicture}
			\node[draw, circle, minimum size=2cm, label={above left:$0$}] at (0, 0) (A_0) {$A_0$};
			\node[draw, circle, minimum size=2cm, label={above right:$0$}] at (6, 0) (B_0) {$B_0$};
			\node[draw, circle, minimum size=2cm, label={below left:$1$}] at (0, -3) (A_1) {$A_1$};
			\node[draw, circle, minimum size=2cm, label={below right:$1$}] at (6, -3) (B_1) {$B_1$};
			\node[draw, ellipse, minimum height=2cm, label={above:$0$}] at (3, 1) (C^1) {$C^1$};
			\node[draw, ellipse, minimum height=2cm, label={above:$1$}] at (3, -2) (C^2) {$C^2$};
			\node[draw, ellipse, minimum height=2cm, label={below:$1$}] at (3, -5) (C^3) {$C^3$};
			
			\draw[dashed] (A_0) to node[cross out,draw,solid]{} (A_1);
			\draw[dashed] (B_0) to node[cross out,draw,solid]{} (B_1);
			
			\draw[-{Latex[width=3mm,length=3mm]}] (C^1) to (B_1);
			\draw[-{Latex[width=3mm,length=3mm]}] (C^1) to (A_1);
			\draw[-{Latex[width=3mm,length=3mm]}] (C^2) to (B_0);
			\draw[-{Latex[width=3mm,length=3mm]}] (C^2) to (A_1);
			\draw[-{Latex[width=3mm,length=3mm]}] (C^3) to (B_0);
			\draw[-{Latex[width=3mm,length=3mm]}] (C^3) to (A_0);
			
			\draw[-] (C^1) to (A_0);
			\draw[-] (C^1) to (B_0);
			\draw[-] (C^2) to (A_0);
			\draw[-] (C^2) to (B_1);
			\draw[-] (C^3) to (A_1);
			\draw[-] (C^3) to (B_1);
%
		\end{tikzpicture}
		\caption{
			Network $\mathcal{G}$ to model executions $E_1$, $E_2$, and $E_3$.
			Edges within the sets are not shown while edges between sets/nodes are depicted as single edges.
			The crossed dotted lines emphasize that there are no edges between the corresponding sets.
			The numbers adjacent to the sets/nodes are the corresponding inputs in execution $\mathcal{E}$.
			To reduce clutter, edges within $C$ are not shown.
		}
		\label{figure connectivity network}
	\end{figure}
	
	Consider an execution $\mathcal{E}$ of the above algorithm on $\mathcal{G}$ as follows.
	Each node in $A_0$, $B_0$, and $C^1$ has input $0$ and the remaining nodes have input $1$.
	As in proof of Lemma \ref{lemma degree}, we will show that the algorithm does indeed terminate but nodes do not reach agreement in $\mathcal{G}$, which will be useful in deriving the desired contradiction.
	We use $\mathcal{E}$ to describe three executions $E_1$, $E_2$, and $E_3$ of $\mathcal{A}$ on the original graph $G$ as follows.
	\begin{enumerate}[label=$E_{\arabic*}$:,topsep=0pt,labelindent=0pt,itemsep=0pt]
		\item
		$C^2 \cup C^3$ is the set of faulty nodes.
		In each round, a faulty node in $C^2 \cup C^3$ broadcasts the same messages as the corresponding node in $\mathcal{G}$ in execution $\mathcal{E}$ in the same round.
		All non-faulty nodes have input $0$.
		Note that the behavior of non-faulty nodes in $A$, $B$, and $C^1$ is modelled by the corresponding (copies of) nodes in $A_0$, $B_0$, and $C^1$, respectively, in $\mathcal{E}$.
        Since $\mathcal{A}$ solves Byzantine consensus on $G$, nodes in $A$, $B$, and $C^1$ decide on output $0$ (by validity) in finite time.
		
		\item
		$C^1 \cup C^3$ is the set of faulty nodes.
		In each round, a faulty node in $C^1 \cup C^3$ broadcasts the same messages as the corresponding node in $\mathcal{G}$ in execution $\mathcal{E}$ in the same round.
		$A$ has input $0$ and all the remaining non-faulty nodes have input $1$.
		Note that the behavior of non-faulty nodes in $A$, $B$, and $C^2$ is modelled by the corresponding (copies of) nodes in $A_0$, $B_1$, and $C^2$, respectively, in $\mathcal{E}$.
        The output of the non-faulty nodes will be described later.
		
		\item
		$C^1 \cup C^2$ is the set of faulty nodes.
		In each round, a faulty node in $C^1 \cup C^2$ broadcasts the same messages as the corresponding node in $\mathcal{G}$ in execution $\mathcal{E}$ in the same round.
		All non-faulty nodes have input $1$.
		Note that the behavior of non-faulty nodes in $A$, $B$, and $C^3$ is modelled by the corresponding (copies of) nodes in $A_1$, $B_1$, and $C^3$, respectively, in $\mathcal{E}$.
        Since $\mathcal{A}$ solves Byzantine consensus on $G$, nodes in $A$, $B$, and $C^3$ decide on output $1$ (by validity) in finite time.
	\end{enumerate}
	
    Due to the output of nodes in $A$, $B$, and $C^1$ in $E_1$, the nodes in $A_0$, $B_0$, and $C^1$ output $0$ in $\mathcal{E}$.
    Similarly, due to the output of nodes in $A$, $B$, and $C^3$ in $E_3$, the nodes in $A_1$, $B_1$, and $C^3$ output $1$ in $\mathcal{E}$.
	It follows that in $E_2$, nodes in $A$, as modeled by $A_0$ in $\mathcal{E}$, output $0$ while nodes in $B$, as modeled by $B_1$, output $1$.
	Recall that, by construction, both $A$ and $B$ are non-empty.
	This violates agreement, a contradiction.
\end{proof}

~

\begin{proof_of}{Theorem \ref{theorem necessity}}
	Directly from Lemmas \ref{lemma degree} and \ref{lemma connectivity}.
\end{proof_of}
	
	\section{Proofs of Section \ref{section sufficiency}} \label{section proofs sufficiency}
In this section, we assume that the graph $G$ satisfies the conditions in Theorem \ref{section sufficiency}.
The following observation follows from the rules used for flooding.

\begin{observation} \label{observation fault-free reliable}
	For any phase of Algorithm \ref{algorithm consensus}, in step (a) for any two nodes $u, v \in V$ (possibly faulty), if $v$ receives value $b$ along a fault-free $uv$-path then $u$ broadcast the value $b$ to its neighbors during flooding.
\end{observation}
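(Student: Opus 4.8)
The plan is to prove the statement by induction on the length (number of edges) of the fault-free $uv$-path $P$ along which $v$ receives value $b$. Write $P = u = x_0, x_1, \dots, x_{k-1}, x_k = v$, where by hypothesis the internal nodes $x_1, \dots, x_{k-1}$ are all non-faulty, and only the endpoints $u$ and $v$ may possibly be faulty. The inductive claim is exactly the statement of the observation specialized to paths of a given length, so that the hypothesis can be applied to the strictly shorter prefix ending at an internal node.

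For the base case $k = 1$, the path is just the edge $uv$, and ``$v$ receives value $b$ along $P$'' means $v$ obtained the flooding-initiation message $(b, \bot)$ directly from $u$, recording the path $\bot \operatorname{-} u = u$. Since under the local broadcast model every transmission is delivered identically to all of $u$'s neighbors, this single message $(b, \bot)$ is precisely what $u$ broadcast, so $u$ broadcast value $b$. Here I would also dispose of the one corner case: if a faulty $u$ stays silent and its non-faulty neighbors substitute the default message $(1, \bot)$, then $b = 1$ and, under the paper's convention that such a silent node is deemed to have flooded $1$, the conclusion still holds.

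For the inductive step with $k \ge 2$, I would peel off the last internal node $x_{k-1}$. By the definition of ``received along a path,'' node $v$ received from $x_{k-1}$ a message $(b, \Pi)$ with $\Pi = x_0 x_1 \cdots x_{k-2}$, recording reception of value $b$ along $\Pi \operatorname{-} x_{k-1} = x_0 \cdots x_{k-1}$. Because $x_{k-1}$ is non-faulty, it obeys rule (iv) of flooding, which forwards a message carrying only the exact value it received; hence $x_{k-1}$ transmitted $(b, \Pi)$ only because it had itself received value $b$ along the path $\Pi$, i.e., along the $u x_{k-1}$-path $u = x_0, x_1, \dots, x_{k-2}, x_{k-1}$. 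This sub-path has length $k-1 < k$ and is fault-free, since its internal nodes $x_1, \dots, x_{k-2}$ form a subset of the internal nodes of $P$. Applying the induction hypothesis to it yields that $u$ broadcast value $b$, completing the step.

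The crux, and the part deserving the most care, is the base case rather than the inductive step: the inductive step is a routine unwinding of rule (iv), in which the value is preserved verbatim at each non-faulty hop. The delicate points are (a) pinning down what ``$u$ broadcast value $b$'' means when $u$ is itself faulty, where I rely on the local broadcast model forcing $u$'s transmission to be a single value seen by all neighbors, together with the default-message convention for a silent faulty source, and (b) confirming that a non-faulty node never forwards a value it did not receive on the incoming edge, so that no tampering can occur along a path whose internal nodes are all non-faulty. Rules (i)--(iii) of flooding are not needed for this soundness direction; only value-preserving forwarding via rule (iv) and the local broadcast delivery guarantee drive the argument.
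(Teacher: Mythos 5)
Your proof is correct. The paper never spells out a proof of this observation---it simply asserts that it ``follows from the rules used for flooding'' (plus the convention that a silent faulty node is deemed to have flooded $1$)---and your induction on path length, unwinding rule (iv) at each non-faulty internal hop and invoking local broadcast together with the default-message convention at the source in the base case, is precisely the formalization of that implicit argument.
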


Recall that, if a faulty node does not initiate flooding, then for the purpose of the above observation, its behavior is equivalent to it flooding the value $1$.
We now give proofs of Lemmas \ref{lemma propagates}, \ref{lemma validity}, and \ref{lemma agreement}.

\begin{proof_of}{Lemma \ref{lemma propagates}}
	Fix a phase of the algorithm and the corresponding set $F$.
	Consider an arbitrary non-faulty node $v$ such that $v \in B_v$ in step (c).
	There are 4 cases to consider, corresponding to the 4 cases in step (c).
	\begin{enumerate}
		[label=Case \arabic*:,topsep=0pt,labelindent=0pt]
		\item $\abs{Z_v \cap F} \le \floor{f/2}$ and $\abs{N_v} > f$.
		Then $A_v := N_v$ and $B_v := Z_v$.
		Therefore there exist at least $f+1$ nodes in $A_v$.
		Node $v$ selects $f+1$ nodes $A'_v$ from $A_v$ by choosing all nodes from $A_v \cap F$ and the rest arbitrarily from $A_v - F$.
        Define $B'_v := B_v \cap (F - v)$.
        Now, $\abs{B'_v} \le \abs{B_v \cap F} = \abs{Z_v \cap F} \le \floor{f/2}$ by assumption of case 1.
		Since $G$ is $(\floor{3f/2}+1)$-connected, $G - B'_v$ is at least $(f+1)$-connected and there exist $f+1$ node-disjoint $A'_v v$-paths in $G - B'_v$.
		Furthermore, since all the nodes in $A_v \cap F$ are endpoints in these paths and $F = (A_v \cap F) \cup (B_v \cap F)$, we have that these paths exclude $F$\footnote{Recall that a path that excludes $X$ does not have nodes from $X$ as internal nodes; however, nodes from $X$ may be the endpoints of the path.}.
		
		\item $\abs{Z_v \cap F} \le \floor{f/2}$ and $\abs{N_v} \le f$.
		Then $A_v := Z_v$ and $B_v := N_v$.
		Note that when $f = 0$, this case is not possible since $v \in B_v$ and so $B_v = N_v$ must be non-empty.
		Since the degree of $v$ is at least $2f$ and there are at most $f$ nodes in $B_v$ (including $v$), we have that $v$ has at least $f+1$ neighbors in $A_v$.
		There are therefore $f+1$ node-disjoint $A_v v$-paths that have no internal nodes and hence exclude $F$.
		
		\item $\abs{Z_v \cap F} > \floor{f/2}$ and $\abs{Z_v} > f$.
		Then $A_v := Z_v$ and $B_v := N_v$.
		We have that
		\begin{align*}
		\abs{N_v \cap F}
		&=      \abs{F} - \abs{Z_v \cap F}  \\
		&\le    f - \floor{f/2} - 1 \\
		&\le    \floor{f/2}.
		\end{align*}
		So this case is the same as Case 1 with the roles of $Z_v$ and $N_v$ swapped.
		
		\item $\abs{Z_v \cap F} > \floor{f/2}$ and $\abs{Z_v} \le f$.
		Then $A_v := N_v$ and $B_v := Z_v$.
		From the analysis in Case 3, we have that $\abs{N_v \cap F} \le \floor{f/2}$.
		So this case is the same as Case 2 with the roles of $Z_v$ and $N_v$ swapped.
	\end{enumerate}
	In all four cases we have that there do exist $f+1$ node-disjoint $A_v v$-paths that exclude $F$.
\end{proof_of}\\

\begin{proof_of}{Lemma \ref{lemma validity}}
	Fix a phase of the algorithm and the corresponding set $F$.
	For any node $u$, we denote the state at the beginning of the phase by $\gamma_u^{\operatorname{start}}$ and the state at the end of the phase by $\gamma_u^{\operatorname{end}}$.
	Consider an arbitrary non-faulty node $v$ and the sets $A_v$ and $B_v$ in step (c).
	If $\gamma_v^{\operatorname{start}} = \gamma_v^{\operatorname{end}}$, then the claim is trivially true.
	So suppose $v \in B_v$ and $v$ receives identical value along $f+1$ node-disjoint $A_v v$-paths that exclude $F$ in step (a).
	Since the number of faulty nodes is at most $f$, thus at least one of these paths is both fault-free and has a non-faulty endpoint (other than $v$), say $u$.
	By Observation \ref{observation fault-free reliable}, it follows that whatever value is received by $v$ along this path in step (a) is the value flooded by $u$.
	Therefore, $\gamma_v^{\operatorname{end}} = \gamma_u^{\operatorname{start}}$, where $u$ is a non-faulty node, as required.
\end{proof_of}\\

\begin{proof_of}{Lemma \ref{lemma agreement}}
	Fix a phase of the algorithm and the corresponding set $F$ such that all faulty nodes are contained in $F$.
	Let $Z$ be the set of nodes that flooded $0$ in step (a) of the phase and let $N$ be the set of nodes that flooded $1$ in step (a).
	We first show that for any non-faulty node $v$, $Z_v = Z$ and $N_v = N$.
	Consider an arbitrary node $w \in Z$ (resp. $w \in N$) that flooded $0$ (resp. $1$) in step (a) of this phase.
	Observe that $P_{wv}$ identified in step (b) of the phase excludes $F$ and is fault-free.
	Therefore, by Observation \ref{observation fault-free reliable}, $v$ receives $0$ (resp. $1$) along $P_{wv}$ and correctly puts $w$ in the set $Z_v$ (resp. $N_v$), as required.
	
	It follows that for any two non-faulty nodes $u$ and $v$, we have that $Z_u = Z_v = Z$ and $N_u = N_v = N$.
	Thus $A_u = A_v$ and $B_u = B_v$.
	Let $A := A_u$ and $B := B_u$.
	First note that $A$ is always non-empty as follows.
	If $B$ is empty, then $A = V$ is non-empty.
	If $B$ is non-empty, then let $w \in B$ be a node in $B$.
	By Lemma \ref{lemma propagates}, there exist $f+1$ node-disjoint $A w$-paths, which implies that $\abs{A} \ge f+1$.
	Now all nodes in $A$ flooded identical value in step (a), say $\alpha$.
	If $u \in A$, then $u$'s state is $\alpha$ at the beginning of the phase and stays unchanged in step (c).
	Therefore, at the end of the phase $\gamma_u = \alpha$.
	If $u \in B$, then observe that the $f+1$ node-disjoint $A u$-paths identified by $u$ in step (c) are all fault-free.
	By Observation \ref{observation fault-free reliable}, it follows that $u$ receives $\alpha$ identically along these $f+1$ paths and so, at the end of the phase, $\gamma_u = \alpha$.
	Similarly for $v$, we have that $\gamma_v = \alpha$, as required.
\end{proof_of}
	
	\section{Efficient Algorithm for $2f$-Connected Graphs} \label{section efficient algorithm}
	
In this section we give an efficient algorithm when $G$ is $2f$-connected to prove Theorem \ref{theorem efficient algorithm}.
We start by defining reliable message transmission.

\begin{definition} \label{definition reliable receive}
	For two nodes $u, v \in V$, node $v$ reliably receives a message flooded by node $u$ if
	\begin{enumerate}[topsep=0pt, itemsep=0pt]
		\item $u = v$,
		\item $v$ is a neighbor of $u$, or
		\item $v$ receives the message identically on at least $f + 1$ node-disjoint $uv$-paths.
	\end{enumerate}
\end{definition}

Observe that if a node $u$ broadcasts a message $M$, then any node $v$ cannot reliably receive a message $M' \ne M$ from $u$.

We will say that a non-faulty node is a \emph{type A} node if it knows the identity of all $f$ faulty nodes.
Any non-faulty node that is not \emph{type A} node is said
to be a \emph{type B} node. 
Initially, all non-faulty nodes are type B nodes.
As the algorithm executes, some non-faulty nodes will discover the identity of all the faulty nodes and become type A nodes.



The algorithm proceeds in three phases.
In phase 1, each node $v$ floods its input value.
As before, it requires $v$ to first locally broadcast its input to its neighbors and then assist other nodes to flood their values by forwarding received messages from its neighbors.
Recall that flooding completes in $n$ synchronous rounds.


In phase 2, $v$ floods all the messages it hears from its neighbors in phase 1. Recall that under local broadcast model each node receives all the messages transmitted by its neighbors.
After $n$ synchronous time steps, flooding in phase 2 will complete.
These messages help identify faulty nodes if they exhibit faulty behavior (i.e. tamper with messages).
The details are in the proof of Lemma \ref{lemma faulty node can not hide}.

In phase 3, type B nodes reach consensus and inform the type A nodes on the decision.
As discussed above, type A nodes can identify fault-free paths to each type B node.
If no type B node exists, then type A nodes decide by themselves.

The formal procedure is given in Algorithm \ref{algorithm efficient} and a proof of correctness follows.

\begin{algorithm}[ht]
	\SetAlgoLined
	\SetKwFor{For}{For}{do}{end}
	\begin{enumerate}[label=Phase \arabic*:,rightmargin=\widthof{Phase}]
	    \item 
    	Flood input value. (The steps taken to achieve flooding are described in the text preceding Algorithm \ref{algorithm consensus} pseudo-code.)
    
    	\item
    	For each neighbor $u$ of node $v$, $v$ reports on the messages propagated by $u$ in phase 1.
    	After the flooding of the reports is complete, node $v$ attempts to discover the faulty nodes as below.
    	
    	For each node $w$ in $G$ such that $v$ reliably received value $b$ from $w$, do:\\
    	\hspace*{0.3in} For each node $u \ne w$, identify $2f$ node-disjoint $uw$ paths.\\
    	\hspace*{0.3in} For each such path $P$ and node $z \in P$, if $v$ reliably receives that $z$ forwarded\\
    	\hspace*{0.3in} $\bar{b} = 1 - b$ as $w$'s value in phase 1, and $z$ is the first such node in $P$, then $v$\\
    	\hspace*{0.3in} sets $z$ to be a faulty node.
    
    	\item
    	After phase 2, $v$ is either a type A node or a type B node.
    	If $v$ is a type B node, then it decides from the input values it received reliably, by taking the majority (in case of a tie, $0$ is chosen as the majority value), and then floods the decision value.
    	Otherwise $v$ is a type A node.
    	Note that a type A node knows the identity of all $f$ faulty nodes and can thus identify fault-free paths from each node.
    	So a type A node can get the input value of a non-faulty node $u$ by accepting the value it receives on a fault-free $uv$-path.
    	$v$ waits for a decision value from a non-faulty type B node.
    	If a decision value is received along a fault-free path from a non-faulty (type B) node,\\
    	\hspace*{0.3in} then $v$ decides on this value;\\
    	\hspace*{0.3in} else $v$ decides from the input values of all non-faulty nodes by taking the\\
    	\hspace*{0.3in} majority.
	\end{enumerate}
	
	\caption{
		Proposed efficient algorithm for Byzantine consensus under the local broadcast model when $G$ is $2f$-connected: Steps performed by node $v$ are shown here.
	}
	\label{algorithm efficient}
\end{algorithm}
		
The first lemma states that an adversary cannot hide a message broadcast by a faulty node.
\begin{lemma} \label{lemma faulty node can not hide}
	A message sent by a faulty node is received reliably by every node.
\end{lemma}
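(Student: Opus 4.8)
The plan is to reduce the statement to Definition \ref{definition reliable receive} and dispose of the easy cases first. Fix a faulty node $w$ together with the value $M$ it broadcasts during phase~1 (by the flooding rules a faulty node effectively puts a single value on the wire), and let $v$ be an arbitrary node. If $v = w$ or $v$ is a neighbor of $w$, then $v$ reliably receives $M$ immediately by conditions (1) and (2) of Definition \ref{definition reliable receive}: under local broadcast, $w$'s single transmission is heard identically by every neighbor. So the entire content of the lemma lies in condition (3), for a node $v$ at distance at least two from $w$.

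For that case I would exhibit $f+1$ node-disjoint $wv$-paths along each of which $v$ receives $M$. Since $G$ is $2f$-connected, Menger's Theorem gives $2f$ node-disjoint $wv$-paths. The key counting observation is that $w$ is itself faulty and is an \emph{endpoint}, not an internal node, of each of these paths; hence at most $f-1$ faulty nodes remain to appear as internal nodes, and because the paths are internally disjoint each such node lies on at most one of them. Therefore at least $2f-(f-1)=f+1$ of the paths are fault-free. Along any fault-free $wv$-path the first hop is a neighbor of $w$ that heard $M$ directly, and every internal node is non-faulty, so each forwards the value unchanged and $v$ receives exactly $M$; this is the forward companion of Observation \ref{observation fault-free reliable}. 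Thus $v$ receives $M$ on $f+1$ node-disjoint $wv$-paths, i.e.\ $v$ reliably receives $M$.

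To obtain the promised consequence — that a faulty node cannot hide \emph{any} message it transmits, including a value it forwards (possibly tampered) in phase~1 — I would apply the same reasoning one level up, through the phase~2 reports. Whatever $w$ transmits is, by local broadcast, heard identically by all its neighbors; since $G$ is $2f$-connected, $w$ has at least $2f$ neighbors, of which at most $f-1$ are faulty (as $w$ is itself one of the faulty nodes), leaving at least $f+1$ non-faulty neighbors. Each such neighbor re-floods a truthful report of $w$'s transmission in phase~2, and by the argument of the previous paragraph applied with that reporter as the source, every node reliably receives each of these $f+1$ truthful reports. By the soundness remark following Definition \ref{definition reliable receive}, a faulty node cannot manufacture a conflicting reliable report, and there are at most $f-1 < f+1$ faulty reporters, so every node ends up with a dominant set of consistent reliable reports and correctly learns what $w$ transmitted.

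The main obstacle is the counting in condition (3), and specifically the observation that yields $f+1$ rather than only $f$ fault-free paths: the faulty source must be charged as a path \emph{endpoint} rather than an internal node. If one instead tried to certify $w$'s message through paths terminating at its neighborhood, treating the neighbors collectively as the source, then $w$ would count as an internal spoiler and the bound would degrade to only $f$ disjoint fault-free paths, one short of what Definition \ref{definition reliable receive} demands; routing the certification from $w$ as the endpoint in the direct case, and from each reporting neighbor as the endpoint in the forwarded case, is exactly what restores the missing path and is where $2f$-connectivity (equivalently, minimum degree at least $2f$) is consumed. I would also verify the minor point that flooding rules (i)--(iii) never discard $M$ along a fault-free path, which holds because node-disjoint paths carry distinct path identifiers and each is simple.
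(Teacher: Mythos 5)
Your core argument (the first two paragraphs plus the closing remark about the flooding rules) is correct and is essentially the paper's own proof: dispose of the cases $v = w$ and $v$ adjacent to $w$ via Definition \ref{definition reliable receive}, then use $2f$-connectivity and Menger's theorem to obtain $2f$ node-disjoint $wv$-paths, charge the faulty source $w$ as an \emph{endpoint} so that at most $f-1$ faulty nodes remain to appear as internal nodes, and conclude that at least $f+1$ of the paths are fault-free and deliver $w$'s locally-broadcast message identically to $v$. One caveat: your third paragraph (the phase-2 ``reports'' excursion) is not part of this lemma and its claim that every node reliably receives each non-faulty neighbor's report does not follow from your counting --- with a non-faulty source only $f$ fault-free disjoint paths are guaranteed, which is precisely the asymmetry the paper itself points out and exploits --- but since the lemma concerns only messages sent by faulty nodes, this detour does not affect the validity of your proof.
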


\begin{proof}
	Let $u$ be an arbitrary faulty node and let $v$ be an arbitrary node in $G$.
	If $u = v$ or $v$ is a neighbor of $u$, then the claim is trivially true from Definition \ref{definition reliable receive}.
	Otherwise, there exist $2f$ node-disjoint paths from $u$ to $v$ as the graph is $2f$-connected.
	Since $u$ is faulty, only $f-1$ of these paths can have faulty nodes.
	Any message sent by $u$ is received identically by all its neighbors, and so $v$ will receive it identically along the remaining $f+1$ node-disjoint paths that are fault-free.
\end{proof}

Next, we want to show that all type B nodes know the same set of input values.
We need an intermediate lemma to show that.

\begin{lemma} \label{lemma identify faulty nodes}
	Let $v$ and $w$ be distinct nodes.
	In phase 1, if $v$ reliably receives an input value of some node $u$ and $w$ does not, then $v$ knows the identity of $f$ faulty nodes after phase $2$.
\end{lemma}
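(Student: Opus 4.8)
The plan is to first pin down the status of $u$ and $w$, then use $2f$-connectivity to locate all faulty nodes on a small family of disjoint paths, and finally argue that phase~2's detection rule flags every one of them. First I would establish that both $u$ and $w$ are non-faulty. If $u$ were faulty, then by Lemma~\ref{lemma faulty node can not hide} the value it broadcast in phase~1 would be received reliably by \emph{every} node, in particular by $w$, contradicting the hypothesis that $w$ does not reliably receive $u$'s value. Hence $u$ is non-faulty and floods its genuine input $b$; moreover, by the observation following Definition~\ref{definition reliable receive}, the value $b$ that $v$ reliably receives is exactly the value $u$ broadcast. For $w$: since $G$ is $2f$-connected, Menger's Theorem gives $2f$ node-disjoint $uw$-paths; if $w$ were faulty, at most $f-1$ of these could contain an internal faulty node, leaving at least $f+1$ fault-free paths, each of which (by Observation~\ref{observation fault-free reliable}) delivers $b$ to $w$, making $w$ reliably receive $b$ -- a contradiction. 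So $w$ is non-faulty as well.

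Next I would run a counting argument on the $2f$ node-disjoint $uw$-paths. Every fault-free path delivers $b$ to $w$, so if more than $f$ of the $2f$ disjoint paths were fault-free, $w$ would reliably receive $b$. Since it does not, at most $f$ are fault-free, and therefore at least $f$ of them each contain an internal faulty node. As the paths are node-disjoint and there are at most $f$ faulty nodes in total, this forces \emph{exactly} $f$ broken paths, \emph{exactly} $f$ fault-free paths, and \emph{exactly one} faulty node internal to each broken path, accounting for all $f$ faulty nodes. In particular, on each broken path the unique faulty node $z$ has only non-faulty predecessors between it and $u$, so $z$ receives $b$ but fails to relay $b$ onward.

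It then remains to show that phase~2 lets $v$ identify each such $z$. Since $v$ reliably received $b$ from $u$, it enters the detection loop with $u$ as the flooding node and the true value $b$; taking $w$ as the other endpoint yields the very family of $2f$ disjoint $uw$-paths above. On a broken path, the predecessors of $z$ are non-faulty and relay $b$, so $z$ is the \emph{first} node on that path that deviates. If $z$ relays the opposite value $\bar b$ as $u$'s value, then, being faulty, by Lemma~\ref{lemma faulty node can not hide} its phase-1 transmission is received reliably by $v$, so $v$ reliably receives that $z$ forwarded $\bar b$ and marks $z$ faulty.

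I expect the main obstacle to be \emph{completeness} against a faulty node that breaks its path by \emph{withholding} a forward rather than relaying $\bar b$: the literal rule checks only for a forwarded $\bar b$, so this case must be reduced to it, using that values are binary together with the phase-2 reports of each node's overheard (non-)transmissions, so that a withheld relay on a broken path is exposed in the same way as a spurious $\bar b$. I would also verify \emph{soundness}, namely that the rule never flags a non-faulty node: a flagged node is, by construction, the first node on some $uw$-path to forward $\bar b$ while its predecessors relayed the genuine value $b$, which can only be faulty behavior. Once every broken path's faulty node is flagged, $v$ has identified $f$ faulty nodes, hence all of them, and is therefore a type~A node after phase~2, as claimed.
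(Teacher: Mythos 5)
Your proposal is correct and takes essentially the same route as the paper's proof: non-faultiness of $u$ via Lemma~\ref{lemma faulty node can not hide}, a counting argument over the $2f$ node-disjoint $uw$-paths showing exactly $f$ of them are broken with exactly one faulty node on each (accounting for all $f$ faulty nodes), and the ``first node reliably reported to have forwarded $\bar{b}$'' rule, with the same soundness argument that nodes preceding the faulty one forwarded $b$ and so cannot be falsely flagged. The withholding obstacle you raise is genuine, but the paper elides it in exactly the same way --- its proof speaks only of ``the faulty node that tampers the message,'' implicitly extending the flooding convention that a missing message is replaced by the default value $1$ (so a withheld forward either leaves the path intact when $b=1$ or is exposed as a forwarded $\bar{b}$ when $b=0$) --- so your attempt, which additionally proves $w$ non-faulty explicitly, is if anything more careful than the paper's own argument.
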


\begin{proof}
	Fix a node $u$ with input $b \in \set{ 0, 1 }$ such that $v$ reliably received $b$ from $u$ in phase 1 but $w$ did not.
	Observe that, by Lemma \ref{lemma faulty node can not hide}, $u$ is a non-faulty node since $w$ did not reliably receive $b$ from $u$ in phase 1.
	Let $P_1, \dots, P_{2f}$ be $2f$ node-disjoint $uw$-paths.
	Since $w$ did not reliably receive $b$ from $u$ in phase 1, therefore exactly $f$ of these paths have faulty nodes.
	WLOG let these paths be $P_1, \dots, P_f$.
	By Lemma \ref{lemma faulty node can not hide}, after phase 2, $v$ reliably receives that some nodes on $P_1 \dots, P_f$ forwarded $\bar{b} = 1 - b$ in phase 1.
	For each path, $v$ sets the first such node to be faulty.

	To see why this assignment of faulty nodes by $v$ is correct, consider an arbitrary path in $P_1, \dots, P_f$.
	WLOG let this path be $P_1$.
	Let $z$ be the faulty node in $P_1$ that tampers the message.
	Observe that each of $P_1, \dots, P_f$ has exactly one faulty node that tampers the message.
	In phase 2, by Lemma \ref{lemma faulty node can not hide}, $v$ reliably receives that $z$ forwarded $\bar{b}$ in phase 1.
	Moreover, let $x$ be an arbitrary node in $P_1$ before $z$.
	Then $x$ is non-faulty and forwarded $b$ in phase 1.
	Therefore, in phase 2, $v$ can not reliably receive that $x$ forwarded $\bar{b}$ in phase 1.
\end{proof}

\begin{lemma} \label{lemma efficient agreement}
	All type B nodes reliably receive the same set of input values in phase 1.
\end{lemma}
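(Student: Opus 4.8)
The plan is to derive this lemma as an almost immediate consequence of Lemma \ref{lemma identify faulty nodes}, exploiting the defining property of a type B node: by definition it does \emph{not} know the identities of all $f$ faulty nodes. First I would pin down the precise meaning of ``the same set of input values.'' Using the observation stated immediately after Definition \ref{definition reliable receive} --- that when a node $u$ broadcasts a value, no node can reliably receive a conflicting value from $u$ --- the value (if any) a node reliably receives from a given sender $u$ is uniquely determined. Consequently a node's received set is determined entirely by the \emph{set of senders} whose input value it reliably receives, and it suffices to show that any two type B nodes reliably receive the input of exactly the same senders (the agreement on the values themselves then being automatic from the uniqueness).

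Next I would argue by contradiction. Suppose two type B nodes $v$ and $w$ (necessarily distinct) do not reliably receive from the same set of senders. Then, without loss of generality, there is a node $u$ whose input value $v$ reliably receives in phase 1 while $w$ does not. Applying Lemma \ref{lemma identify faulty nodes} to this triple shows that $v$ knows the identity of all $f$ faulty nodes after phase 2, i.e. $v$ is a type A node. This contradicts the assumption that $v$ is a type B node, and the lemma follows.

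For completeness I would note that discrepancies can only arise from \emph{non-faulty} senders: by Lemma \ref{lemma faulty node can not hide} every node reliably receives the single value broadcast by each faulty node, so faulty senders contribute identically to the received sets of $v$ and $w$. This is fully consistent with Lemma \ref{lemma identify faulty nodes}, which already forces the distinguishing sender $u$ to be non-faulty.

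I do not expect a genuine obstacle here, since the statement is essentially the contrapositive of Lemma \ref{lemma identify faulty nodes}. The only point requiring care is the reduction in the first paragraph --- verifying that ``same set of input values'' is equivalent to ``same set of senders whose value is reliably received'' --- which is guaranteed by the uniqueness observation following Definition \ref{definition reliable receive}.
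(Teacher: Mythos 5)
Your proposal is correct and follows essentially the same argument as the paper: a proof by contradiction in which a discrepancy between two type B nodes $v$ and $w$ over some sender $u$ triggers Lemma \ref{lemma identify faulty nodes}, forcing $v$ to be type A. Your additional remarks on the uniqueness of reliably received values and on faulty senders are fine but not needed beyond what the paper's two-line proof already uses.
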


\begin{proof}
	Suppose, for the sake of contradiction, that two nodes $v$ and $w$ are type B nodes and there exists a node $u$ such that $v$ reliably receives the input of $u$ and $w$ does not.
	Then, by Lemma \ref{lemma identify faulty nodes}, $v$ knows the identity of $f$ faulty nodes after phase 2.
	Therefore, $v$ must be a type A node, a contradiction.
\end{proof}

This next lemma shows that each node knows the input values of at least $2f$ other nodes.

\begin{lemma} \label{lemma efficient validity}
	Each node reliably receives input values of at least $2f$ other nodes in phase 1.
\end{lemma}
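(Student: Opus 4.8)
The plan is to exploit the fact that $2f$-connectivity forces a large minimum degree, and that a node's own neighbors are reliably received essentially by definition. First I would invoke the standard graph-theoretic fact that a $k$-connected graph has minimum degree at least $k$ (if some vertex had degree less than $k$, deleting its neighbors would isolate it). Applying this with $k = 2f$ shows that every node $v$ has at least $2f$ neighbors in $G$.

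Next I would argue that $v$ reliably receives the value flooded by each such neighbor. In the first round of the phase-1 flooding, every node broadcasts its value to all of its neighbors; in particular each neighbor $u$ of $v$ transmits its value directly to $v$. Since $v$ is a neighbor of $u$, case 2 of Definition \ref{definition reliable receive} applies immediately, so $v$ reliably receives the value flooded by $u$ without needing to appeal to node-disjoint paths. As the neighbors of $v$ are at least $2f$ in number and are all distinct from $v$ itself, this yields at least $2f$ other nodes whose flooded values $v$ reliably receives, which is the claim.

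The only point requiring care, rather than any real difficulty, is the treatment of faulty neighbors: a faulty node need not genuinely broadcast an input value. Here I would rely on the default-message convention described with the flooding rules, whereby a non-faulty node replaces a missing message from a faulty neighbor with $(1,\bot)$, so that every neighbor of $v$ effectively floods exactly one value that $v$ reliably receives. With that convention in hand, the count of $2f$ reliably received neighbor values holds regardless of which neighbors are faulty, completing the argument.
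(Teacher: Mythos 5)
Your proposal is correct and follows essentially the same route as the paper: $2f$-connectivity gives each node at least $2f$ neighbors, and each neighbor's value is reliably received directly by the neighbor clause of Definition \ref{definition reliable receive}. The extra remarks you add (the standard connectivity-implies-degree fact and the default-message convention for silent faulty neighbors) are just explicit justifications of steps the paper leaves implicit.
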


\begin{proof}
	Since the graph is $2f$ connected, therefore each node has at least $2f$ neighbors.
	By Definition \ref{definition reliable receive} each node reliably receives input from these $2f$ nodes in phase 1.
\end{proof}

We now have all the necessary lemmas to prove Theorem \ref{theorem efficient algorithm}.

\begin{proof_of}{Theorem \ref{theorem efficient algorithm}}
	Each phase of Algorithm \ref{algorithm efficient} requires at most $n$ synchronous rounds.
	Since there are only $3$ phases, so the algorithm terminates in $O(n)$ synchronous rounds.
	For validity and agreement, there are two cases to consider.

	\textbf{There is at least one type B node:}
	For agreement, note that by Lemma \ref{lemma efficient agreement} all type B nodes receive the same input values and therefore decide on the same value by taking the majority.
	A type A node (if one exists) knows the identity of all $f$ faulty nodes and so it can ignore messages on paths with faulty nodes.
	So we only need one fault-free path between a type A node and any type B node.
	Since there are at least $f$ node-disjoint fault free-paths between any two nodes, therefore in phase 3 each type A node correctly receives a message from a type B node about the final decision and decides on the same value.
	For validity, note that by Lemma \ref{lemma efficient validity} each type B node has input values of at least $2f + 1$ nodes (including its own).
	So the decided value is an input of at least $f+1$ nodes.
	Since there are at most $f$ faulty nodes, therefore the decided value is an input of at least one non-faulty node, as required.

	\textbf{There are no type B nodes:}
	Let $u$ be an arbitrary non-faulty node.
	Since there are no type B nodes, $u$ is a type A node that does not receive any decision value from any type B node in phase 3.
	Since $u$ knows the identity of $f$ faulty nodes, $u$ can identify fault-free paths to non-faulty nodes and receive untampered messages from non-faulty nodes in phase 1 and ignore messages from any path that contains a faulty node.
	Therefore, $u$ knows the input values of all non-faulty nodes.
	For agreement, observe that all non-faulty nodes are type A and each non-faulty node decides on the same decision value by taking the majority of the input values of non-faulty nodes.
	For validity, all non-faulty nodes consider the input values of non-faulty nodes.
\end{proof_of}
	
	    
	\section{Proofs of Section \ref{section hybrid}} \label{section proofs hybrid}
	    This appendix pertains to the hybrid model.
		\subsection{Necessity of the conditions in Theorem \ref{theorem hybrid}} \label{section proofs hybrid necessity}
\begin{lemma} \label{lemma hybrid degree}
	If there exists a Byzantine consensus algorithm under the hybrid model on a graph $G$ tolerating at most $f$ Byzantine faulty nodes, of which at most $t$ nodes, $0 < t \le f$, are equivocating faulty nodes,
	then every set of nodes $S$, such that $0 < |S| \leq t$, has at least $2f + 1$ neighbors.
\end{lemma}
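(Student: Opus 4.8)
The plan is to adapt the state‑machine (indistinguishability) argument used for Lemma~\ref{lemma degree}, the key new feature being that $S$ is \emph{allowed to equivocate}. I would argue by contradiction: suppose $\abs{N(S)} \le 2f$ and that some algorithm $\mathcal{A}$ solves consensus under the hybrid model. Since $\abs{N(S)} \le 2f$, I would fix a partition $N(S) = F^1 \cup F^2$ with $\abs{F^1}, \abs{F^2} \le f$, and set $W = V - (S \cup N(S))$. The crucial structural fact is that $N(S)$ separates $S$ from $W$, so no node of $W$ is a neighbor of $S$; the degenerate case $W = \emptyset$ (i.e.\ $V = S \cup N(S)$) would be handled separately, as there the disagreement is already forced among $N(S)$.

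First I would build an auxiliary local‑broadcast network $\mathcal{G}$ exactly along the lines of the proof of Lemma~\ref{lemma degree} (cf.\ Figure~\ref{figure degree network}): take two copies $W_0, W_1$ of $W$ and single copies of the nodes of $N(S)$, and orient the directed edges so that $F^1$ together with $W_0$ lives in an ``input‑$0$ world'' and $F^2$ together with $W_1$ in an ``input‑$1$ world'', each copy of a node receiving from exactly one copy of every neighbor. The new ingredient, which is what separates this lemma from Lemma~\ref{lemma degree}, is to also give $S$ \emph{two} copies $S_0, S_1$, attaching $S_0$ to the $0$‑world and $S_1$ to the $1$‑world. Because $S$ may equivocate, a single real faulty $S$ can simulate $S_0$ towards $F^1$ and $S_1$ towards $F^2$ at the same time, which is precisely the behavior permitted to a set of at most $t$ equivocating nodes.

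Next I would define three executions of $\mathcal{A}$ on $G$, each projected from one execution $\mathcal{E}$ of the induced algorithm on $\mathcal{G}$ (inputs $0$ on the $0$‑world copies, $1$ on the $1$‑world copies). In $E_1$ the faulty set is $F^2$ (size $\le f$) and every non‑faulty node has input $0$, so by validity $S, F^1, W$ all output $0$, pinning the $0$‑world copies in $\mathcal{E}$; in $E_3$ the faulty set is $F^1$ (size $\le f$) and all non‑faulty inputs are $1$, pinning the $1$‑world copies. The payoff of doubling $S$ is that $S$ is modeled by a single copy in each of $E_1$ and $E_3$, hence is \emph{non‑faulty} in both, so neither $F^1$ nor $F^2$ loses $\abs{S}$ from its budget. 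This is exactly what is forfeited in Lemma~\ref{lemma degree}, where the non‑equivocating node $z$ must be made faulty in one validity execution, forcing $\abs{F^1} \le f-1$ and hence only the weaker bound $2f$; the equivocation of $S$ restores the symmetric partition $\abs{F^1}, \abs{F^2} \le f$ and thus yields $2f+1$. In $E_2$ the faulty set is $S$ alone, equivocating (size $\le t \le f$, with at most $t$ equivocating), and the goal is to show that the $F^1$‑side non‑faulty nodes inherit the $0$‑world verdict while the $F^2$‑side inherits the $1$‑world verdict, so two non‑faulty nodes decide differently, contradicting agreement.

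The hard part will be verifying that, in $E_2$, every non‑faulty node's view coincides with that of its modeling copy in $\mathcal{E}$ even though the two worlds share neighbors in $W$: a single node of $W$ adjacent to both $F^1$ and $F^2$ cannot, by itself, deliver the $0$‑world message to one side and the $1$‑world message to the other. This is the same tension that the two copies $W_0, W_1$ and the directed edges resolve in Lemma~\ref{lemma degree}, and the heart of the present argument is to check that this wiring still produces mutually consistent, world‑separated views while keeping $S$ as the only equivocating node and all three faulty sets of size at most $f$. Executing this consistency check carefully—and confirming that it is precisely $S$'s equivocation, rather than the single non‑equivocating $z$ of Lemma~\ref{lemma degree}, that legitimizes the symmetric partition—is where I expect the main effort of the proof to lie.
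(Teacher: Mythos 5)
The gap is in execution $E_2$, and it is not merely ``the hard part'' that you defer---it is a fatal obstruction to your construction whenever $W \neq \emptyset$. In your $E_2$ the only faulty nodes are those of $S$, so every node of $N(S) \cup W$ is non-faulty and therefore bound by local broadcast: each node $w \in W$ emits a \emph{single} stream of messages received identically by all of its neighbors. But your indistinguishability argument needs the view of each node of $F^1$ to coincide with the view of its (single) copy in $\mathcal{G}$, which receives messages from $W_0$, and simultaneously the view of each node of $F^2$ to coincide with its copy's view, which receives messages from $W_1$. Since $W_0$ and $W_1$ have different inputs ($0$ and $1$), their message streams differ in general, so the real, non-faulty $W$ cannot produce both. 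In Lemma~\ref{lemma degree} this tension never actually arises: in each of the three executions the whole of $W$ is modeled by exactly one of $W_0$, $W_1$, and the only vertex wired to both worlds is $z$, which is a single copy; the directed edges serve to let the \emph{faulty} set of that execution replay its copy's behavior, never to let a \emph{non-faulty} node present different faces to different neighbors. Your $E_2$ demands exactly the latter from $W$ (and $W$ is not even allowed to be faulty there, as your fault budget is spent entirely on $S$), so no assignment of $W$ to a world makes both sides' views consistent, and no disagreement can be derived.

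The paper's proof avoids this by a structurally different construction. It keeps $S$ as a \emph{single} copy in $\mathcal{G}$ (so $S$ never equivocates in any of the three executions; it is simply a non-equivocating faulty set in $E_3$), and spends the equivocation budget on a subset of the \emph{neighborhood}: it partitions the neighborhood of $S$ into four parts $(F^1, F^2, R, T)$ with $\abs{F^1}, \abs{F^2} \le f - t$ and $\abs{R}, \abs{T} \le t$ (possible precisely because $\abs{N(S)} \le 2f = 2(f-t) + 2t$), doubles $W$ and $T$, and in $E_2$ takes the faulty set to be $F^1 \cup T$ with only $T$ equivocating: $T$ behaves as $T_0$ towards $S$ and as $T_1$ towards all other neighbors. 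Then in $E_2$ every non-faulty node other than $S$ (namely $F^2$, $R$, $W$) is modeled by a $1$-world copy, so no non-faulty node has to straddle worlds, and the disagreement is between $S$ (pinned to output $0$ by validity in $E_1$) and the non-empty set $R$ (pinned to output $1$ by validity in $E_3$). Note also that the accounting advantage you claim---that doubling $S$ lets both $\abs{F^1}$ and $\abs{F^2}$ be as large as $f$---is exactly what the obstruction above takes away; the paper reaches the same $2f$ threshold with the four-way partition, each validity execution's faulty set having size at most $(f-t) + t = f$.
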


\begin{proof}
    Consider $t$ such that $0 < t \le f$.
	Let $\varphi = f - t$.
	For the sake of contradiction, suppose that there exists a non-empty set of nodes $S$ in $G$ of size at most $t$ that has at most $2f$ neighbors and there exists an algorithm $\mathcal{A}$ that solves Byzantine consensus under the hybrid model on $G$.
	Recall that neighbors of $S$ are nodes outside of $S$ that have an edge to some node in $S$.
	It is easy to show that when $n > f \ge t$, each set of cardinality at most $t$ must have at least one neighbor to achieve consensus.
	So in the rest of the proof we assume that $S$ has at least one neighbor.
	Let $N$ be the non-empty neighborhood of $S$.
	Then there exists a partition $(F^1, F^2, R, T)$ of $N$ such that $\abs{F^1}, \abs{F^2} \le \varphi$ and $\abs{R}, \abs{T} \le t$.
	Let $W = V - (S \cup N)$ be the set of remaining nodes.
	Note that some of these sets (other than $S$) can be possibly empty.
	However, since $n > t$ and $N$ is non-empty, we select these sets so that $R$ is necessarily non-empty.
	Recall that $\mathcal{A}$ outlines a procedure $\mathcal{A}_u$ for each node $u$ that describes $u$'s state transitions in each round.
	
	Similar to the proof of Lemma \ref{lemma degree}, we first create a network $\mathcal{G}$ to model behavior of nodes in $G$ in three different executions $E_1$, $E_2$, and $E_3$, which we will describe later.
	$\mathcal{G}$ consists of two copies of each node in $W$ and $T$, and a single copy of the remaining nodes.
	Figure \ref{figure hybrid degree network} depicts $\mathcal{G}$.
	The edges in $\mathcal{G}$ can be deduced from Figure \ref{figure hybrid degree network} along the lines of the proof of Lemma \ref{lemma degree}.
	As in proof of Lemma \ref{lemma degree}, the structure of $\mathcal{G}$ ensures that, for each edge $uv$ in the original graph $G$, each copy of $u$ receives messages from exactly one copy of $v$ in $\mathcal{G}$.
	This allows us to create an algorithm for $\mathcal{G}$ corresponding to $\mathcal{A}$ as follows.
	For each node $u \in G$, if $\mathcal{G}$ has one copy of $u$ then $u$ runs $\mathcal{A}_u$.
	Otherwise there are two copies $u_0$ and $u_1$ of $u$.
	Both $u_0$ and $u_1$ run $\mathcal{A}_u$.
	
	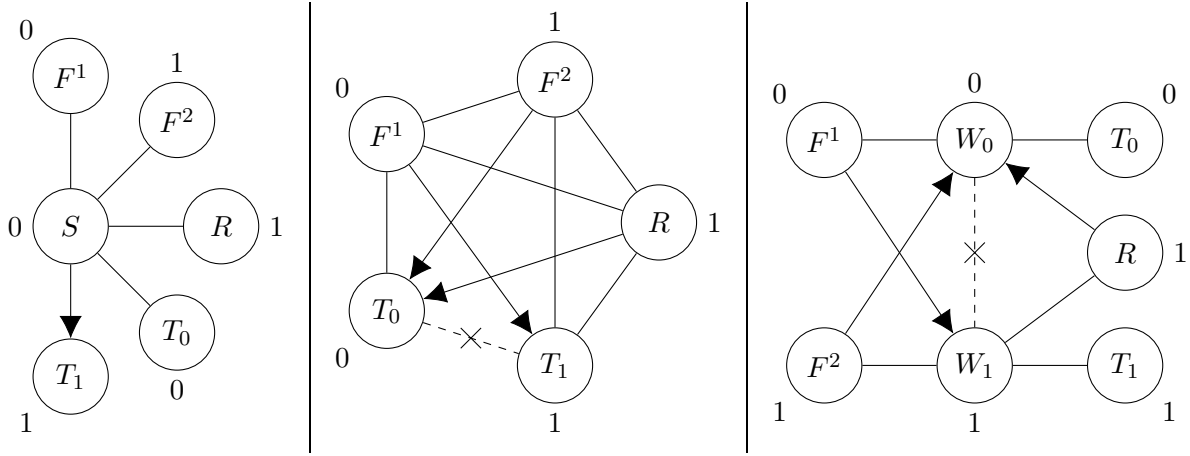
\begin{figure}[t]
		\centering
		\begin{tabular}{c|c|c}
				\begin{tikzpicture}
					\node[draw, circle, minimum height=1cm, label={left:$0$}]
						at (0, 0) (S) {$S$};
					\node[draw, circle, minimum height=1cm, label={above left:$0$}]
						at (90:2) (F^1) {$F^1$};
					\node[draw, circle, minimum height=1cm, label={above:$1$}]
						at (45:2) (F^2) {$F^2$};
					\node[draw, circle, minimum height=1cm, label={right:$1$}]
						at (0:2) (T_1) {$R$};
					\node[draw, circle, minimum height=1cm, label={below:$0$}]
						at (-45:2) (T_2) {$T_0$};
					\node[draw, circle, minimum height=1cm, label={below left:$1$}]
						at (-90:2) (T_2') {$T_1$};
					
					\draw[-] (S) to (F^1);
					\draw[-] (S) to (F^2);
					\draw[-] (S) to (T_1);
					\draw[-] (S) to (T_2);
					\draw[-{Latex[width=3mm,length=3mm]}] (S) to (T_2');
				\end{tikzpicture}
				&
				\begin{tikzpicture}
					\node[draw, circle, minimum height=1cm, label={above left:$0$}]
						at (144:2) (F^1) {$F^1$};
					\node[draw, circle, minimum height=1cm, label={above:$1$}]
						at (72:2) (F^2) {$F^2$};
					\node[draw, circle, minimum height=1cm, label={right:$1$}]
						at (0:2) (T_1) {$R$};
					\node[draw, circle, minimum height=1cm, label={below:$1$}]
						at (-72:2) (T_2') {$T_1$};
						\node[draw, circle, minimum height=1cm, label={below left:$0$}]
						at (-144:2) (T_2) {$T_0$};
					
					\draw[-] (F^1) to (F^2);
					\draw[-] (F^1) to (T_1);
					\draw[-] (F^1) to (T_2);
					\draw[-] (F^2) to (T_1);
					\draw[-] (F^2) to (T_2');
					\draw[-] (T_1) to (T_2');
					\draw[-{Latex[width=3mm,length=3mm]}] (F^1) to (T_2');
					\draw[-{Latex[width=3mm,length=3mm]}] (F^2) to (T_2);
					\draw[-{Latex[width=3mm,length=3mm]}] (T_1) to (T_2);
					\draw[dashed] (T_2) to node[cross out,draw,solid]{} (T_2');
				\end{tikzpicture}
				&
				\begin{tikzpicture}
					\node[draw, circle, minimum height=1cm, label={above left:$0$}]
						at (-2, 1.5) (F^1) {$F^1$};
					\node[draw, circle, minimum height=1cm, label={below left:$1$}]
						at (-2, -1.5) (F^2) {$F^2$};
					
					\node[draw, circle, minimum size=1cm, label={above:$0$}]
						at (0, 1.5) (W_0) {$W_0$};
					\node[draw, circle, minimum size=1cm, label={below:$1$}]
						at (0, -1.5) (W_1) {$W_1$};
					
					\node[draw, circle, minimum height=1cm, label={right:$1$}]
						at (2,0) (T_1) {$R$};
					\node[draw, circle, minimum height=1cm, label={above right:$0$}]
						at (2,1.5) (T_2) {$T_0$};
					\node[draw, circle, minimum height=1cm, label={below right:$1$}]
						at (2,-1.5) (T_2') {$T_1$};
					
					\draw[dashed] (W_0) to node[cross out,draw,solid]{} (W_1);
									
					\draw[-{Latex[width=3mm,length=3mm]}] (F^1) to (W_1);
					\draw[-] (F^1) to (W_0);
					
					\draw[-{Latex[width=3mm,length=3mm]}] (F^2) to (W_0);
					\draw[-] (F^2) to (W_1);
					
					\draw[-{Latex[width=3mm,length=3mm]}] (T_1) to (W_0);
					\draw[-] (T_1) to (W_1);
					\draw[-] (T_2') to (W_1);
					\draw[-] (T_2) to (W_0);
				\end{tikzpicture}
		\end{tabular}
		\caption{
			Network $\mathcal{G}$ to model executions $E_1$, $E_2$, and $E_3$.
			To reduce clutter, the edges are shown in three different figures.
			The left figure shows edges between $S$ and the neighborhood of $S$.
			The center figure shows edges within the neighborhood of $S$.
			The right figure shows edges between $W_0 \cup W_1$ and the neighborhood of $S$.
			Edges within the sets are not shown while edges between sets are depicted as single edges.
			The crossed dotted line between $W_0$ and $W_1$ (resp. $T_0$ and $T_1$) emphasizes that there are no edges between $W_0$ and $W_1$ (resp. $T_0$ and $T_1$).
			The numbers adjacent to the sets are the corresponding inputs in execution $\mathcal{E}$.
		}
		\label{figure hybrid degree network}
	\end{figure}
	
	Consider an execution $\mathcal{E}$ of the above algorithm on $\mathcal{G}$ as follows.
	Each node in $W_0 \cup T_0 \cup F^1 \cup S$ has input $0$ and the remaining nodes have input $1$.
	As in proof of Lemma \ref{lemma degree}, we will show that the algorithm does indeed terminate but nodes do not reach agreement in $\mathcal{G}$, which will be useful in deriving the desired contradiction.
	We use $\mathcal{E}$ to describe three executions $E_1$, $E_2$, and $E_3$ of $\mathcal{A}$ on the original graph $G$ as follows.
	\begin{enumerate}[label=$E_{\arabic*}$:,topsep=0pt,labelindent=0pt,itemsep=0pt]
		\item
		$F^2 \cup R$ is the set of faulty nodes and they are all non-equivocating faulty nodes.
		In each round, a faulty node in $F^2 \cup R$ broadcasts the same messages as the corresponding node in $\mathcal{G}$ in execution $\mathcal{E}$ in the same round.
		All non-faulty nodes have input $0$.
		Note that the behavior of non-faulty nodes in $S \cup F^1$, $T$, and $W$ is modelled by the corresponding (copies of) nodes in $S \cup F^1$, $T_0$, and $W_0$, respectively, in $\mathcal{E}$.
        Since $\mathcal{A}$ solves Byzantine consensus on $G$, nodes in $S \cup F^1 \cup T \cup W$ decide on output $0$ (by validity) in finite time.
		
		\item
		$F^1$ is the set of non-equivocating faulty nodes and $T$ is the set of equivocating faulty nodes.
		In each round, a non-equivocating faulty node in $F^1$ broadcasts the same messages as the corresponding node in $\mathcal{G}$ in execution $\mathcal{E}$ in the same round.
		Recall that a equivocating faulty node in $T$ has the capability of point-to-point communication with its neighbors.
		The communication by equivocating faulty nodes in $T$ to its neighbors in $S$ is the same as that by the corresponding copy in $T_0$ and to the remaining non-faulty neighbors is the same as that by the corresponding copy in $T_1$ in execution $\mathcal{E}$.
		$S$ has input $0$ and all the remaining non-faulty nodes have input $1$.
		Note that the behavior of non-faulty nodes in $S \cup F^2$, $R$ and $W$ is modelled by the corresponding (copies of) nodes in $S \cup F^2$, $R$, and $W_1$, respectively, in $\mathcal{E}$.
        The output of the non-faulty nodes will be described later.
		
		\item
		$F^1 \cup S$ is the set of faulty nodes and they are all non-equivocating faulty nodes.
		In each round, a faulty node in $F^1 \cup S$ broadcasts the same messages as the corresponding node in $\mathcal{G}$ in execution $\mathcal{E}$ in the same round.
		All non-faulty nodes have input $1$.
		Note that the behavior of non-faulty nodes in $R \cup F^2$, $T$, and $W$ is modelled by the corresponding (copies of) nodes in $R \cup F^2$, $T_1$, and $W_1$, respectively, in $\mathcal{E}$.
        Since $\mathcal{A}$ solves Byzantine consensus on $G$, nodes in $F^2 \cup R \cup T \cup W$ decide on output $1$ (by validity) in finite time.
	\end{enumerate}
	
    Due to the output of nodes in $S$ in $E_1$, the nodes in $S$ output $0$ in $\mathcal{E}$.
    Similarly, due to the output of nodes in $R$ in $E_3$, the nodes in $R$ output $1$ in $\mathcal{E}$.
	It follows that in $E_2$, nodes in $R$ as modeled by $R$ in $\mathcal{E}$ output $1$ while $S$ outputs $0$.
	Recall that, by construction, both $S$ and $R$ are non-empty.
	This violates agreement, a contradiction.
\end{proof}

\begin{lemma} \label{lemma hybrid connectivity}
	If there exists a Byzantine consensus algorithm under the hybrid model on a graph $G$ tolerating at most $f$ Byzantine faulty nodes, of which at most $t$ are equivocating faulty nodes, then $G$ is $(\floor{3(f-t)/2} + 2t + 1)$-connected.
\end{lemma}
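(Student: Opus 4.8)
The plan is to argue by contradiction using the same three-execution, doubled-network indistinguishability technique employed in Lemmas \ref{lemma connectivity} and \ref{lemma hybrid degree}. Write $\varphi = f - t$ and suppose that $G$ is \emph{not} $(\floor{3\varphi/2} + 2t + 1)$-connected. Then $G$ has a vertex cut $C$ with $\abs{C} \le \floor{3\varphi/2} + 2t$ and a partition $(A, B, C)$ of $V$ in which $A$ and $B$ are non-empty and have no edge between them in $G - C$. Since $\abs{C} \le \floor{3\varphi/2} + 2t$, I would partition $C$ into five parts $(C^1, C^2, C^3, R, T)$ with $\abs{C^1}, \abs{C^2} \le \floor{\varphi/2}$, $\abs{C^3} \le \ceil{\varphi/2}$, and $\abs{R}, \abs{T} \le t$; here $C^1, C^2, C^3$ play the role of non-equivocating faulty sets exactly as in Lemma \ref{lemma connectivity}, while $R$ and $T$ absorb the extra $2t$ budget afforded by equivocation (using $\floor{\varphi/2} + \ceil{\varphi/2} = \varphi$ and $2\floor{\varphi/2} \le \varphi$, this is precisely why the threshold is $\floor{3\varphi/2} + 2t + 1$ rather than $2f+1$).

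Following the construction of Lemma \ref{lemma degree}, I would build a network $\mathcal{G}$ containing two copies $A_0, A_1$ of $A$, two copies $B_0, B_1$ of $B$, and two copies $T_0, T_1$ of the equivocating set $T$ (the equivocating set needs two copies so that in the separating execution a single equivocating node can simulate both behaviors), together with a single copy of each node in $C^1, C^2, C^3, R$. The edges incident to $A, B, C^1, C^2, C^3$ would be wired exactly as in Figure \ref{figure connectivity network}, with directed edges chosen so that each copy of a node receives messages from exactly one copy of each of its neighbors; $R$ would be wired like $C^2$ (undirected to $A_0, B_1$ and directed to $A_1, B_0$), while $T_0$ would be joined to the world-$0$ copies $A_0, B_0$ and $T_1$ to the world-$1$ copies $A_1, B_1$, with no edge between $T_0$ and $T_1$. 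In the execution $\mathcal{E}$ of the induced algorithm on $\mathcal{G}$, I would give input $0$ to $A_0, B_0, C^1, T_0$ and input $1$ to all remaining nodes.

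The three executions $E_1, E_2, E_3$ on $G$ are then defined so that each faulty set has at most $f$ nodes and at most $t$ equivocators. In $E_1$ the faulty set is the non-equivocating set $C^2 \cup C^3 \cup R$ (size $\le \varphi + t = f$) and every non-faulty node has input $0$, so by validity $A, B$ (modeled by $A_0, B_0$) output $0$, forcing $A_0, B_0$ to output $0$ in $\mathcal{E}$. In $E_3$ the faulty set is the non-equivocating set $C^1 \cup C^2 \cup R$ (size $\le 2\floor{\varphi/2} + t \le f$) and every non-faulty node has input $1$, so $A, B$ (modeled by $A_1, B_1$) output $1$, forcing $A_1, B_1$ to output $1$ in $\mathcal{E}$. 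In $E_2$ the faulty set is $C^1 \cup C^3 \cup T$ (size $\le \varphi + t = f$), where $C^1, C^3$ are non-equivocating and $T$, of size at most $t$, equivocates: it sends its $T_0$-messages toward $A$ and its $T_1$-messages toward $B$. This makes $A$'s view identical to that of $A_0$ in $\mathcal{E}$ and $B$'s view identical to that of $B_1$, so $A$ outputs $0$ while $B$ outputs $1$; since $A$ and $B$ are non-empty and fault-free in $E_2$, this violates agreement, the desired contradiction.

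I expect the main obstacle to be the bookkeeping that makes $\mathcal{G}$ well-defined and the executions mutually indistinguishable: choosing the orientations of the cross-cut edges (and of the edges internal to $C$, which are copied from $G$) so that the ``each copy receives from exactly one copy of each neighbor'' invariant holds for all five cut parts simultaneously, and verifying that the equivocating set $T$ can, within its budget of $t$ nodes, reproduce in $E_2$ exactly the pair of behaviors exhibited by $T_0$ and $T_1$ in $\mathcal{E}$. Establishing that $\mathcal{E}$ terminates and that the projections of $\mathcal{E}$ onto $E_1, E_2, E_3$ agree round-by-round is the standard inductive indistinguishability argument, carried out as in the proofs of Lemmas \ref{lemma connectivity} and \ref{lemma hybrid degree}.
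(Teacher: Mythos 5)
Your proposal is correct, and it sits inside the paper's framework --- the same five-part cut partition $(C^1,C^2,C^3,R,T)$, the same doubled network $\mathcal{G}$, and the same three-execution argument in which validity pins down the outputs in $E_1$ and $E_3$ and agreement is violated in $E_2$ --- but your gadget assigns the roles of $R$ and $T$ differently from the paper's, and the difference is substantive enough to note. The paper duplicates \emph{both} $R$ and $T$: $R$ is wired aligned with the two worlds ($R_0$ adjacent to $A_0,B_0$; $R_1$ to $A_1,B_1$) while $T$ is wired crossed ($T_0$ to $A_1,B_0$; $T_1$ to $A_0,B_1$), so equivocation is exercised in every execution --- $T$ equivocates in $E_1$ and $E_3$, and $R$ equivocates in the separating execution $E_2$. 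You instead keep a \emph{single} copy of $R$, wire it exactly like $C^2$, double only $T$ (aligned with the worlds), make $R$ an ordinary non-equivocating faulty set in $E_1$ and $E_3$, and reserve equivocation for $T$ in $E_2$ alone. Your budgets check out (each execution has at most $\varphi+t=f$ faults, at most $t$ of them equivocating), and the view-consistency argument goes through: a single-copy $R$ has one well-defined transcript, so replaying it in $E_1$/$E_3$ is legal non-equivocating behavior received identically by every copy of $A$ and $B$, while in $E_2$ the equivocating $T$ can serve $T_0$'s transcript to $A$ and $T_1$'s to $B$ (and the appropriate copy's transcript to $C^2$ and $R$). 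What each approach buys: yours is leaner --- one fewer doubled set --- and it isolates equivocation in exactly the one execution where it is indispensable, making transparent that the $+2t$ in the connectivity bound is paid for by the separating execution; the paper's symmetric construction is a more direct drop-in extension of its Figure \ref{figure connectivity network} from Lemma \ref{lemma connectivity}, at the cost of invoking equivocating faults in all three executions. The one detail you deferred --- which copy of $T$ each of $C^1,C^2,C^3,R$ hears --- is forced by which executions leave those parts non-faulty ($C^1$ must hear $T_0$, $C^3$ must hear $T_1$, and $C^2$, $R$ may hear either, since in $E_2$ the equivocating $T$ can match whatever the wiring dictates), and this is the same level of bookkeeping the paper itself delegates to its Figure \ref{figure hybrid connectivity network}.
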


\begin{proof}
	Suppose for the sake of contradiction that $G$ is not $(\floor{3(f-t)/2} + 2t + 1)$-connected and there exists an algorithm $\mathcal{A}$ that solves Byzantine consensus under the hybrid model on $G$.
	Let $\varphi = f - t$.
	Let $C$ be a vertex cut of $G$ of size at most $\floor{ 3\varphi/2 } + 2t$ with a partition $(A, B, C)$ of $V$ such that $A$ and $B$ (both non-empty) are disconnected in $G - C$ (so there is no edge between a node in $A$ and a node in $B$).
	Since $\abs{C} \le \floor{3\varphi/2} + 2t$, there exists a partition $(C^1, C^2, C^3, R, T)$ of $C$ such that $\abs{C^1}, \abs{C^2} \le \floor{\varphi/2}$, $\abs{C^3} \le \ceil{\varphi/2}$, and $\abs{R}, \abs{T} \le t$.
	Recall that $\mathcal{A}$ outlines a procedure $\mathcal{A}_u$ for each node $u$ that describes $u$'s state transitions in each round.
	
	Similar to the proof of Lemma \ref{lemma degree}, we first create a network $\mathcal{G}$ to model behavior of nodes in $G$ in three different executions $E_1$, $E_2$, and $E_3$, which we will describe later.
	$\mathcal{G}$ consists of two copies of each node in $A$, $B$, $R$, and $T$, and a single copy of the remaining nodes.
	Figure \ref{figure hybrid connectivity network} depicts $\mathcal{G}$.
	The edges in $\mathcal{G}$ can be deduced from Figure \ref{figure hybrid connectivity network} along the lines of the proof of Lemma \ref{lemma degree}.
	As in proof of Lemma \ref{lemma degree}, the structure of $\mathcal{G}$ ensures that, for each edge $uv$ in the original graph $G$, each copy of $u$ receives messages from exactly one copy of $v$ in $\mathcal{G}$.
	This allows us to create an algorithm for $\mathcal{G}$ corresponding to $\mathcal{A}$ as follows.
	For each node $u \in G$, if $\mathcal{G}$ has one copy of $u$ then $u$ runs $\mathcal{A}_u$.
	Otherwise there are two copies $u_0$ and $u_1$ of $u$.
	Both $u_0$ and $u_1$ run $\mathcal{A}_u$.
	
	\begin{figure}[t]
		\centering
		\begin{tabular}{c|c|c}
				\begin{tikzpicture}
					\node[draw, circle, minimum size=1cm, label={above:$0$}] at (0, 0) (A_0) {$A_0$};
					\node[draw, circle, minimum size=1cm, label={below:$1$}] at (0, -3) (A_1) {$A_1$};
					\node[draw, ellipse, minimum height=1cm, label={above:$0$}] at (2, 1) (C^1) {$C^1$};
					\node[draw, ellipse, minimum height=1cm, label={above:$1$}] at (2, -2) (C^2) {$C^2$};
					\node[draw, ellipse, minimum height=1cm, label={below:$1$}] at (2, -5) (C^3) {$C^3$};
					\node[draw, circle, minimum height=1cm, label={above:$0$}] at (-2, 2) (R_0) {$R_0$};
					\node[draw, circle, minimum height=1cm, label={above:$0$}] at (-2, -0.5) (T_0) {$T_0$};
					\node[draw, circle, minimum height=1cm, label={below:$1$}] at (-2, -3) (T_1) {$T_1$};
					\node[draw, circle, minimum height=1cm, label={below:$1$}] at (-2, -5.5) (R_1) {$R_1$};
					
					\draw[dashed] (A_0) to node[cross out,draw,solid]{} (A_1);
					
					\draw[-{Latex[width=3mm,length=3mm]}] (C^1) to (A_1);
					\draw[-{Latex[width=3mm,length=3mm]}] (C^2) to (A_1);
					\draw[-{Latex[width=3mm,length=3mm]}] (C^3) to (A_0);
					
					\draw[-] (C^1) to (A_0);
					\draw[-] (C^2) to (A_0);
					\draw[-] (C^3) to (A_1);
					
					\draw[-] (R_0) to (A_0);
					\draw[-] (T_0) to (A_1);
					\draw[-] (R_1) to (A_1);
					\draw[-] (T_1) to (A_0);
				\end{tikzpicture}
				&
				\begin{tikzpicture}
					\node[draw, circle, minimum height=1cm, label={above:$0$}]
						at (45:2.5) (C^1) {$C^1$};
					\node[draw, circle, minimum height=1cm, label={above:$1$}]
						at (0:2.5) (C^2) {$C^2$};
					\node[draw, circle, minimum height=1cm, label={below:$1$}]
						at (-45:2.5) (C^3) {$C^3$};
					\node[draw, circle, minimum height=1cm, label={above:$0$}]
						at (99:2.5) (R_0) {$R_0$};
					\node[draw, circle, minimum height=1cm, label={above:$1$}]
						at (153:2.5) (R_1) {$R_1$};
					\node[draw, circle, minimum height=1cm, label={below:$1$}]
						at (207:2.5) (T_1) {$T_1$};
					\node[draw, circle, minimum height=1cm, label={below:$0$}]
						at (261:2.5) (T_0) {$T_0$};
					
					\draw[-] (C^1) to (C^2);
					\draw[-] (C^1) to (C^3);
					\draw[-] (C^2) to (C^3);
					
					\draw[-] (R_0) to (T_0);
					\draw[-] (R_1) to (T_1);
					\draw[dashed] (R_0) to node[cross out,draw,solid]{} (R_1);
					\draw[dashed] (T_0) to node[cross out,draw,solid]{} (T_1);
					
					\draw[-] (C^1) to (R_0);
					\draw[-] (C^1) to (T_0);
					\draw[-] (C^2) to (R_1);
					\draw[-] (C^2) to (T_1);
					\draw[-] (C^3) to (R_1);
					\draw[-] (C^3) to (T_0);
					
					\draw[-{Latex[width=3mm,length=3mm]}] (C^1) to (R_1);
					\draw[-{Latex[width=3mm,length=3mm]}] (C^1) to (T_1);
					\draw[-{Latex[width=3mm,length=3mm]}] (C^2) to (R_0);
					\draw[-{Latex[width=3mm,length=3mm]}] (C^2) to (T_0);
					\draw[-{Latex[width=3mm,length=3mm]}] (C^3) to (R_0);
					\draw[-{Latex[width=3mm,length=3mm]}] (C^3) to (T_1);
				\end{tikzpicture}
				&
				\begin{tikzpicture}
					\node[draw, circle, minimum size=1cm, label={above:$0$}] at (0, 0) (B_0) {$B_0$};
					\node[draw, circle, minimum size=1cm, label={below:$1$}] at (0, -3) (B_1) {$B_1$};
					\node[draw, ellipse, minimum height=1cm, label={above:$0$}] at (-2, 1) (C^1) {$C^1$};
					\node[draw, ellipse, minimum height=1cm, label={above:$1$}] at (-2, -2) (C^2) {$C^2$};
					\node[draw, ellipse, minimum height=1cm, label={below:$1$}] at (-2, -5) (C^3) {$C^3$};
					\node[draw, circle, minimum height=1cm, label={above:$0$}] at (2, 2) (R_0) {$R_0$};
					\node[draw, circle, minimum height=1cm, label={above:$0$}] at (2, -0.5) (T_0) {$T_0$};
					\node[draw, circle, minimum height=1cm, label={below:$1$}] at (2, -3) (T_1) {$T_1$};
					\node[draw, circle, minimum height=1cm, label={below:$1$}] at (2, -5.5) (R_1) {$R_1$};
					
					\draw[dashed] (B_0) to node[cross out,draw,solid]{} (B_1);
					
					\draw[-{Latex[width=3mm,length=3mm]}] (C^1) to (B_1);
					\draw[-{Latex[width=3mm,length=3mm]}] (C^2) to (B_0);
					\draw[-{Latex[width=3mm,length=3mm]}] (C^3) to (B_0);
					
					\draw[-] (C^1) to (B_0);
					\draw[-] (C^2) to (B_1);
					\draw[-] (C^3) to (B_1);
					
					\draw[-] (R_0) to (B_0);
					\draw[-] (T_0) to (B_0);
					\draw[-] (R_1) to (B_1);
					\draw[-] (T_1) to (B_1);
				\end{tikzpicture}
		\end{tabular}
		\caption{
			Network $\mathcal{G}$ to model executions $E_1$, $E_2$, and $E_3$.
			To reduce clutter, the edges are shown in three different figures.
			The left figure shows edges between the copies of nodes in $A$ and the copies of nodes in the cut $C$.
			The center figure shows edges among the copies of nodes in $C$.
			The right figure shows edges between the copies of nodes in $A$ and the copies of nodes in the cut $C$.
			Edges within the sets are not shown while edges between sets are depicted as single edges.
			The crossed dotted lines emphasize that there are no edges between the corresponding sets.
			The numbers adjacent to the sets are the corresponding inputs in execution $\mathcal{E}$.
		}
		\label{figure hybrid connectivity network}
	\end{figure}
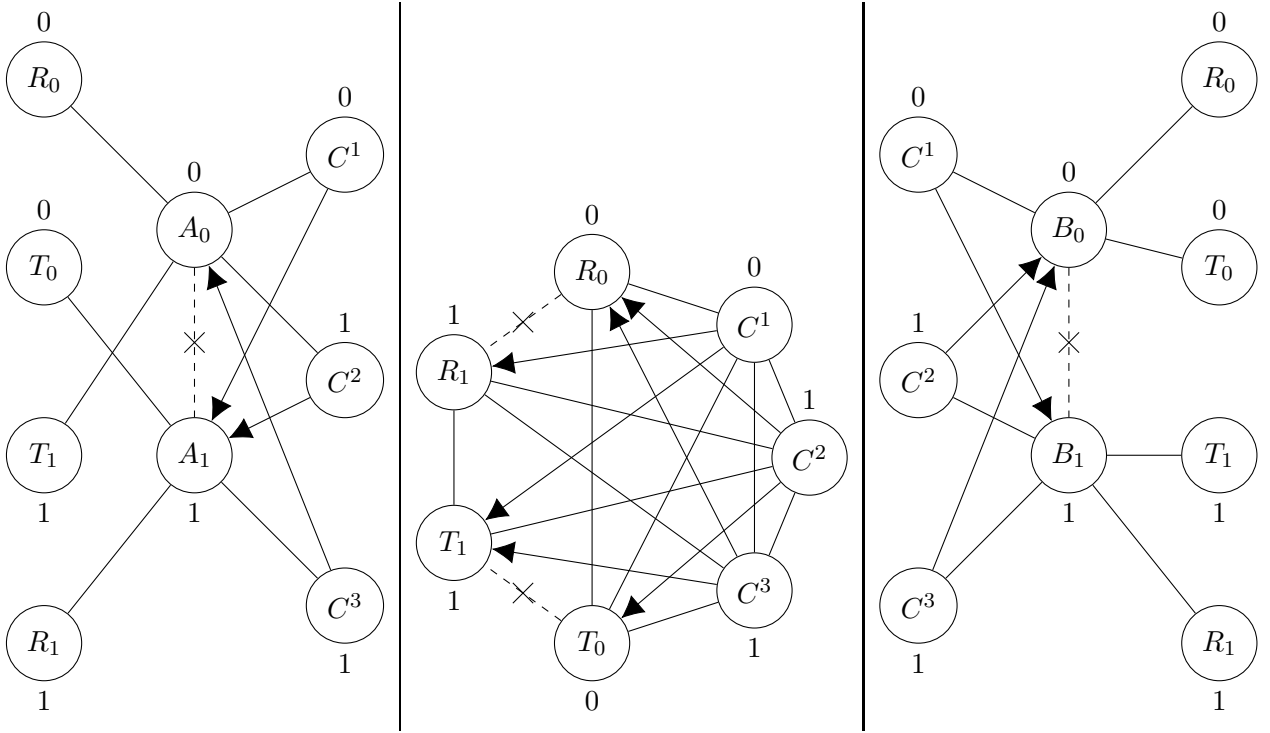
	
	Consider an execution $\mathcal{E}$ of the above algorithm on $\mathcal{G}$ as follows.
	Each node in $A_0$, $B_0$, $R_0$, $T_0$, and $C^1$ has input $0$ and the remaining nodes have input $1$.
	As in proof of Lemma \ref{lemma degree}, we will show that the algorithm does indeed terminate but nodes do not reach agreement in $\mathcal{G}$, which will be useful in deriving the desired contradiction.
	We use $\mathcal{E}$ to describe three executions $E_1$, $E_2$, and $E_3$ of $\mathcal{A}$ on the original graph $G$ as follows.
	\begin{enumerate}[label=$E_{\arabic*}$:,topsep=0pt,labelindent=0pt,itemsep=0pt]
		\item
		$C^2 \cup C^3$ is the set of non-equivocating faulty nodes and $T$ is the set of equivocating faulty nodes.
		In each round, a non-equivocating faulty node in $C^2 \cup C^3$ broadcasts the same messages as the corresponding node in $\mathcal{G}$ in execution $\mathcal{E}$ in the same round.
		Recall that a equivocating faulty node in $T$ has the capability of point-to-point communication with its neighbors.
		The communication by equivocating faulty nodes in $T$ to its neighbors in $A$ is the same as that by the corresponding copy in $T_1$ and to the remaining non-faulty neighbors is the same as that by the corresponding copy in $T_0$ in execution $\mathcal{E}$.
		All non-faulty nodes have input $0$.
		Note that the behavior of non-faulty nodes in $A$, $B$, $R$, and $C^1$ is modelled by the corresponding (copies of) nodes in $A_0$, $B_0$, $R_0$, and $C^1$, respectively, in $\mathcal{E}$.
        Since $\mathcal{A}$ solves Byzantine consensus on $G$, nodes in $A \cup B \cup R \cup C^1$ decide on output $0$ (by validity) in finite time.
		
		\item
		$C^1 \cup C^3$ is the set of non-equivocating faulty nodes and $R$ is the set of equivocating faulty nodes.
		In each round, a non-equivocating faulty node in $C^1 \cup C^3$ broadcasts the same messages as the corresponding node in $\mathcal{G}$ in execution $\mathcal{E}$ in the same round.
		The communication by equivocating faulty nodes in $R$ to its neighbors in $A$ is the same as that by the corresponding copy in $R_0$ and to the remaining non-faulty neighbors is the same as that by the corresponding copy in $R_1$ in execution $\mathcal{E}$.
		$A$ has input $0$ and all the remaining non-faulty nodes have input $1$.
		Note that the behavior of non-faulty nodes in $A$, $B$, $T$, and $C^2$ is modelled by the corresponding (copies of) nodes in $A_0$, $B_1$, $T_1$, and $C^2$, respectively, in $\mathcal{E}$.
        The output of the non-faulty nodes will be described later.
		
		\item
		$C^1 \cup C^2$ is the set of non-equivocating faulty nodes and $T$ is the set of equivocating faulty nodes.
		In each round, a non-equivocating faulty node in $C^1 \cup C^2$ broadcasts the same messages as the corresponding node in $\mathcal{G}$ in execution $\mathcal{E}$ in the same round.
		The communication by equivocating faulty nodes in $T$ to its neighbors in $B$ is the same as that by the corresponding copy in $T_1$ and to the remaining non-faulty neighbors is the same as that by the corresponding copy in $T_0$ in execution $\mathcal{E}$.
		All non-faulty nodes have input $1$.
		Note that the behavior of non-faulty nodes in $A$, $B$, $R$, and $C^3$ is modelled by the corresponding (copies of) nodes in $A_1$, $B_1$, $R_1$, and $C^3$, respectively, in $\mathcal{E}$.
        Since $\mathcal{A}$ solves Byzantine consensus on $G$, nodes in $A \cup B \cup R \cup C^3$ decide on output $1$ (by validity) in finite time.
	\end{enumerate}
	
    Due to the output of nodes in $A$ in $E_1$, the nodes in $A_0$ output $0$ in $\mathcal{E}$.
    Similarly, due to the output of nodes in $B$ in $E_3$, the nodes in $B_1$ output $1$ in $\mathcal{E}$.
	It follows that in $E_2$, nodes in $A$ as modeled by $A_0$ in $\mathcal{E}$ output $0$ while nodes in $B$ as modeled by $B_1$ output $1$.
	Recall that, by construction, both $A$ and $B$ are non-empty.
	This violates agreement, a contradiction.
\end{proof}

We now prove the necessity of the conditions in Theorem \ref{theorem hybrid}.
The sufficiency is proved in the next section.

\begin{proof_of}{necessity of the conditions in Theorem \ref{theorem hybrid}}
	Condition (i) follows from Lemma \ref{lemma hybrid connectivity}.
	For $t = 0$ condition (ii) follows from Lemma \ref{lemma degree} and for  $t > 0$ condition (iii) follows from Lemma \ref{lemma hybrid degree}.
\end{proof_of}
		\subsection{Sufficiency of the conditions in Theorem \ref{theorem hybrid}} \label{section proofs hybrid sufficiency}
In this section, we assume that the graph $G$ satisfies the conditions in Theorem \ref{theorem hybrid}.
To prove sufficiency of the conditions in Theorem \ref{theorem hybrid}, we modify Algorithm \ref{algorithm consensus} for the hybrid setting.
The formal procedure is given in Algorithm \ref{algorithm hybrid consensus}.

\begin{algorithm}[ht]
	\SetAlgoLined
	\SetKwFor{For}{For}{do}{end}
	Each node $v$ has a binary input value in $\set{ 0, 1 }$, and maintains a binary state $\gamma_v \in \set{ 0, 1 }$.\\
	{\em Initialization:} $\gamma_v$ := input value of node $v$\\
	\For{each pair of sets $F, T \subseteq V$ such that $F \subseteq V - T$, $\abs{T} \le t$, and $\abs{F} \le f - \abs{T}$}{
		\begin{enumerate}
			[label=Step (\alph*):,labelindent=12pt,leftmargin=!,rightmargin=\widthof{Step ()}]
			\item
			Flood value $\gamma_v$. (The steps taken to achieve flooding are described in the text preceding Algorithm \ref{algorithm consensus} pseudo-code.)
			
			\item
			For each node $u \in V - T$, identify a single $uv$-path $P_{uv}$ that excludes $F \cup T$.
			Let,
			\begin{align*}
    			Z_v &:= \set{ u \in V - T \mid \text{$v$ received value $0$ from $u$ along $P_{uv}$ in step (a)} },\\
    			N_v &:= V - T - Z_v.
			\end{align*}
			
			\item
			Let,
			\begin{align*}
			    \varphi = f - \abs{T}.
			\end{align*}
			Define sets $A_v$ and $B_v$ as follows.
			\begin{enumerate}
				[label=Case \arabic*:]
				\item If $\abs{Z_v \cap F} \le \floor{\varphi/2}$ and $\abs{N_v} > f$, then
				 $A_v := N_v$ and $B_v := Z_v$.
				
				\item If $\abs{Z_v \cap F} \le \floor{\varphi/2}$ and $\abs{N_v} \le f$,
				then $A_v := Z_v$ and $B_v := N_v$.
				
				\item If $\abs{Z_v \cap F} > \floor{\varphi/2}$ and $\abs{Z_v} > f$,
			    then $A_v := Z_v$ and $B_v := N_v$.
				
				\item If $\abs{Z_v \cap F} > \floor{\varphi/2}$ and $\abs{Z_v} \le f$,
				then $A_v := N_v$ and $B_v := Z_v$.
			\end{enumerate}
			
			If $v \in B_v$ and $v$ receives value $\delta\in\{0,1\}$ along any $f+1$ node-disjoint $A_v v$-paths that exclude $F \cup T$ in step (a), then $\gamma_v := \delta$.
			
	\end{enumerate}}
	
	Output $\gamma_v$.
	
	\caption{
		Proposed algorithm for Byzantine consensus under the hybrid model: Steps performed by node $v$ are shown here.
	}
	\label{algorithm hybrid consensus}
\end{algorithm}

For correctness, we follow the same ideas as the proof of Theorem \ref{theorem sufficiency}.
We assume that $t > 0$ in the proof since Algorithm \ref{algorithm hybrid consensus} is the same as Algorithm \ref{algorithm consensus} when $t = 0$.
We follow the same strategy as in Section \ref{section correctness}.
Let $T^*$ be the actual equivocating faulty set of size at most $t$ and $F^*$ the actual non-equivocating faulty set of size at most $f - \abs{T^*}$.

Similar to Observation \ref{observation fault-free reliable}, the following follows from the rules used for flooding.

\begin{observation} \label{observation hybrid fault-free reliable}
	For any phase of Algorithm \ref{algorithm hybrid consensus}, in step (a) for any two nodes $u, v \in V - T^*$ (possibly non-equivocating faulty), if $v$ receives value $b$ along a fault-free $uv$-path then $u$ broadcast the value $b$ to its neighbors during flooding.
\end{observation}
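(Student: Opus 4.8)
The plan is to mirror the proof of Observation~\ref{observation fault-free reliable}, since Observation~\ref{observation hybrid fault-free reliable} is its direct analog for the hybrid model, differing only in that the two endpoints are now restricted to $V - T^*$ rather than ranging over all of $V$. First I would isolate the single property of the source $u$ that the argument actually needs: because $u \in V - T^*$, node $u$ is either non-faulty or a \emph{non-equivocating} faulty node, and in either case every message $u$ transmits is, by definition of the hybrid model, received identically by all of $u$'s neighbors (local broadcast). This is exactly the feature that fails for an equivocating node in $T^*$, and it is the reason the hypothesis $u \in V - T^*$ is imposed; the restriction on $v$ is inherited for downstream use but the argument is driven through $u$.

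Next I would argue that a non-equivocating $u$ floods only a single value. Node $u$ initiates flooding of its value using the empty path $\bot$, and by local broadcast all of $u$'s neighbors receive the identical sequence of messages $u$ sends carrying path $\bot$. By flooding rule (ii), each neighbor retains only the first value it receives with path $\bot$ from $u$ and discards all later ones; since all neighbors observe the same messages in the same order, they all adopt the same value, call it $b_u$. This $b_u$ is the unique value that $u$ broadcast during flooding.

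Then I would propagate $b_u$ along the fault-free $uv$-path $P = (u, w_1, \ldots, w_k, v)$ by induction. Since $P$ is fault-free, every internal node $w_i$ is non-faulty and obeys the protocol; when $w_i$ receives the message for the route ending at $w_{i-1}$, rule (iv) forces it to forward exactly that value with the extended path, and rule (ii) cannot trigger an erroneous discard because each non-faulty $w_{i-1}$ emits the message for this specific route exactly once. Hence the value arriving at $v$ along $P$ equals $b_u$, so if $v$ receives $b$ along $P$ then $b = b_u$, i.e.\ $u$ broadcast $b$, as required.

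The only genuine subtlety — and the step I would treat most carefully — is the claim that a faulty-but-non-equivocating $u$ floods a single well-defined value; this is where rule (ii) must be combined with the synchronized reception guaranteed by local broadcast, precisely as in the pure local-broadcast case. Everything else (the inductive, faithful forwarding through fault-free internal nodes) is verbatim the reasoning behind Observation~\ref{observation fault-free reliable}, and the restriction to $V - T^*$ is exactly calibrated so that this one subtle step continues to hold.
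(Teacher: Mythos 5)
Your proof is correct and follows essentially the same route as the paper, which states Observation~\ref{observation hybrid fault-free reliable} without a formal proof, asserting (as for Observation~\ref{observation fault-free reliable}) that it follows from the flooding rules: your argument---that $u \in V - T^*$ guarantees local-broadcast behavior for $u$'s transmissions, that rule~(ii) then pins $u$ to a single flooded value, and that the non-faulty internal nodes of a fault-free path forward that value faithfully---is exactly this reasoning made explicit. The only detail you omit is the paper's convention that a faulty node which never initiates flooding is deemed to have flooded the default value $1$, a definitional point that does not affect the argument.
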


The next two lemmas state that the paths identified in steps (b) and (c), respectively, do indeed exist.

\begin{lemma}
    For any choice of sets $F$ and $T$ in Algorithm \ref{algorithm hybrid consensus}, and any two nodes $u, v$, there exists a $uv$-path that excludes $F \cup T$.
\end{lemma}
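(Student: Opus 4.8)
The plan is to mirror the proof of Lemma~\ref{lemma paths} (the local-broadcast analogue), replacing the excluded set $F$ by $F \cup T$ and the connectivity bound $\floor{3f/2}+1$ by $\floor{3(f-t)/2}+2t+1$. First I would bound the size of the set to be avoided: since the \texttt{For} loop ranges over $T$ with $\abs{T}\le t$ and $F \subseteq V-T$ with $\abs{F}\le f-\abs{T}$, the sets $F$ and $T$ are disjoint, so $\abs{F\cup T}=\abs{F}+\abs{T}\le f$. Thus it suffices to exhibit a $uv$-path whose internal nodes avoid a fixed set of size at most $f$.

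Next I would invoke Menger's Theorem: since $G$ is $(\floor{3(f-t)/2}+2t+1)$-connected, there are at least $\floor{3(f-t)/2}+2t+1$ node-disjoint $uv$-paths. These paths share only the endpoints $u,v$, so their sets of internal nodes are pairwise disjoint. Consequently a set of at most $f$ nodes can contain an internal node of at most $f$ of these paths. Hence, provided the number of node-disjoint paths is at least $f+1$, at least one of them has no internal node in $F\cup T$, i.e.\ it excludes $F\cup T$. The endpoints $u,v$ are permitted to lie in $F\cup T$, since ``excludes'' constrains only internal vertices, so no special casing of the endpoints is required.

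The only step where any real computation occurs is verifying $\floor{3(f-t)/2}+2t+1 \ge f+1$. Writing $\varphi=f-t\ge 0$ (valid since $t\le f$), this reduces to $\floor{3\varphi/2}\ge \varphi-t$, which holds because $\floor{3\varphi/2}\ge \varphi \ge \varphi-t$ for $\varphi,t\ge 0$. I expect this floor inequality to be the ``hard part'' only in the sense of being the single nontrivial line; conceptually the argument is a direct counting of node-disjoint paths, identical in spirit to Lemma~\ref{lemma paths} but with the extra $2t$ term in the connectivity exactly compensating for the $t$ equivocating nodes that must now also be excluded.
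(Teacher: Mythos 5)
Your proposal is correct and follows essentially the same route as the paper's proof: invoke Menger's Theorem on the $(\floor{3(f-t)/2}+2t+1)$-connectivity, verify that this quantity is at least $f+1$, and conclude by counting that a set of size $\abs{F \cup T} \le f$ cannot meet the internal nodes of all $f+1$ node-disjoint $uv$-paths. Your explicit justification that $\abs{F \cup T} \le f$ via the disjointness of $F$ and $T$ is a detail the paper leaves implicit, but the argument is identical in substance.
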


\begin{proof}
    Recall that a path is said to exclude $F \cup T$ if none of the \emph{internal} nodes in the path belong to $F \cup T$.
    Since $G$ is $(\floor{3(f-t)/2}+2t+1)$-connected, by Menger's Theorem, there are at least $(\floor{3(f-t)/2}+2t+1)$ node-disjoint paths between any two nodes $u, v$.
	For any $f, t$ such that $0 \le t \le f$, we have that $\floor{3(f-t)/2}+2t+1 \ge (f - t) + t + 1 = f + 1$.
	Thus, there are at least $f+1$ node-disjoint $uv$-paths, of which at least one path must exclude $F \cup T$, since $\abs{F \cup T} \le f$.
\end{proof}

\begin{lemma} \label{lemma hybrid propagates}
	For any non-faulty node $v$ and any given phase of Algorithm \ref{algorithm hybrid consensus} with the corresponding sets $F$ and $T$, in step (c), if $v \in B_v$, then there exist $f+1$ node-disjoint $A_v v$-paths that exclude $F \cup T$.
\end{lemma}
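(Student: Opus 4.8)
The plan is to mirror the proof of Lemma~\ref{lemma propagates}, handling the same four cases from step (c) of Algorithm~\ref{algorithm hybrid consensus}, but now accounting for the extra forbidden set $T$ and replacing the $2f$ minimum-degree hypothesis by condition (iii) of Theorem~\ref{theorem hybrid}. As in the original argument, Cases~3 and~4 reduce to Cases~1 and~2 by swapping the roles of $Z_v$ and $N_v$: when $\abs{Z_v \cap F} > \floor{\varphi/2}$ one computes $\abs{N_v \cap F} = \abs{F} - \abs{Z_v \cap F} \le \varphi - \floor{\varphi/2} - 1 \le \floor{\varphi/2}$, so the ``minority in $F$'' side is $N_v$ rather than $Z_v$. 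Throughout I will use that $\abs{F \cup T} = \abs{F} + \abs{T} \le \varphi + \abs{T} = f$ and that $A_v, B_v \subseteq V - T$, so $T$ is disjoint from both.

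For Cases~1 and~3 (the connectivity cases), $A_v$ is the large side ($\abs{A_v} > f$) and $\abs{B_v \cap F} \le \floor{\varphi/2}$. I would select $f+1$ endpoints $A'_v \subseteq A_v$ containing all of $A_v \cap F$, delete $B'_v := B_v \cap (F - v)$ together with $T$, and invoke the standard $Uv$-path (fan) version of Menger's theorem in $G - (B'_v \cup T)$. The one computation to carry out is that this graph is still $(f+1)$-connected: since $\abs{B'_v} \le \floor{\varphi/2}$ and, writing $s := \abs{T}$ with $\varphi = f - s$, one has $\floor{\varphi/2} + s = \floor{(f+s)/2} \le \floor{(f+t)/2}$, the remaining connectivity is at least $\floor{3(f-t)/2} + 2t + 1 - \floor{(f+t)/2} = \floor{3(f-t)/2} - \floor{(f-t)/2} + t + 1 = (f-t) + t + 1 = f+1$, using $\floor{3a/2} - \floor{a/2} = a$. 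Exactly as in Lemma~\ref{lemma propagates}, the resulting paths have the nodes of $A_v \cap F$ only as endpoints and avoid $B'_v$ and $T$ internally, hence exclude $F \cup T$.

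Cases~2 and~4 (where $\abs{B_v} = \abs{N_v} \le f$ and $v \in B_v$) are where I expect the real difficulty. The original proof simply counted neighbors of $v$ inside $A_v$ using $\deg(v) \ge 2f$; but here $v$ may have up to $\abs{T} \le t$ neighbors inside the forbidden set $T$, so direct neighbor-counting yields only about $f + 2 - t$ length-one paths, short of $f+1$ once $t \ge 2$. My plan is therefore a global cut argument. First note $\abs{A_v} \ge f+1$, since condition (iii) forces $n \ge 2f + t + 1$ and $\abs{A_v} = (n - \abs{T}) - \abs{B_v} \ge (n - t) - f \ge f+1$. Then, supposing for contradiction that fewer than $f+1$ node-disjoint $A_v v$-paths excluding $F \cup T$ exist, Menger's theorem applied with the forbidden internal set $D := (F \cup T) \setminus (A_v \cup \set{v})$ yields a set $C$ with $\abs{C} \le f$, $v \notin C$, such that $C \cup D$ separates $A_v$ from $v$. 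Let $Q$ be the component of $v$ in $G - (C \cup D)$; since $T \subseteq D$ and $Q \cap A_v = \emptyset$, we get $Q \subseteq B_v = N_v$, so $\abs{Q} \le f$, and $Q$ is disjoint from $(F \cap N_v) - \set{v}$.

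The decisive step is to apply condition (iii) to a subset $Q_0 \subseteq Q$ with $v \in Q_0$ and $\abs{Q_0} = \min(\abs{Q}, t)$, which is legitimate because $0 < \abs{Q_0} \le t$. This gives $\abs{N_G(Q_0)} \ge 2f+1$, while every neighbor of $Q_0$ lies in $(Q \setminus Q_0) \cup C \cup D$. The bound then closes because $\abs{Q} + \abs{D} \le \abs{N_v} + \abs{T} \le f + t$: the part of $D$ lying in $B_v$, namely $(F \cap N_v) - \set{v}$, is exactly the portion of $N_v$ excluded from $Q$, so those two terms cancel, yielding $\abs{N_G(Q_0)} \le (\abs{Q} - \abs{Q_0}) + \abs{C} + \abs{D} \le 2f$, a contradiction. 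I expect verifying this cancellation, together with the sub-case $\abs{Q} \le t$ (where $\abs{Q} - \abs{Q_0} = 0$ and one instead uses $\abs{C} + \abs{D} \le 2f$ directly via $\abs{D} \le \abs{F \cup T} \le f$), to be the main obstacle, since it is the one place where the interplay between the candidate equivocating set $T$ and condition (iii) must be balanced precisely.
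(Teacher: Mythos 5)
Your proposal is correct, and while your Cases 1 and 3 coincide with the paper's argument (same selection of $f+1$ endpoints $A'_v$ containing $A_v \cap F$, same deletion of $B'_v \cup T$, and equivalent floor arithmetic establishing residual connectivity $f+1$), your treatment of Cases 2 and 4 takes a genuinely different route. The paper first pads $(F \cap B_v) - v$ to a \emph{maximal} set $F' \subseteq B_v - v$ with $\abs{F'} \le \varphi$, which forces the residual set $B'_v = B_v - F'$ to have size at most $t$; it then builds an auxiliary graph $H$ from $G - T - F'$ by deleting edges inside $A_v$ and attaching a root $r$ adjacent to all of $A_v$, applies two-vertex Menger between $r$ and $v$, and invokes condition (iii) of Theorem \ref{theorem hybrid} on the component $S \subseteq B'_v$ containing $v$, whose size is automatically at most $t$ and whose neighborhood lies in $C \cup T \cup F'$, of size at most $f + \abs{T} + \varphi = 2f$. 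You instead work directly in $G - D$ with $D = (F \cup T) \setminus (A_v \cup \set{v})$ via the fan version of Menger; your component $Q$ can be as large as $f$, so condition (iii) cannot be applied to it wholesale --- your substitute is to shrink to a subset $Q_0 \ni v$ with $\abs{Q_0} = \min(\abs{Q}, t)$ and absorb $Q \setminus Q_0$ into the neighborhood bound, where the count closes because $Q$ and $(F \cap B_v) - \set{v}$ are disjoint subsets of $B_v$, giving $\abs{Q} + \abs{D} \le \abs{B_v} + \abs{T} \le f + t$, after which the $t$ cancels against $\abs{Q_0} = t$ (and the sub-case $\abs{Q} \le t$ uses $\abs{C} + \abs{D} \le 2f$ directly). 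Both arguments are sound: the paper's padding makes condition (iii) applicable to the entire small side, at the cost of the maximality construction (with its own two-case analysis) and the auxiliary graph; yours avoids any auxiliary construction, at the cost of the subset trick and a more delicate cancellation. One small imprecision: your claim that $(F \cap N_v) - \set{v}$ is ``exactly the portion of $N_v$ excluded from $Q$'' overstates matters --- equality need not hold, since nodes of $N_v$ lying in $C$, or simply disconnected from $v$, are also outside $Q$ --- but your argument only uses the inequality $\abs{Q} + \abs{(F \cap N_v) - \set{v}} \le \abs{N_v}$, which follows from disjointness and is all that is needed.
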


\begin{proof}
	Fix a phase in the algorithm and the corresponding sets $F$ and $T$.
	Let $\varphi = f - \abs{T}$ as in the algorithm.
	Consider an arbitrary non-faulty node $v$ such that $v \in B_v$ in step (c).
	There are 4 cases to consider, corresponding to the 4 cases in step (c).
	Note that neither $A_v$ nor $B_v$ contain any node from $T$.
	\begin{enumerate}
		[label=Case \arabic*:,topsep=0pt,labelindent=0pt]
		\item $\abs{Z_v \cap F} \le \floor{\varphi/2}$ and $\abs{N_v} > f$.
		Then $A_v := N_v$ and $B_v := Z_v$.
		Therefore there exist at least $f+1$ nodes in $A_v$.
		Node $v$ selects $f+1$ nodes $A'_v$ from $A_v$ by choosing all nodes from $A_v \cap F$ and the rest arbitrarily from $A_v - F$.
        Define $B'_v = B_v \cap (F - v)$.
        Now, $\abs{B'_v} \le \abs{B_v \cap F} = \abs{Z_v \cap F} \le \floor{\varphi/2}$ by assumption of case 1.
		Since $G$ is $(\floor{3(f-t)/2}+2t+1)$-connected, the vertex connectivity of $G - T - B'_v$ is at least
		\begin{align*}
		    \floor{ \frac{3(f-t)}{2} } + 2t + 1 - \abs{T} - \floor{ \frac{\varphi}{2}}
		        &=  f - t + \floor{ \frac{f-t}{2} } + 2t + 1 - \abs{T} - \floor{ \frac{ f - \abs{T} }{2}}    \\
		        &=  1 + t + \floor{ \frac{f-t}{2} } + f - \abs{T} - \floor{ \frac{ f - \abs{T} }{2}}    \\
		        &=  1 + t + \floor{ \frac{f-t}{2} } + \ceil{ \frac{ f - \abs{T} }{2}}    \\
		        &\ge    1 + t + \floor{ \frac{f-t}{2} } + \ceil{ \frac{ f - t }{2}}    \\
		        &=  f - t + t + 1   \\
		        &=  f + 1.
		\end{align*}
		So there exist $f+1$ node-disjoint $A'_v v$-paths in $G-T-B'_v$.
		Furthermore, since all the nodes in $A_v \cap F$ are endpoints in these paths and $F = (A_v \cap F) \cup (B_v \cap F)$, we have that these paths exclude\footnote{Recall that a path is said to exclude $X$ if none of the {\em internal} nodes in the path belong to $X$.} $F \cup T$.
		
		\item $\abs{Z_v \cap F} \le \floor{\varphi/2}$ and $\abs{N_v} \le f$.
		Then $A_v := Z_v$ and $B_v := N_v$.
		Recall that $0 < t \le f$ and $\abs{F} \le \varphi$.
		Let $F'$ be a maximal set such that
		\begin{enumerate}[label=(\roman*)]
			\item $F' \subseteq B_v - v$,
			\item $F' \supseteq (F \cap B_v) - v$, and
			\item $\abs{F'} \le \varphi$,
		\end{enumerate}
		i.e., there is no set $F'' \supseteq F'$ such that
		\begin{enumerate}[label=(\roman*)]
			\item $F'' \subseteq B_v - v$,
			\item $F'' \supseteq (F \cap B_v) - v$, and
			\item $\abs{F''} \le \varphi$.
		\end{enumerate}
		Note that $F'$ exists because $\abs{F} \le \varphi$.
		Let $B'_v = B_v - F'$.
		We first show that $\abs{B'_v} \le t$.
		There are two cases to consider.
		In the first case, $\abs{F'} < \varphi$.
		Then $v$ is the only node in $B'_v = B_v - F'$ as follows.
		Suppose there is another node $u \in B'_v = B_v - F'$.
		Then $F' \cup \set{u} \subseteq B_v - v$ and $F' \cup \set{u} \supseteq F' \supseteq (F \cap B_v) - v$.
		Moreover, $\abs{F' \cup \set{u}} = \abs{F'} + 1 \le \varphi$, a contradiction since $F'$ is maximal.
		So we have that $\abs{B_v'} = 1 \le t$.
		In the second case, $\abs{F'} = \varphi$, and we have that $\abs{B_v'} = \abs{B_v} - \varphi = \abs{N_v} - \varphi \le f - \varphi = \abs{T} \le t$.
		
		We construct a graph $H$ from $G - T - F'$ as follows.
		Remove all edges $uw$ such that both $u, w \in A_v$.
		Add a node $r$ with edges to all nodes in $A_v$.
		We first show that $H$ has $f+1$ node-disjoint $rv$-paths.
		So suppose, for the sake of contradiction, that $H$ does not have $f+1$ node-disjoint $rv$-paths.
		Then by Menger's Theorem, there exists an $rv$-cut $C$ of size at most $f$.
		This partitions $H - C$ into $(R, S)$ such that $R$ and $S$ are both non-empty and disconnected with $r \in R$ and $v \in S$.
		Since $r$ has an edge to each node in $A_v$, we have that $S \subseteq B_v'$ and so $\abs{S} \le t$.
		All the neighbors of $S$ in $H$ are contained entirely in $C$\footnote{Recall that neighbors of $S$ are nodes outside of $S$ that have an edge to some node in $S$.}.
		Since no edge incident to any node in $B_v'$ in $G$ was removed when constructing graph $H$, so all the neighbors of $S$ in $G$ are contained entirely in $C \cup T \cup F'$.
		It follows that $S$ has at most $f + \abs{T} + \varphi = f + \abs{T} + f - \abs{T} = 2f$ neighbors in $G$, a contradiction with the condition (iii) in Theorem \ref{theorem hybrid}.
		
		We now show that $f+1$ node-disjoint $rv$-paths in $H$ imply that there are $f+1$ node-disjoint $A_v v$-paths in $G$ that exclude $F \cup T$.
		Observe that an $rv$-path $P$ in $H$ can have more than one node from $A_v$.
		However, since $r$ has an edge to each node in $A_v$, a shorter sub-path $P'$ exists that contains only the last node from $A_v$ in $P$.
		Furthermore, if a path $Q$ and $P$ are node-disjoint, then so are $Q$ and $P'$.
		So there exist $f+1$ node-disjoint $rv$-paths that each contain exactly one node from $A_v$.
		By removing the endpoint $r$ from each of these paths, we get $f+1$ node-disjoint $A_v v$-paths.
		Since each of these paths have exactly one node from $A_v$, this must necessarily be an endpoint.
		So each of these paths exclude $A_v$.
		By construction of $H$, these paths exist in $G - T - F'$ and so exclude $F'$ and $T$ as well.
						
		Now, $A_v \supseteq A_v \cap F$ and $F = (A_v \cap F) \cup (B_v \cap F)$.
		Recall that $F' \supseteq (B_v \cap F) - v$.
		It follows that the $f+1$ node-disjoint $A_v v$-paths above exclude $A_v \cup F' \cup T \supseteq (F - v) \cup T$.
		Since $v$ can only be an endpoint in these paths, we have that these $f+1$ node-disjoint $A_v v$-paths exclude $F \cup T$, as required.
		
		\item $\abs{Z_v \cap F} > \floor{\varphi/2}$ and $\abs{Z_v} > f$.
		Then $A_v := Z_v$ and $B_v := N_v$.
		We have that
		\begin{align*}
		\abs{N_v \cap F}
		&=      \abs{F} - \abs{Z_v \cap F}  \\
		&\le    \varphi - \floor{\varphi/2} - 1 \\
		&\le    \floor{\varphi/2}.
		\end{align*}
		So this case is the same as Case 1 with the roles of $Z_v$ and $N_v$ swapped.
		
		\item $\abs{Z_v \cap F} > \floor{\varphi/2}$ and $\abs{Z_v} \le f$.
		Then $A_v := N_v$ and $B_v := Z_v$.
		From the analysis in Case 3, we have that $\abs{N_v \cap F} \le \floor{\varphi/2}$.
		So this case is the same as Case 2 with the roles of $Z_v$ and $N_v$ swapped.
	\end{enumerate}
	So we have that there do exist $f+1$ node-disjoint $A_v v$-paths that exclude $F \cup T$.
\end{proof}

The next lemma states that in any phase, the state of a non-faulty node at the end of the phase equals the state of some non-faulty node at the beginning of the phase.

\begin{lemma} \label{lemma hybrid validity}
	For any non-faulty node $v$, its state $\gamma_v$ at the end of any given phase equals the state of some non-faulty node at the start of that phase.
\end{lemma}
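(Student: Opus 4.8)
The plan is to mirror the proof of Lemma~\ref{lemma validity} almost verbatim, substituting Observation~\ref{observation hybrid fault-free reliable} for Observation~\ref{observation fault-free reliable}; the only genuinely new point is ensuring that the path I extract has a non-equivocating (indeed fully non-faulty) endpoint, since the hybrid observation requires both endpoints to lie in $V - T^*$.

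First I would fix an arbitrary phase with its candidate sets $F$ and $T$, and write $\gamma_u^{\operatorname{start}}$, $\gamma_u^{\operatorname{end}}$ for the state of node $u$ at the beginning and end of the phase. Let $v$ be an arbitrary non-faulty node. If $v$ does not update $\gamma_v$ in step~(c) of this phase, then $\gamma_v^{\operatorname{end}} = \gamma_v^{\operatorname{start}}$ and the claim holds trivially with $u = v$. So I may assume $v \in B_v$ and that $v$ received a common value $\delta$ along $f+1$ node-disjoint $A_v v$-paths that exclude $F \cup T$ in step~(a); these paths exist by Lemma~\ref{lemma hybrid propagates}.

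The key step is a counting argument over these $f+1$ node-disjoint paths. Because the paths are node-disjoint, they share only the endpoint $v$, so every node other than $v$ lies on at most one of them. Since $v$ is non-faulty, each actual faulty node (there are at most $\abs{F^*} + \abs{T^*} \le f$ of them, as $F^*$ and $T^*$ are disjoint) can appear on at most one path, either as an internal node or as the $A_v$-endpoint. Hence at most $f$ of the $f+1$ paths contain any faulty node, and at least one path $P_{uv}$ is fault-free (no internal node in $F^* \cup T^*$) and also has its $A_v$-endpoint $u$ non-faulty. In particular $u, v \in V - T^*$.

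Finally, I apply Observation~\ref{observation hybrid fault-free reliable} to the fault-free path $P_{uv}$: since $u, v \in V - T^*$ and $v$ received $\delta$ from $u$ along a fault-free path, $u$ must have broadcast $\delta = \gamma_u^{\operatorname{start}}$ during flooding, so $\gamma_v^{\operatorname{end}} = \delta = \gamma_u^{\operatorname{start}}$ with $u$ non-faulty, as required. I expect the only delicate point to be the counting argument's reliance on node-disjointness to simultaneously rule out faulty internal nodes \emph{and} a faulty endpoint on the chosen path; everything else is structurally identical to the non-hybrid validity proof. Note also that the argument is insensitive to whether the phase's candidate sets $F, T$ coincide with the actual faulty sets $F^*, T^*$, since only the total bound $\abs{F^*} + \abs{T^*} \le f$ and the node-disjointness of the paths are used.
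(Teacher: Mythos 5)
Your proposal is correct and follows essentially the same route as the paper's proof: the paper likewise reduces to the case where $v$ updates in step~(c), argues that among the $f+1$ node-disjoint $A_v v$-paths at least one is fault-free with a non-faulty endpoint $u$ (your counting argument just makes this one-line claim explicit), and then invokes Observation~\ref{observation hybrid fault-free reliable} to conclude $\gamma_v^{\operatorname{end}} = \gamma_u^{\operatorname{start}}$. Your added care about both endpoints lying in $V - T^*$ is a correct and worthwhile elaboration, but it is not a departure from the paper's argument.
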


\begin{proof}
	Fix a phase in the algorithm and the corresponding sets $F$ and $T$.
	For any node $u$, we denote the state at the beginning of the phase by $\gamma_u^{\operatorname{start}}$ and the state at the end of the phase by $\gamma_u^{\operatorname{end}}$.
	Consider an arbitrary non-faulty node $v$ and the sets $A_v$ and $B_v$ in step (c).
	If $\gamma_v^{\operatorname{start}} = \gamma_v^{\operatorname{end}}$, then the claim is trivially true.
	So suppose $v \in B_v$ and $v$ receives identical value along $f+1$ node-disjoint $A_v v$-paths in step (a).
	Since the number of faulty nodes is at most $f$, at least one of these paths is both fault-free and has a non-faulty endpoint (other than $v$), say $u$.
	By Observation \ref{observation hybrid fault-free reliable}, it follows that whatever value is received by $v$ along this path in step (a) is the value flooded by $u$.
	Therefore, $\gamma_v^{\operatorname{end}} = \gamma_u^{\operatorname{start}}$, where $u$ is a non-faulty node, as required.
\end{proof}

Next, we show that when the sets $F$ and $T$ are properly selected, all non-faulty nodes reach agreement in that phase.

\begin{lemma} \label{lemma hybrid agreement}
	Consider the unique phase of the Algorithm \ref{algorithm hybrid consensus} where $F = F^*$ and $T = T^*$.
	At the end of this phase, every pair of non-faulty nodes $u, v \in V$ have identical state, i.e., $\gamma_u = \gamma_v$.
\end{lemma}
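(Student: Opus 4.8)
The plan is to mirror the proof of Lemma~\ref{lemma agreement} in Section~\ref{section correctness}, adapting each step to account for the equivocating set $T^* = T$. The goal is to show that in the distinguished phase where $F = F^*$ and $T = T^*$, all non-faulty nodes agree. First I would define $Z$ and $N$ as the global sets of nodes (restricted to $V - T$) that respectively flooded $0$ and $1$ in step (a). The key first claim is that \emph{every} non-faulty node $v$ computes $Z_v = Z$ and $N_v = N$ exactly. This is where the properties of the phase selection matter: since $F = F^*$ and $T = T^*$, the paths $P_{uv}$ chosen in step (b) exclude $F \cup T$, hence exclude all non-equivocating faulty nodes and all equivocating faulty nodes, so they are genuinely fault-free. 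By Observation~\ref{observation hybrid fault-free reliable}, each such node $u \in V - T$ delivers its unique flooded value identically to $v$ along $P_{uv}$, so $v$ classifies $u$ correctly.

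\textbf{Consequence of the first claim.} Once $Z_u = Z_v = Z$ and $N_u = N_v = N$ for all non-faulty $u, v$, it follows that these nodes evaluate the \emph{same} case in step (c), and therefore agree on the sets $A := A_u = A_v$ and $B := B_u = B_v$. I would then argue $A$ is non-empty: if $B = \emptyset$ then $A = V - T$, which is non-empty since $|T| \le t \le f < n$; if $B \neq \emptyset$, pick any $w \in B$, and Lemma~\ref{lemma hybrid propagates} gives $f+1$ node-disjoint $Aw$-paths, forcing $|A| \ge f+1$. Crucially, all nodes in $A$ flooded an identical value, say $\alpha$; this is because $A$ is one of $Z$ or $N$ by construction, and those are precisely the sets of nodes flooding $0$ and $1$.

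\textbf{Finishing the agreement argument.} With the common value $\alpha$ in hand, I would split on whether a given non-faulty node $u$ lies in $A$ or $B$. If $u \in A$, then $u$'s own flooded value was $\alpha$, so $\gamma_u^{\operatorname{start}} = \alpha$, and the update rule in step (c) only fires when $u \in B_u = B$, so $u$'s state is unchanged, giving $\gamma_u^{\operatorname{end}} = \alpha$. If $u \in B$, then by Lemma~\ref{lemma hybrid propagates} there exist $f+1$ node-disjoint $A_u u$-paths that exclude $F \cup T$; since $F \cup T$ contains all faulty nodes in this phase, these paths are all fault-free. By Observation~\ref{observation hybrid fault-free reliable}, $u$ receives $\alpha$ identically along all $f+1$ of them, triggering the update $\gamma_u := \alpha$. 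Either way $\gamma_u^{\operatorname{end}} = \alpha$, and since this holds for every non-faulty node, all non-faulty nodes end the phase in the identical state $\alpha$, proving the lemma.

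\textbf{The main obstacle} I expect is justifying that the paths used in step~(c) are truly fault-free and hence deliver $\alpha$ reliably. Unlike the pure local-broadcast setting of Lemma~\ref{lemma agreement}, here $F \cup T$ must capture \emph{both} the non-equivocating faulty set $F^*$ and the equivocating faulty set $T^*$; the reliability guarantee of Observation~\ref{observation hybrid fault-free reliable} only applies to nodes in $V - T^*$ and along fault-free paths, so I must be careful that excluding $F \cup T$ (not merely $F$) is exactly what rules out equivocation on these paths. The remaining steps are essentially bookkeeping that parallels the original proof, so this reliability point is the one place where the hybrid structure genuinely enters.
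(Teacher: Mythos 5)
Your proposal is correct and follows essentially the same route as the paper's proof: establish $Z_v = Z$ and $N_v = N$ for every non-faulty $v$ via the fault-free step-(b) paths and Observation \ref{observation hybrid fault-free reliable}, deduce common sets $A$ and $B$, show $A \neq \emptyset$ using Lemma \ref{lemma hybrid propagates}, and then split on $u \in A$ versus $u \in B$ to conclude $\gamma_u = \alpha$ in both cases. The only point the paper makes explicit that you leave implicit is that, since $T = T^*$ contains only (equivocating) faulty nodes, every non-faulty node indeed lies in $A \cup B$, so your case split is exhaustive.
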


\begin{proof}
	Consider the unique phase of the algorithm where $F = F^*$ and $T = T^*$.
	By choice of $T$, all equivocating faulty nodes are in $T$ so each node in $V - T$ floods one value, and one value only, in step (a) of the phase.
	Let $Z \subseteq V - T$ be the set of nodes that flooded $0$ in step (a) and let $N \subseteq V - T$ be the set of nodes that flooded $1$ in step (a).
	We first show that for any non-faulty node $v$, $Z_v = Z$ and $N_v = N$.
	Consider an arbitrary node $w \in V - T$ that flooded $0$ (resp. $1$) in step (a) of this phase.
	Note that $w \in Z$ (resp. $w \in N$).
	Observe that $P_{wv}$ identified in step (b) of the phase excludes $T \cup F$ and is fault-free.
	Therefore, by Observation \ref{observation hybrid fault-free reliable}, $v$ receives $0$ (resp. $1$) along $P_{wv}$ and correctly puts $w$ in the set $Z_v$ (resp. $N_v$), as required.
	
	It follows that for any two non-faulty nodes $u$ and $v$, we have that $Z_u = Z_v = Z$ and $N_u = N_v = N$.
	Thus $A_u = A_v$ and $B_u = B_v$.
	Let $A := A_u$ and $B := B_u$.
	Since $T = T^*$ has no non-faulty node, a non-faulty node is either in $A$ or in $B$.
	First note that $A$ is always non-empty as follows.
	If $B$ is empty, then $A = V - T$ is non-empty since $n > f$.
	If $B$ is non-empty, then let $w \in B$ be a node in $B$.
	By Lemma \ref{lemma hybrid propagates}, there exist $f + 1$ node-disjoint $Aw$-paths, which implies that $\abs{A} \ge f + 1$.
	Now all nodes in $A$ flooded identical value in step (a), say $\alpha$.
	If $u \in A$, then $u$'s state is $\alpha$ at the beginning of the phase and stays unchanged in step (c).
	Therefore, at the end of the phase $\gamma_u = \alpha$.
	If $u \in B$, then observe that the $f+1$ node-disjoint $A u$-paths identified by $u$ in step (c) are all fault-free.
	By Observation \ref{observation hybrid fault-free reliable}, it follows that $u$ receives $\alpha$ identically along these $f+1$ paths and so, at the end of the phase, $\gamma_u = \alpha$.
	Similarly for $v$, we have that $\gamma_v = \alpha$, as required.
\end{proof}

We now have all the necessary ingredients to prove the sufficiency portion of Theorem \ref{theorem hybrid}.

\begin{proof_of}{Theorem \ref{theorem hybrid}}
    The necessity of the conditions in Theorem \ref{theorem hybrid} were proven in Appendix \ref{section proofs hybrid necessity}.
    For sufficiency, we prove the correctness of Algorithm \ref{algorithm hybrid consensus}.
    
    The algorithm terminates in finite time because the number of phases is finite and flooding in each phase completes in finite time.
	Thus, the algorithm satisfies the \emph{termination} condition.
	
	Since there are at most $f$ faulty nodes with at most $t$ equivocating faulty nodes in any given execution, there exists at least one phase in which sets $F = F^*$ and $T = T^*$.
	Then, from Lemma \ref{lemma hybrid agreement}, we have that all non-faulty nodes have the same state at the end of this phase.
	Lemma \ref{lemma hybrid validity} implies that the state of the non-faulty nodes will remain unchanged after any subsequent phases.
	Therefore, all non-faulty nodes will have the same state at the end of the algorithm, and their output will be identical.
	This proves that the algorithm satisfies the \emph{agreement} condition.
	
	At the start of phase 1, the state of each non-faulty node equals its own input.
	Now, applying Lemma \ref{lemma hybrid validity} inductively implies that the state of a non-faulty node always equals the {\em input} of some non-faulty node.
	This, in turn, implies that the algorithm satisfies the {\em validity} condition.
	Thus, we have proved correctness of Algorithm \ref{algorithm hybrid consensus} under the conditions stated in Theorem \ref{theorem hybrid}.
\end{proof_of}

\end{document}